\declaretheorem[style=definition]{example}
\DeclareMathOperator*{\argmax}{arg\,max}
\DeclareMathOperator{\sgn}{sgn}
\newtheorem{assm}{Assumption}
\newtheorem{lem}{Lemma}
\newtheorem{prop}{Proposition}
\newtheorem{thm}{Theorem}
\newtheorem{defn}{Definition}
\newtheorem{obs}{Observation}
\newtheorem{assmO}{Assumption}
\newtheorem{lemO}{Lemma}
\newtheorem{propO}{Proposition}
\newtheorem{defnO}{Definition}
\newtheorem{obsO}{Observation}
\newtheorem{thmO}{Theorem}
\begin{document}
\renewcommand{\thefootnote}{\fnsymbol{footnote}}
\renewcommand\thmcontinues[1]{Continued}
\title{The Politics of Personalized News Aggregation}
\author{
Lin Hu\footnote{Research School of Finance, Actuarial Studies and Statistics, Australian National University, lin.hu@anu.edu.au. }
\and Anqi Li\footnote{Department of Economics, Virginia Polytechnic and State University, angellianqi@gmail.com.}
\and Ilya Segal\footnote{Department of Economics, Stanford University, isegal@stanford.edu. 
}}
\date{This Draft: November 2022}
\maketitle
 
 \thispagestyle{empty}

\begin{abstract}
We study how personalized news aggregation for rationally inattentive voters (NARI) affects policy polarization  and public opinion. In a two-candidate electoral competition model, an attention-maximizing infomediary aggregates source data about candidates' valence into easy-to-digest news. Voters decide whether to consume news, trading off the expected gain from improved expressive voting against the attention cost. NARI generates policy polarization even if candidates are office-motivated. Personalized news aggregation makes extreme voters the disciplining entity of policy polarization, and the skewness of their signals is crucial for sustaining a high degree of policy polarization in equilibrium. Analysis of disciplining voters yields insights into the equilibrium and welfare consequences of regulating infomediaries. 

\bigskip

\noindent Keywords: rational inattention, personalized news aggregation, electoral competition, policy polarization, public opinion

\bigskip

\noindent JEL codes: D72, D80

\bigskip

\bigskip

\bigskip

\bigskip

\bigskip

\bigskip

\bigskip

\bigskip

\end{abstract} 
\renewcommand{\thefootnote}{\arabic{footnote}}
\pagebreak

\clearpage
\pagenumbering{arabic} 

\section{Introduction}\label{sec_introduction}
Recently, the idea that tech-enabled news  personalization could affect polarization has been put forward in the academia and popular press \citep{sunstein, filterbubble, gentzkow}. This paper studies how personalized news aggregation for rationally inattentive voters affects the polarization of policies and public opinion in an electoral competition model. 

Our premise is that rational demand for news aggregation in the digital era is driven by information processing costs. As the Internet and social media become important sources of information, hosting more data (2.5 quintillion bytes) than what any individual can process in a lifetime, consumers must turn to infomediaries for content aggregation, customized based on their personal data such as  demographic and psychographic attributes, digital footprints, and social network positions.\footnote{In computing, an \emph{aggregator} is a client software or a web application that aggregates syndicated web content such as online newspapers, blogs, podcasts, and vlogs in one location for easy viewing. Prominent examples of aggregators include aggregator sites, social media feeds, and mobile news apps. They have recently gained prominence as more people get news online, from social media, and through mobile devices \citep{matsa}. The top three popular news websites in 2019: Yahoo! News, Google News, and Huffington Post, are all aggregators. 

Companies that run aggregators are called  \emph{infomediaries}. They operate by sifting through a myriad of online sources and displaying \emph{snippets} (headline+excerpt) on their platforms.  Snippets contain coarse information and do not always generate click-throughs of the original content   \citep{dellarocasetal}. A major revenue source for infomediaries comes from displaying ads to users while the latter are scrolling down the snippets. See  \cite{atheyetal} for background reviews.\label{footnote_background}} In this paper, we abstract from the issue of data 
generation (e.g., original reporting), focusing instead on the role of infomediaries in aggregating source data into news that is easy to process and useful for the target audience.

We develop a model of news aggregation for rationally  inattentive consumers (NARI), in which an infomediary can flexibly aggregate source data into news using algorithm-driven systems. 
While flexibility is also assumed in the Rational Inattention (RI) model pioneered by \cite{sims} and \cite{sims1}, there decision-makers can aggregate
information optimally themselves and so have no need for 
external aggregators. To model the demand for infomediaries, we
assume that consumers can only choose whether to absorb the information offered to them but cannot digest information partially or selectively, let alone aggregate information optimally themselves.  While this assumption is certainly stylized, it is the simplest one that creates a role for infomediaries while capturing important facets of reality.\footnote{As pointed out by \cite{strombergsurvey}, the last assumption is implicitly made by the media literature, because without it the role of information provider would be much more limited. It isn't at odds with reality, since analyses of page activities (e.g., scrolling, viewport time) have established significant user attention in the reading of (snippets of) online news \citep{dellarocasetal, lagun}.}

If choosing to consume news, a consumer incurs an attention cost that is \emph{posterior separable} \citep{caplindean13} while deriving utilities from improved decision-making. Consuming news is optimal if the expected utility gain exceeds the attention cost. As for the infomediary, we assume that its goal is to maximize the total amount of attention paid by consumers, interpreted as the advertising revenue generated from consumer eyeballs. This stylized
assumption captures the key trade-off faced by the infomediary, who uses useful and easy-to-process news to attract consumers' attention while preventing them from tuning out. While we focus
on the case of a monopolistic infomediary in order to capture the market power wielded by tech giants, we also investigate an extension to perfectly competitive infomediaries, which, together with personalization, becomes equivalent to
consumers optimally aggregating information  themselves as in the standard RI model. 
 
We embed the NARI model into an electoral competition game in which two 
office-motivated candidates choose policies on a left-right spectrum.
Voters vote expressively based on policies, as well as an uncertain valence state about which candidate is more fit for office. News about candidate valence is provided by an infomediary, which moves simultaneously with the candidates. We study how NARI affects the polarization of equilibrium policies and voter opinions in this game. 

A consequence of NARI is that signal realizations prescribe \emph{recommendations} as to which candidate one should vote for. Indeed, any information beyond voting recommendations would only raise the attention cost without any corresponding benefit to voters and would thus turn away voters whose participation constraints bind at the optimum. Furthermore, voters must strictly prefer to obey the recommendations given to them, a property we refer to as \emph{strict obedience}. Indeed, if a voter has a (weakly) preferred candidate that is independent of his voting recommendations, then he could always vote for that candidate without paying attention in order to save on the attention cost. 

An important implication of strict obedience is that local deviations from a policy profile wouldn't change voters' voting decisions regardless of the recommendations they receive, suggesting that a positive degree of policy polarization could arise in equilibrium even if candidates are office-motivated. We define \emph{policy polarization} as the maximal distance between candidates' positions among all symmetric perfect Bayesian equilibria. In the baseline model featuring left-leaning, centrist,  and right-leaning voters, our main theorem shows that policy polarization is strictly positive and equals the \emph{disciplining voter's policy latitude}.

A voter's \emph{policy latitude} is an index that captures his resistance to candidates' policy deviations. It decreases with the voter's preference for the deviating candidate's policies and  increases with his pessimism about the latter's valence following unfavorable news. A voter is said to be \emph{disciplining} if his policy latitude determines policy polarization. To illustrate how personalized news aggregation affects the disciplining voter,  we compare two cases: (i) \emph{broadcast news aggregation}, in which the infomediary must offer a single signal to all voters, and (ii) \emph{personalized news aggregation}, in which the infomediary can design different signals for different voters. In the broadcast case, all voters receive the
same voting recommendation, so a candidate's deviation is profitable, i.e., strictly increases his winning probability, if and only if it attracts a majority coalition. Under the usual assumptions, this is equivalent to attracting centrist voters, who are therefore disciplining.  In the personalized case, the infomediary can provide conditionally independent signals to different voters, so each type of voter is pivotal with a positive probability when voters' population distribution is sufficiently dispersed. In that case, a policy deviation is shown to be profitable if and only if it attracts any type of voter, and voters with the smallest policy latitude are disciplining because they are the easiest to attract. 

The skewness of extreme voters' personalized signals is crucial for  sustaining a greater degree of policy polarization as news aggregation becomes personalized. To maximize the usefulness of news  consumption for an extreme voter, the recommendation to 
vote across the party line must be very strong, and, in order to prevent the voter from tuning out, must also be very rare (hereinafter, an \emph{occasional big surprise}). Most of the time, the recommendation is to vote along the party line (hereinafter, a \emph{predisposition reinforcement}), which together with the occasional big surprise has been documented in the empirical literature.\footnote{Recently, \cite{flaxman} find that the use of news aggregators not only reinforces people's predispositions but also  strengthens their opinion intensities when supporting  opposite-party candidates (i.e., occasional big surprise). Evidence for predisposition reinforcement is discussed in  \cite{fiorina} and \cite{gentzkow}. Evidence for occasional big surprise and, more generally, Bayesian voters is surveyed by \cite{dellavignagentzkow}. \label{footnote_bayesian}} When base voters are disciplining, the occasional big surprise of their signal makes them difficult to attract in the rare event where news is unfavorable to their own-party candidate. When base voters are so pessimistic about their own-party candidate's valence that even the most attractive deviation to them is still not attractive enough, opposition voters become disciplining, despite that they, too, are difficult to attract due to their preferences against the deviating candidate's policies.   If, in the end, all voters end up having bigger policy latitudes than the centrist voters in the broadcast case, then the personalization of news aggregation increases policy polarization. The last condition holds when the attention cost parameter is large and extreme voters have strong policy preferences under specific attention cost functions.

Analyses of the disciplining voter yield structural insights into the polarization effect of recent regulatory proposals to tame  tech giants. In addition to the personalization of news aggregation---the reversal of which is a plausible consequence of limiting tech companies' access to users' personal data \citep{gdpr, warren}---we study the consequences of introducing perfect competition to infomediaries. This regulatory proposal is advocated by the British government as a preferable way of regulating tech giants \citep{digitalcompetition}, and it is  mathematically equivalent to increasing voters' attention cost parameter in the monopolistic personalized case. Its policy polarization effect is negative,  because increasing the attention cost parameter tempers voters' beliefs about candidate valence and therefore reduces their policy latitudes. 

Our analysis suggests that factors carrying negative connotations in everyday discourse could have unintended consequences for policy polarization. An example is \emph{increasing mass polarization}, which we model as a mean-preserving spread of voters' policy preferences \citep{fiorina, gentzkow}. Under personalized news aggregation, increasing mass polarization can surprisingly reduce policy polarization rather than increasing it: As we keep redistributing voters' population from the center to the margin, policy polarization will eventually decrease from the centrist voters' policy latitude to the minimal policy latitude among all voters. 

In Online Appendix \ref{sec_general}, we extend the baseline model to encompass general voters and arbitrary correlation structures between their personalized signals. We develop a methodology for analyzing this general model. Among other things, we find  that correlation can only increase policy polarization, and that policy polarization is minimized when signals are conditionally independent across voters and voters' population distribution is uniform across types. Thus our baseline result prescribes the exact lower bound for the policy polarization effect of personalized news aggregation, and factors that preserve this lower bound (e.g., enrich voters' types, divide voters of the same type into multiple subgroups) wouldn't render  policy polarization trivial.

In what follows, we introduce the baseline model in Section \ref{sec_model}, conduct equilibrium analysis in Sections \ref{sec_analysis}, and report extensions of the baseline model in Section \ref{sec_extension}. We discuss the related literature throughout, but mainly in Section \ref{sec_literature}, followed by concluding remarks and discussions of future research in Section  \ref{sec_conclusion}. Additional materials and mathematical proofs can be found in the appendices. 
 
\section{Baseline model}\label{sec_model}
In this section, we first describe the model setup and then discuss the main assumptions. 

\subsection{Setup}\label{sec_setup}
Two \emph{office-motivated} candidates named $L$ and $R$ can adopt the policies on the real line. They face a unit mass of infinitesimal voters who are either \emph{left-leaning} ($k=-1$), \emph{centrist} ($k=0$), or \emph{right-leaning} ($k=1$). Each type $k \in \mathcal{K}=\left\{-1,0,1\right\}$ of voter has a  population $q\left(k\right)>0$ and values a policy $a \in \mathbb{R}$ by $u\left(a,k\right)=-|t\left(k\right)-a|$. The environment is symmetric, in that $q(1)=q(-1)$ and $t(1)>t(0)=0>t(-1)=-t(1)$.  Thus  a centrist voter is also a median voter. 

At the end of the day, the society holds an election, in which the majority winner wins the election, and ties are broken evenly between the two candidates. During the election, each voter must vote \emph{expressively} for one of the candidates. For any given profile ${\bf{a}}= (a_L, a_R) \in \mathbb{R}^2$ of policy positions, a type $k$ voter earns the following utility difference from voting for candidate $R$ rather than $L$: 
\[v\left({\bf{a}}, k\right)+\omega. \]
In the above expression, 
\[v\left({\bf{a}}, k\right)= u\left(a_{R}, k\right)-u\left(a_{L}, k\right)\] is the voter's differential valuation of candidates' policies, whereas $\omega$ is an uncertain \emph{valence state} about which candidate is more fit for office given the circumstances.\footnote{E.g., in the ongoing debate about how to battle terrorism, $\omega=-1$ if the state favors the use of soft power (e.g., diplomatic tactics), and $\omega=1$ if the state favors the use of hard power (e.g., military preemption). Candidates $L$ and $R$ are experienced with using soft and hard power,  respectively, and whoever is more experienced with handling the circumstances has an advantage over his opponent. \label{fn_terrorism}}  In the baseline model, $\omega$ takes the values in $\Omega=\left\{-1,1\right\}$ with equal probability, so its prior mean equals zero.  

When casting votes, voters observe candidates' policies but not directly the realization of the valence state. News about the latter is modeled as a \emph{finite signal structure} (or simply a \emph{signal}) $\Pi:\Omega \rightarrow \Delta\left(\mathcal{Z}\right)$, where $\mathcal{Z}$ is a finite set of signal realizations with $|\mathcal{Z}| \geq 2$, and each $\Pi\left(\cdot \mid \omega\right)$ specifies a probability distribution over $\mathcal{Z}$ conditional on the state being $\omega \in \Omega$.  News is provided by a \emph{monopolistic} infomediary who is  equipped with a \emph{segmentation technology} $\mathcal{S}$. $\mathcal{S}$ is a partition of voters' types, and each cell of it is called a \emph{market segment}. The infomediary can distinguish between voters belonging to different market segments but not those within the same market segment. Our focus is on the coarsest and finest partitions named the \emph{broadcast technology}  $b=\left\{\mathcal{K}\right\}$ and \emph{personalized technology} $p=\left\{\left\{k\right\}: k \in \mathcal{K}\right\}$, respectively: The former cannot distinguish between the various types of the voters at all, whereas the latter can do so perfectly. 

Under segmentation technology $\mathcal{S} \in \left\{b, p\right\}$, the infomediary designs $|\mathcal{S}|$ signals, one for each market segment. Within each market segment, voters decide whether to consume the signal that is offered to them.   Consuming a signal $\Pi$ means fully absorbing its information content. Doing so incurs an \emph{attention cost} $\lambda  I\left(\Pi\right)$, where $\lambda>0$ is the \emph{attention cost parameter}, and $I\left(\Pi\right)$ is the needed \emph{amount of attention} for absorbing the information content of $\Pi$. After that, voters observe signal realizations,  update their beliefs about the valence state, and cast votes. The infomediary's profit equals the total amount of attention paid by voters.

The game sequence is summarized as follows. 
\begin{enumerate}
\item The infomediary designs signal structures; voters observe the signals structures offered to them and make consumption decisions. 
\item Candidates choose policies without observing the moves in Stage 1. 
\item The state is realized.
\item Voters observe policies and signal realizations before casting votes.
\end{enumerate}
 
Our solution concept is \emph{pure strategy perfect Bayesian equilibrium} (PSPBE), or \emph{equilibrium} for short. Our goal is to characterize all \emph{symmetric} PSPBEs, where the policy profiles proposed by the candidates take the form of $(-a,a)$ with $a \geq 0$.

\subsection{Model discussion}\label{sec_modeldiscussion}
\paragraph{Attention cost} To state our assumptions about the attention cost function, recall that a signal structure $\Pi:\Omega\rightarrow \Delta\left(\mathcal{Z}\right)$ specifies how source data about the state are (randomly) aggregated into the content indexed by the signal realizations in $\mathcal{Z}$. For each $z \in \mathcal{Z}$, let 
\[\pi_z=\sum_{\omega \in \Omega} \Pi\left(z \mid \omega\right)/2\]
denote the probability that the signal realization is $z$, and assume without loss of generality (w.l.o.g.) that $\pi_z>0$. Then \[\mu_z=\sum_{\omega \in \Omega}\omega \Pi\left(z \mid \omega\right)/\left(2\pi_z\right)\]
is the posterior mean of the state conditional on the signal realization being  $z$, and it fully captures one's posterior belief after observing $z$. 

\begin{assm}\label{assm_attention}
The needed amount of attention for consuming $\Pi: \Omega  \rightarrow \Delta\left(\mathcal{Z}\right)$ is
\begin{equation}\label{eqn_attentionfunction}
I \left(\Pi\right)= \sum_{z\in \mathcal{Z}} \pi_z  h\left(\mu_z\right),
\end{equation}
where $h: \left[-1,1
\right]\rightarrow \mathbb{R}_+$ (i) is strictly convex and satisfies $h\left(0\right)=0$; (ii) is continuous on $\left[-1,1\right]$ and twice differentiable on $\left(-1,1\right)$; and (iii) is symmetric around zero. 
 \end{assm}
 
Equation (\ref{eqn_attentionfunction}) coupled with Assumption \ref{assm_attention}(i) is equivalent to \emph{weak posterior separability} (WPS)---a notion proposed by \cite{caplindean13} to generalize Shannon's entropy as a measure of attention cost. For a review of the theoretical and empirical foundations for WPS, see Section \ref{sec_literature}. In the current setting, WPS  stipulates that consuming a null signal requires no attention, and that more attention is needed for moving one's posterior belief closer to the true state and as the signal becomes more Blackwell-informative. The high-level idea is that attention is a scarce resource that reduces one's uncertainty about the underlying state.  

Parts (ii) and (iii) of Assumption \ref{assm_attention} are made for technical reasons.  While they are by no means innocuous, they are commonly seen in the (applied) RI literature.  Examples that satisfy all three assumptions include (i) mutual information ($h(\mu)=\text{binary entropy function}\left(\left(1+\mu\right)/2\right)$),  which measures the reduction in the  Shannon entropy of the state before and after news  consumption; as well as (ii) variance reduction, for which $h(\mu)=\mu^2$. 
 
\paragraph{Model assumptions}  Our model assumptions are centered around four noteworthy facts about infomediaries' business model  (see Footnote \ref{footnote_background} for full details).

\begin{enumerate}
\item The content provided by infomediaries is usually very coarse, taking the form of snippets that consist of a title and a few summary sentences. 
\item A major source of infomediary's revenue comes from displaying ads to users while the latter are browsing through snippets. In the case of Facebook News Feed, every few snippets are followed by an ad, followed by a few more snippets, etc., so that users can absorb ads and content together seamlessly. 

\item According to pundits working and consulting in the tech sector, modern news aggregators are operated by tech giants that wield significant market power. See also \cite{fanta2018publisher} for a report on how big players like Google News have been  reshaping the news landscape. 

\item The algorithms behind their operations represent trade secrets that cannot be easily reverse-engineered by third parties.  According to computer scientists working on human-centered computing,  a common way to recover these algorithms is to survey users,  who have proven effective in detecting unusual changes in their algorithms in recent cases \citep{devos2022toward}.
\end{enumerate}

In light of  Facts 1-3, we assume that a monopolistic infomediary maximizes the total amount of attention paid by voters,\footnote{One can think of a snippet as a small piece of information encountered by a decision-maker before he stops news consumption.  \cite{shannon1948mathematical},  \cite{hebertwoodford}, and \cite{morrisstrack} provide general conditions under which the expected number of consumed snippets is the posterior-separable attention cost that is needed for implementing a signal structure. } and note that our analysis remains unaffected as long as the infomediary's profit is a strictly increasing function of voters' attention.\footnote{To see why, suppose the profit generated by a voter consuming $\Pi$ equals $J\left(I\left(\Pi\right)\right)$ for some $J'>0$. For  any given set of voters whose participation constraints (i.e., to consume rather than to abstain) we wish to satisfy,  the infomediary solves $\max_{\Pi}J\left(I\left(\Pi\right)\right)\cdot$ market share,  or  equivalently  $\max_{\Pi}I\left(\Pi\right)$,  subject to voters' participation constraints. In the case where $J(\Pi)=\alpha I(\Pi)+\beta$ for some $\alpha, \beta>0$, $\beta$ captures the part of the revenue that is generated solely by the market share. } Online Appendix \ref{sec_competitive} examines the case of perfectly competitive infomediaries. 

Regarding the game sequence, we assume, based on Fact 4,  that candidates cannot condition policies on signal structures,\footnote{Allowing candidates to condition policies on the valence state (i.e., interchange stages 2 and 3 of the game) wouldn't affect the PSPBEs of the game. In one direction, one can show, through replicating our proofs step by step, that any PSPBE of the current game remains a PSPBE even if candidates can condition policies on the valence state. The the opposite direction is also clear: In any PSPBE of the augmented game,  a candidate must adopt the same policy in both states, because otherwise he will reveal the true state and lose the election for sure when the state is unfavorable to him.} whereas voters observe the signal structures they choose to consume.  Policies are made observable to  voters at the voting stage---a process that we do not explicitly model (e.g., political advertising, canvassing), but note that it typically takes place right before the election day due legal or practical barriers \citep{gerber}. Given this, as well as the long election cycles in many countries,  it is safe to conclude that in practice, the design and consumption of signals that aggregate evolving 
states of the world into everyday breaking news (e.g., whether a looming terrorism threat can be best countered by the use of soft power or hard power) are often made without observing the final policies.

\section{Analysis}\label{sec_analysis}
This section conducts equilibrium analysis. Specifically, we provide equilibrium characterizations in Sections \ref{sec_news} and \ref{sec_policy}, and investigate comparative statics in Section \ref{sec_cs}.

\subsection{Optimal signals}\label{sec_news}
In this section, we fix \emph{any} symmetric policy profile $\mathbf{a}=(-a,a)$ with $a \geq 0$ and solve for the signals that maximize the infomediary's profit (hereinafter, \emph{optimal signals}). To facilitate discussion, we say that candidate $L$ (resp. $R$) is the \emph{own-party candidate} of left-leaning (resp. right-leaning) voters. 

\paragraph{Infomediary's problem} We first formalize the infomediary's problem. Under segmentation technology $\mathcal{S} \in \left\{b,p\right\}$, any optimal signal for market segment $s \in \mathcal{S}$ solves \[\tag{$s$} \max_{\Pi} I\left(\Pi\right) \mathcal{D}\left(\Pi; {\bf{a}}, s\right) \label{eqn_problems}\]
where $\mathcal{D}\left(\Pi; {\bf{a}}, s\right)$ denotes the demand for signal $\Pi$ in market segment $s$ under policy profile $\bf{a}$. To figure out $\mathcal{D}(\cdot)$, note that since a voter could always 
vote for his own-party candidate without consuming news, news consumption is only useful if it sometimes convinces him to vote across the party line.  After consuming $\Pi$, a voter strictly prefers candidate $R$ to $L$ if $v\left({\bf{a}}, k\right)+\mu_z>0$, and he strictly prefers candidate $L$ to $R$ if $v\left({\bf{a}}, k\right)+\mu_z<0$. Ex ante, the expected utility gain from consuming $\Pi$ is 
\[V\left(\Pi; {\bf{a}}, k\right)=\begin{cases}
\sum_{z \in \mathcal{Z}} \pi_z \left[v\left({\bf{a}},k \right)+\mu_z\right]^{+} & \text{ if } k \leq 0,\\
\sum_{z \in \mathcal{Z}} -\pi_z \left[v\left({\bf{a}},k\right)+\mu_z\right]^{-}& \text{ if } k > 0,
\end{cases}\]
and consuming $\Pi$ is preferable to abstaining (hereinafter, the voter's \emph{participation constraint is satisfied}) if 
\[V\left(\Pi; {\bf{a}}, k\right) \geq \lambda  I(\Pi). \]
Therefore, 
\[\mathcal{D}\left(\Pi; {\bf{a}}, s\right)=\sum_{k \in \mathcal{K}: V\left(\Pi; {\bf{a}}, k\right) \geq \lambda   I\left(\Pi\right)} \text{ population of type $k$ voters in market segment $s$.}\]

After stating the infomediary's problem, we next impose more structures on it to make it amenable to analysis. 
\begin{assm}\label{assm_regularity}
For any policy profile above, (i) \emph{(feasibility)} there exists a signal that strictly satisfies all voters' participation constraints; (ii) the signal that fully reveals the state violates all players' participation constraints; and (iii) it is strictly optimal to include all voters in news consumption in the broadcast case.
\end{assm}

Assumption \ref{assm_regularity} has three parts. Part (i) of the assumption, also referred to as the \emph{feasibility  condition}, says that voters strictly benefit from consuming some nondegenerate signal.\footnote{The feasibility condition helps establish that the infomediary's problems satisfy strong duality and can therefore be solved by the Lagrangian method.   The Lagrangian dual approach has been increasingly applied to the study of information design problems (see, e.g., \citealt{salamanca}). Our proof exploits the unique  structure of the infomediary's problems. }  Part (ii) of the assumption rules out the uninteresting case where the infomediary simply reveals the true state to voters, whereas Part (iii) of it ensures that the optimal broadcast signal differs from any optimal personalized signal. Intuitively, these conditions should hold simultaneously when the attention cost parameter is moderate and voters' policy preferences aren't too extreme, so that it is optimal to include all voters in nontrivial news consumption, although revealing the true state to them would tune them out. 
For the case of quadratic attention cost, we can verify this intuition directly by reducing the assumption to $\lambda>1/2$ and $\lambda t(1)<3\sqrt{2}/4-1\approx .060$ (see Appendix \ref{sec_proof_news} for derivations).  For entropy attention cost, we solve the model numerically in Appendix \ref{sec_numerical} and find similar patterns.

\paragraph{Binary signal and strict obedience} We provide two characterizations of optimal signals. Our first result shows that in both the broadcast case and personalized case, the optimal signal is unique and prescribes binary voting recommendations that its consumers strictly prefers to obey.  To facilitate analysis, we say that a signal realization $z$ \emph{endorses} candidate $R$ and \emph{disapproves of} candidate $L$ if $\mu_z>0$, and that it endorses candidate $L$ and disapproves of candidate $R$ if $\mu_z<0$.  For binary signals, we write $\mathcal{Z}=\left\{L, R\right\}$. From \emph{Bayes' plausibility}, which mandates that the expected posterior mean must equal the prior mean zero:
\begin{equation}\label{eqn_br}
\tag{BP}\sum_{z \in \mathcal{Z}} \pi_z  \mu_z=0, 
\end{equation}
it follows that we can assume, w.l.o.g., that $\mu_L<0<\mu_R$. In this way, we can interpret each signal realization $z \in \left\{L, R\right\}$ as an endorsement for candidate $z$ and a disapproval of candidate $-z$.  In addition, we can define the notion of  \emph{strict obedience} as follows. 
\begin{defn}
A binary signal induces \emph{strict obedience from its consumers} if the latter strictly prefer the endorsed candidate to the disapproved one  under both signal realizations, i.e.,
\begin{equation}\label{eqn_ob}
\tag{SOB} v\left({\bf{a}},k\right)+\mu_L<0<v\left({\bf{a}},k\right)+\mu_R. 
\end{equation}
\end{defn}

The next theorem formalizes the statement made at the beginning of this section.
 
\begin{thm}\label{thm_binary}
Under Assumptions \ref{assm_attention} and  \ref{assm_regularity}(i),  the following hold for any policy profile $(-a,a)$ with $a \geq 0$.
\begin{enumerate}[(i)]
\item The optimal personalized signal for any voter is unique and binary.
\item The optimal broadcast signal is unique and binary.
\item Any optimal signal,  broadcast or personalized, induces strict obedience from its consumers.
\end{enumerate} 
 \end{thm}

The intuition behind the personalized case is easy to understand. For that case, our analysis exploits the binary nature of individual voters' decision problems, as well as the Blackwell-monotonicity of the attention cost function. Under these assumptions, any information beyond decision  recommendations would only raise the attention cost without any corresponding benefit to voters, and would thus turn away voters whose participation constraints bind at the optimum. For these voters, maximizing attention is equivalent to maximizing the usefulness of news  consumption at the maximal attention level. 

The broadcast case is more delicate, as it requires that we aggregate voters with binding participation constraints into a representative voter.  Under the assumption that voters' policy preferences exhibit increasing differences between policies and types, only extreme voters' participation constraints can bind, whereas centrist voters' participation constraint must be slack. The resulting representative voter makes at most three decisions: LL, LR, and RR (the first and second letters stand for the voting decisions of the left-leaning voter and right-leaning voter, respectively), so the optimal signal for him has at most three signal realizations. Then using the concavification method developed by \cite{aumann} and \cite{bayesianpersuasion},  we reduce the number of  signal realizations to two. The analysis exploits the assumption of binary states, as well as the posterior separability of the attention cost function.  

Strict obedience (\ref{eqn_ob}) is an essential feature of optimal binary signals. Intuitively, if a consumer of a binary signal has a (weakly) preferred candidate that is independent of his voting recommendations, then he would prefer to vote for that candidate unconditionally without consuming the signal, because doing so saves on the attention cost without affecting the expected voting utility. But this contradicts the assumption that the voter prefers to consume the signal rather than to abstain. 

\paragraph{Skewness} We next examine the skewness of optimal signals. Since the underlying state is binary, it is w.l.o.g. to identify any binary signal with the corresponding profile $\bm\mu=(\mu_L, \mu_R)$ of posterior means.\footnote{Indeed, one can back out the signal structure from $\bm\mu$ as follows: $\Pi\left(z=R \mid \omega=1\right)=\frac{-\mu_L\left(1+\mu_R\right)}{\mu_R-\mu_L}$ and $\Pi\left(z=R \mid \omega=-1\right)=\frac{-\mu_L\left(1-\mu_R\right)}{\mu_R-\mu_L}$. \label{footnote2}} For any policy profile $(-a,a)$ with $a 
\geq 0$, we shall hereinafter write $(\mu_L^b(a), \mu_R^b(a))$ for the optimal broadcast signal, and  $(\mu_L^{p}(a,k), \mu_R^{p}(a,k))$ for the optimal personalized signal for type $k$ voters. The next observation is useful for stating our result. 
\begin{obs}\label{obs_signal}
\begin{enumerate}[(i)]
\item $\bm\mu$ is more Blackwell-informative than $\bm\mu'$ if $|\mu_z| \geq |\mu_z'|$ $\forall z \in \{L, R\}$, and at least one inequality is strict. 
\item $\bm\mu$ endorses candidate $z \in \{L, R\}$ more often than candidate $-z$, i.e., $\pi_z>\pi_{-z}$, if and only if $|\mu_z|<|\mu_{-z}|$. 
\end{enumerate}
\end{obs}

  \begin{thm}\label{thm_product}
Under Assumptions \ref{assm_attention} and \ref{assm_regularity}, the following hold for any policy profile $(-a,a)$ with $a >0$.
\begin{enumerate}[(i)]
\item The optimal broadcast signal is \emph{symmetric}, in that it endorses each candidate with equal probability, and the endorsements shift voters' beliefs by the same magnitude,   i.e., $|\mu_L^{b}\left(a\right)|=\mu_R^{b}\left(a\right)$.
\item The following happen in the personalized case. 
  \begin{enumerate}[(a)]
  \item The optimal signal for centrist voters is symmetric, i.e., $|\mu_L^{p}\left(a,0\right)|=\mu_R^{p}\left(a,0\right)$. 
  \item The optimal signal for any extreme voter is \emph{skewed}, in that it endorses the voter's own-party candidate more often than his opposite-party candidate, although the endorsement for the opposite-party candidate is stronger than that of the own-party candidate, i.e., $|\mu_L^p(a,-1)|<\mu_R^p(a, -1)$ and $|\mu_L^p(a,1)|>\mu_R^p(a,1)$. Moreover, optimal signals are symmetric between left-leaning and right-leaning voters, $|\mu_L^p\left(a, -1\right)|=\mu_R^p\left(a,1\right)$ and $\mu_R^p(a,-1)=|\mu_L^p(a,1)|$. 
\end{enumerate}
\item The optimal broadcast signal is less Blackwell-informative than the optimal personalized signal for centrist voters, i.e., $|\mu_L^b(a)|<|\mu_L^p(a,0)|$ and $\mu_R^b(a)<\mu_R^p(a,0)$.
\end{enumerate}
 \end{thm}
    
Part (i) of Theorem \ref{thm_product} holds because the broadcast signal is designed for a representative voter with a symmetric policy preference and so must be symmetric. To develop intuition for Part (ii) of the theorem, recall that news consumption is useful for an extreme voter if and only if it sometimes convinces him to vote across the party line. Since the corresponding signal realization must move the posterior mean of the state far away from the prior mean, it must occur with a small probability in order to prevent the attention  cost from being excessive and the voter from tuning out. Hereinafter, we shall refer to this signal realization as an \emph{occasional big surprise}. The flip side of  occasional big surprise is a \emph{predisposition reinforcement}, meaning that most of the time, the signal endorses the voter's own-party candidate, which by Bayes' plausibility can only shift his belief moderately. Evidence for occasional big surprise and predisposition reinforcement after the use of personalized news aggregators has already been discussed in  Footnote \ref{footnote_bayesian}. 

We finally turn to Part (iii) of Theorem \ref{thm_product}. As demonstrated earlier, the optimal broadcast signal is designed for a representative voter with a symmetric policy preference, and yet the decision on whether to consume the signal is made by extreme voters who prefer skewed signals to symmetric ones. Such a mismatch of preferences limits the amount of attention that the optimal broadcast signal can attract from any voter compared to his personalized signal. Symmetry then implies that the optimal broadcast signal is less Blackwell-informative than centrist voters' personalized signal. 

\subsection{Equilibrium policies}\label{sec_policy}
This section endogenizes candidates' policy positions. Under segmentation technology $\mathcal{S} \in \left\{b,p\right\}$, a profile of signals and policies $(-a,a)$ with $a \geq 0$ can arise in a symmetric PSPBE if the following are true.
\begin{itemize}
\item The profile of signals is a $|\mathcal{S}|$-dimensional random variable. The marginal probability distribution of each dimension $s \in \mathcal{S}$ solves Problem (\ref{eqn_problems}), taking $(-a,a)$ as given;

\item The policy position $a$ maximizes candidate $R$'s winning probability, taking candidate $L$'s position $-a$, the profile of signals, voters' consumption decisions,  and their voting strategies (as functions of \emph{actual} policies and signal realizations) as given.
\end{itemize}

We characterize all symmetric PSPBEs of the game. Before proceeding, note that the analysis so far has pinned down the marginal signal distribution for each market segment but has left the joint signal distribution across market segments unspecified, despite that the latter clearly affects candidates' strategic reasoning. In what follows, we first assume that signals are \emph{conditionally independent} across market segments. Later in Section \ref{sec_extension},  we will  consider  \emph{all} joint signal distributions that are \emph{consistent} with the marginal distributions solved in Section \ref{sec_news}.

\paragraph{Key concepts} Fix any segmentation technology $\mathcal{S}\in \{b, p\}$ and voter population distribution $q$. Our first concept concerns how a candidate's unilateral deviation from a symmetric policy profile can affect voters' voting decisions. Due to symmetry, it suffices to consider candidate $R$'s deviation only.

\begin{defn}[label=defn_attract] 
A unilateral deviation of candidate $R$ from a policy profile $(-a,a)$ with $a \geq 0$ to $a'$ \emph{attracts} type $k$ voters if it wins the latter's support even when their signal realization disapproves of candidate $R$, i.e.,  \[v\left(-a,a', k\right)+\mu_L^{\mathcal{S}}\left(a,k\right)> 0.\]
It \emph{repels} type $k$ voters if it loses their support even when their signal realization endorses candidate $R$, i.e., \[v\left(-a,a', k\right)+\mu_R^{\mathcal{S}}\left(a,k\right) < 0.\]
\end{defn}

Note that if $a'$ attracts (resp. repels) a voter, then it makes the voter  vote for (resp. against) candidate $R$ unconditionally. If it neither attracts or repels a voter, then it has no effect on his voting decisions. 

We next construct an index called \emph{policy latitude} and use it to capture a voter's resistance to candidate $R$'s deviations. For starters, recall that $|\mu_z^{b}(a)|$ and $|\mu_z^p(a,k)|$ capture the magnitudes of voters' beliefs about candidate $R$'s valence given signal realization $z \in \{L, R\}$ under policy profile $(-a,a)$. For ease of notation,  write $\upsilon_z^{b}$ for $\left. |\mu_z^b(a)|\right\vert_{a=t(1)}$ and $\upsilon_z^p(k)$ for $\left.|\mu_z^p(a,k)|\right\vert_{a=|t(k)|}$. Intuitively, $\upsilon_L^b$ and $\upsilon_L^p(k)$ capture voters' pessimism about candidate $R$'s valence given unfavorable information. Based on them, we can define policy latitudes as follows. 

\begin{defn}[label=defn_latitude]
Define centrist voters' policy latitude in the broadcast case as $\xi^{b}(0)= \upsilon_L^{b}$, and type $k$ voters' policy latitude in the personalized case as $\xi^{p}(k)= -t(k)+\upsilon_L^p(k)$. 
\end{defn} 

By definition, a voter's policy latitude decreases with his preference for candidate $R$'s policies and increases with his  pessimism about candidate $R$'s valence given unfavorable information. Increasing a voter's policy latitude makes him more resistant to candidate $R$'s deviations. 

We finally describe equilibrium outcomes. Let $\mathcal{E}^{\mathcal{S},q}$ denote the set of the  nonnegative policy $a$'s such that the symmetric policy profile $(-a,a)$ can arise in an equilibrium. We are interested in \emph{policy polarization} $a^{\mathcal{S},q}=\max \mathcal{E}^{\mathcal{S},q}$, defined as the maximal symmetric equilibrium policy, and whether all policies between zero and policy polarization can arise in equilibrium. 

\begin{defn}[label=defn_discipline]
Type $k$ voters are \emph{disciplining} if their policy latitude determines policy polarization, i.e., $a^{\mathcal{S},q}=\xi^{\mathcal{S}}\left(k\right)$.
\end{defn}

\paragraph{Equilibrium characterization} 
The next theorem gives a full characterization of the equilibrium policy set. 

\begin{thm}[label=thm_main]
For any segmentation technology $\mathcal{S} \in \left\{b,p\right\}$ and population distribution $q$, policy polarization is strictly positive, and all policies between zero and policy polarization can arise in equilibrium, i.e., $a^{\mathcal{S}, q}>0$ and  $\mathcal{E}^{\mathcal{S},q}=\left[0, a^{\mathcal{S},q}\right]$. Disciplining voters always exist, and their identities are as follows. 
\begin{enumerate}[(i)]
 \item In the broadcast case, centrist voters are always disciplining, i.e., $a^{b,q}=\xi^b\left(0\right)$ $\forall q$.
 \item In the personalized case, centrist voters are disciplining if they constitute a majority of the population. Otherwise voters with the smallest policy latitude are disciplining, i.e., 
 \[a^{p,q}=\begin{cases}
 \xi^p\left(0\right) & \text{ if } q\left(0\right)>1/2,\\
 \displaystyle\min_{k \in \mathcal{K}}\xi^{p}\left(k\right) & \text{ if } q\left(0\right)\leq 1/2.
 \end{cases}\]
\end{enumerate} 
\end{thm}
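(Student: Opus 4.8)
The plan is to read the equilibrium condition entirely off candidate $R$'s deviation incentives. By the symmetry built into $q$, $t$, $h$ and (via Theorem~\ref{thm_product}) the optimal signals, at any symmetric profile $\langle -a,a\rangle$ candidate $R$ wins with probability exactly $1/2$; and by uniform strict obedience (Assumption~\ref{assm_regularity}) every voter votes for the candidate his own realization endorses, so a deviation to $a'$ affects $R$'s winning probability only through the set of types it \emph{attracts} and the set it \emph{repels} (Definition~\ref{defn_attract}). Hence $a\in\mathcal E^{\mathcal S,q}$ iff no $a'$ produces an attract/repel pattern that strictly raises $R$'s winning probability above $1/2$. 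Two preliminary observations organize the rest: (i) the median's party loyalty $v(\langle -a,a\rangle,0)$ vanishes identically, so by Theorem~\ref{thm_product} his optimal signal is the fixed symmetric one with magnitude $\bar m$ (the largest root of $m=2\lambda h(m)$), whence $\mu_R^{\mathcal S}(a,0)\equiv\bar m$ and $\Xi^{\mathcal S}(0)=[0,\min\{\bar m,\overline a\}]$; (ii) an extreme type's party loyalty equals $\min\{2a,2|t(k)|\}$, so his optimal signal is constant once $a\ge|t(k)|$, and strict obedience at the profile $\langle -|t(k)|,|t(k)|\rangle$ gives $|t(k)|\in\Xi^{\mathcal S}(k)$, so $\xi^{\mathcal S}(k)\ge|t(k)|$ and, for $a\ge|t(k)|$, $\phi^{\mathcal S}(-a,t(k),k)$ is affine and strictly increasing in $a$, pinning down $\xi^p(k)$ as its unique root.

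For the broadcast case all voters share one realization, so $R$ wins iff the attracted types form a strict majority (when the common realization endorses $L$) or the non-repelled ones do (when it endorses $R$); a deviation is thus strictly profitable iff it attracts a strict majority. Since $v$ is single-crossing in $k$ and $q$ is symmetric (so the median type is $0$), I would show that a majority can be attracted iff $a>\xi^b(0)$: the deviation $a'=0$ attracts both type $0$ and type $1$ under the same threshold (using $\mu_L^b(a)=-\mu_R^b(a)$), hence makes the majority $\{0,1\}$ vote $R$ exactly when $a>\xi^b(0)$, while any deviation attracting $\{-1,1\}$ already forces $a>t(1)+\mu_R^b(a)>\xi^b(0)$. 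Combined with the comparative static that $\mu_R^b(a)$ is non-increasing in $a$ (so $\phi^b(-a,0,0)=a-\mu_R^b(a)$ is strictly increasing and $\Xi^b(0)=[0,\xi^b(0)]$ with $\xi^b(0)>0$), this yields $\mathcal E^{b,q}=[0,\xi^b(0)]$ with median voters disciplining.

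For the personalized case the signals are conditionally independent, so $R$'s winning probability is a genuine mixture. First, the deviation $a'=0$ repels no one (strict obedience forces the extremes' endorsement cushions to dominate the mild move to the center) and attracts the median exactly when $a>\xi^p(0)=\bar m$; a short computation with $\pi^p(a,0)=1/2$ and $\pi^p(a,-1)=1-\pi^p(a,1)$ then gives $R$ a winning probability at least $1-\pi^p(a,1)\bigl(1-\pi^p(a,1)\bigr)>1/2$, so $a^{p,q}\le\xi^p(0)$ always; this settles $q(0)>1/2$ once one notes that then the extreme types are never pivotal, so attracting them is useless and repelling them harmless. When $q(0)\le1/2$, every type is pivotal with positive probability; for $a>\xi^p(k)\ge|t(k)|$ the deviation to $t(k)$ is a move toward the center that attracts type $k$ and repels no one, and a direct computation (for $k=1$, $R$ now wins in the positive-probability event where type $1$'s signal said $L$, type $0$'s said $R$, and type $-1$'s said $L$, in which it previously lost) shows it is strictly profitable, so $a^{p,q}\le\min_k\xi^p(k)$. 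Conversely, for $a\le\min_k\xi^p(k)$: if $a\ge|t(k)|$ then $a\le\xi^p(k)$ makes type $k$ non-attractable by any policy; and if $a<|t(k)|$, any policy attracting an extreme type strictly repels the opposite extreme (the Bayes-plausibility symmetry $\mu_R^p(a,-1)=|\mu_L^p(a,1)|$ from Theorem~\ref{thm_product}(ii) makes the attraction and repulsion inequalities coincide) while being unable to attract it, so every reachable pattern leaves $R$'s winning probability at $1/2$ or below. Hence $\mathcal E^{p,q}=[0,\min_k\xi^p(k)]$, and strict positivity of $a^{\mathcal S,q}$ follows from $\xi^{\mathcal S}(0)=\min\{\bar m,\overline a\}>0$ and $\xi^p(\pm1)\ge t(1)>0$.

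The main obstacle is the converse in the personalized case with $q(0)\le1/2$: an extreme type can be ``attractable'' at a profile while no profitable deviation exists, because reaching its bliss point necessarily repels the opposite extreme by an amount that exactly cancels the gain, a cancellation forced by the signal symmetries of Theorem~\ref{thm_product}(ii). The work is to show that when $a$ lies below every $\xi^p(k)$ this cancellation is unavoidable---in particular, that no single deviation can attract both extremes at once---which requires the piecewise-linear structure of $v$ together with the constancy of extreme voters' signals beyond their bliss points and with those symmetries. A secondary, more routine obstacle is the comparative static that $\mu_R^b(a)$ is non-increasing in $a$ (and, in the broadcast case, that only extreme voters' participation constraints bind), which underlies $\Xi^b(0)=[0,\xi^b(0)]$.
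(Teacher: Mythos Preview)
Your proposal is essentially correct and follows the same route as the paper. The paper proves Theorem~\ref{thm_main} as the three-type specialization of a general result (Theorem~\ref{thm_general_main}), whose engine is Lemma~\ref{lem_general_maxmin}: Steps~1--2 show that any $a'\notin[-a,a]$ which attracts type $k$ must repel type $-k$ (via the signal symmetry $\mu_R^{\mathcal S}(a,-k)=-\mu_L^{\mathcal S}(a,k)$ combined with concavity of $u$), Step~3 shows that $a'\in[-a,a)$ repels no one, and Step~4 shows that such $a'$ cannot attract a voter whose bliss point lies outside $[-a,a]$. Your ``cancellation'' argument, your use of the piecewise-linear structure of $v$ together with the constancy of extreme voters' signals beyond $|t(k)|$, and your monotonicity claim for $\mu_R^b(a)$ correspond exactly to these steps; the paper just packages them more systematically and in greater generality (arbitrary $K$, general $u$, and an explicit ``influential coalition'' abstraction).

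One small slip: in the personalized case with $q(0)\le 1/2$, after $a'=0$ attracts the median, your formula $1-\pi^p(a,1)\bigl(1-\pi^p(a,1)\bigr)$ for $R$'s winning probability treats the two extreme signals as \emph{unconditionally} independent. Under conditional independence and the symmetry $p_{-1}^{+}=1-p_1^{-}$, $p_{-1}^{-}=1-p_1^{+}$, one gets $\Pr(\text{both extremes vote }L)=\tfrac{1}{2}\bigl[p_1^{+}+p_1^{-}-2p_1^{+}p_1^{-}\bigr]$, which is \emph{at least} $\pi^p(a,1)\bigl(1-\pi^p(a,1)\bigr)$, so your expression is an upper bound on $R$'s winning probability, not a lower bound. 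The conclusion survives because $\Pr(\text{both vote }L)<1/2$ regardless (it equals $1$ minus a strictly positive ``agreement'' probability), but the inequality runs the other way from what you wrote.
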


The intuition behind Theorem \ref{thm_main} is as follows: When voters' population distribution is sufficiently dispersed, personalized news aggregation allows candidates to benefit from attracting extreme voters in addition to attracting centrist voters. Since voters with the smallest policy latitude are most susceptible to policy deviations, they constitute the easiest target of a deviating candidate. Their policy latitude---which captures their resistance to policy deviations---determines equilibrium policy polarization. Regardless of whether news aggregation is personalized or not, policy polarization is strictly positive despite that candidates are office-motivated:  Due to strict obedience, local deviations from a policy profile wouldn't change voters' voting decisions, which suggests that a positive degree of policy polarization could arise in equilibrium. 

\paragraph{Proof sketch} We proceed in three steps. 

\bigskip

\noindent \emph{Broadcast case.} In the broadcast case, all voters consume the same signal and so form the same belief about candidates' valence. Thus the median-voter-theorem logic holds, namely a deviation of candidate $R$ is profitable, i.e., strictly increases his winning probability, if and only if it attracts centrist voters. Formally (and no more proof is required), 

\begin{lem}\label{lem_broadcast}
In the broadcast case, a policy profile $(-a,a)$ with $a \geq 0$ can arise in equilibrium if and only if no deviation of candidate $R$ to any $a' \in \mathbb{R}$ attracts centrist voters. 
\end{lem}

\noindent \emph{Personalized case. } In the personalized case, Lemma \ref{lem_broadcast} remains valid if centrist voters constitute a majority coalition. Otherwise no type of voter alone  forms a majority coalition, and a deviation is profitable if it attracts \emph{any} type of voter, \emph{holding other things constant}. The reason is pivotality: Since the infomediary can now offer conditionally independent signals to different types of voters, the above deviation strictly increases candidate $R$'s winning probability when the remaining voters disagree about which candidate to vote for. 

The above argument leaves open the question of whether attracting some voters would cause the repulsion of others. Fortunately, this concern is ruled out by the next lemma. 

\begin{lem}[label=lem_maxmin]
In the personalized case with $q\left(0\right)\leq 1/2$, a policy profile $(-a,a)$ with $a \geq 0$ can arise in equilibrium if and only if no deviation of candidate $R$ to any $a' \in [-a,a)$ attracts any voter whose bliss point lies inside $\left[-a,a\right]$. 
\end{lem}

Lemma \ref{lem_maxmin} exploits basic properties of voters' utility functions such as weak concavity and inverted V-shape, as well as the strict obedience induced by optimal signals. To get a sense of how its proof works, consider  two kinds of global deviations from a symmetric policy profile $(-a,a)$ with $0\leq a<t(1)$: (1) $a' \notin \left[-a,a\right]$ and (2) $a' \in [-a,a)$. By committing a deviation of the first kind to $a'>a$ (as depicted in Figure \ref{figure1}), candidate $R$ may indeed attract right-leaning voters. But such a success must cause the repulsion of left-leaning voters, due to the symmetry and  weak concavity of voters' utility functions (see Appendix \ref{sec_proof_policy} for technical details). In addition, the deviation moves candidate $R$ away from centrist voters and hence runs the risk of repelling them, so it cannot benefit the candidate overall. The argument for why any deviation to $a'<-a$ is unprofitable is analogous and hence is omitted.

\begin{figure}[!h]
\centering
     \includegraphics[scale=.3]{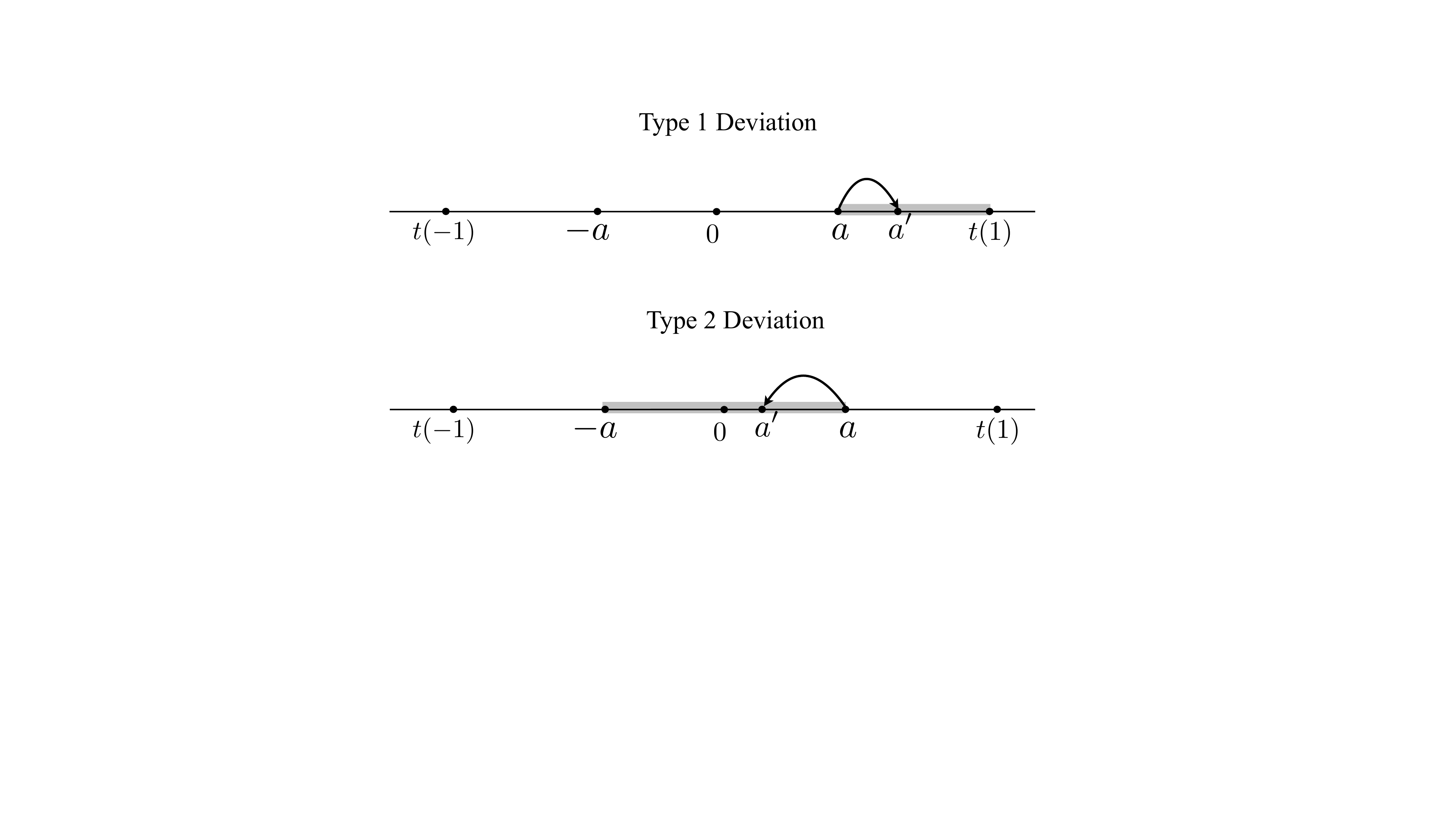}
    \caption{Consequences of candidate $R$ deviating from policy profile $(-a,a)$ to $a'>a$ when $0\leq a<t(1)$.}\label{figure1}
\end{figure} 

Consider next a deviation $a' \in [-a,a)$ of the second kind (as depicted in Figure \ref{figure2}), which moves candidate $R$ closer to centrist and left-leaning voters. While the deviation might indeed attract centrist voters, it doesn't attract  left-leaning voters, as the latter lie closer to candidate $L$'s position $-a$ than $a'$. Using a similar line of reasoning, we can demonstrate that $a'$ neither attracts nor repels right-leaning voters. Thus in order to sustain the original policy profile in an equilibrium, all we need to rule out is the possibility that the deviation attracts centrist voters.

\begin{figure}[!h]
\centering
     \includegraphics[scale=.3]{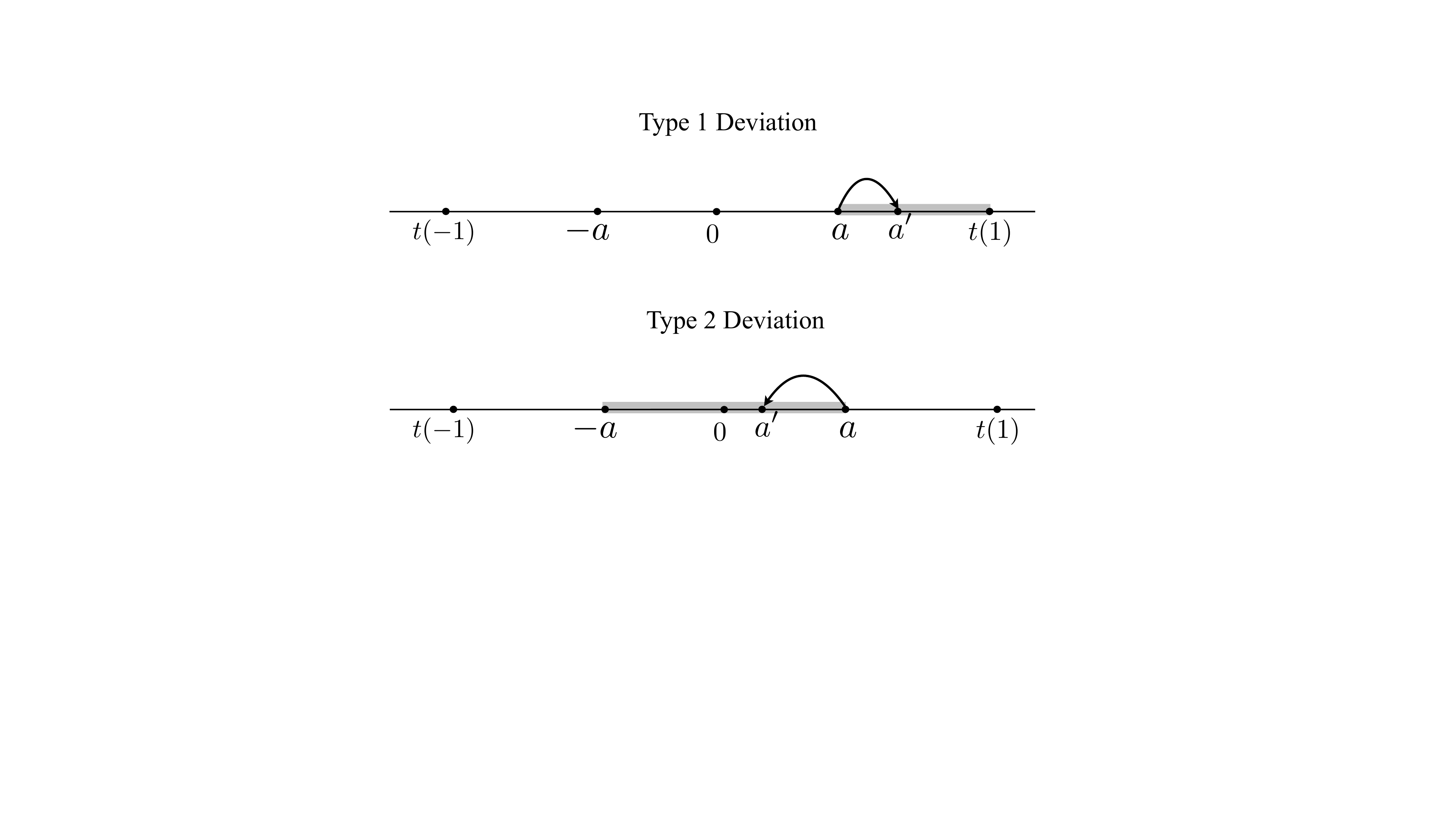}
    \caption{Consequences of candidate $R$ deviating from policy profile $(-a,a)$ to $a' \in [-a,a)$ when $0\leq a<t(1)$.  }\label{figure2}
\end{figure}


\noindent \emph{Equilibrium policy set.} The above argument establishes that a policy profile can arise in equilibrium if it deters deviations that aim at attracting certain types of voters. It is thus useful to  express equilibrium policies in terms of \emph{attraction-proof sets}.

\begin{defn}\label{defn_kproofsimplified}
Under any segmentation technology $\mathcal{S}\in \{b,p\}$,  the \emph{attraction-proof set} for type $k$  voters, denoted by $\Xi^{\mathcal{S}}(k)$, is the set of the nonnegative policy $a$'s such that no unilateral deviation of candidate $R$ from $(-a,a)$ can attract them. Since $a'=t(k)$ is the most attractive deviation to type $k$ voters, we thus have $\Xi^{\mathcal{S}}(k)=\{a \geq 0: v(-a,t(k), k)+\mu_L^{\mathcal{S}}(a,k)\leq 0\}$.
\end{defn}

Rephrasing Lemmas \ref{lem_broadcast} and \ref{lem_maxmin} using attraction-proof sets yields $\mathcal{E}^{b, q}=\Xi^b(0)$ and 
\[
\mathcal{E}^{p,q}=\begin{cases}
\Xi^p(0) & \text{ if } q(0)>1/2,\\
\left([0, t\left(1\right)) \cap \Xi^{p}\left(0\right)\right) \cup \left( [t\left(1\right), +\infty) \cap \cap_{k \in \mathcal{K}}\Xi^{p}\left(k\right)\right) & \text{ if } q(0) \leq 1/2. 
\end{cases}\]
To further simplify these results, we exploit the following properties of attraction-proof sets.
\begin{lem}\label{lem_xi}
(i)  $\xi^b(0)>0$ and $\Xi^b(0)=[0, \xi^b(0)]$; (ii) $\forall k \in \mathcal{K}$, $\xi^p(k)>|t(k)|$ and $\Xi^p(k) \cap [|t(k)|, +\infty)= [|t(k)|, \xi^p(k)]$. 
\end{lem}

Lemma \ref{lem_xi} clarifies the roles of strict obedience and policy latitude in determining equilibrium policy polarization and disciplining voter. For each $k \geq 0$, the result that $t(k)$ (and, by continuity, a neighborhood surrounding it) belongs to $\Xi^{s}(k)$ follows immediately from type $k$ voters' strict obedience at policy profile $(-t(k), t(k))$: $v(-t(k), t(k), k)+\mu_L^{\mathcal{S}}(t(k), k)<0$. Simple as it is, this result implies that equilibrium policy polarization is strictly positive. 

Meanwhile,  the fact that $\max \Xi^{\mathcal{S}}(k)=\xi^{\mathcal{S}}(k)$ isn't a coincidence: In the general model presented in Online Appendix \ref{sec_general}, we define a voter's policy latitude directly as the maximum attraction-proof position for him. Under distance utility function, policy latitudes have closed-form solutions with clear economic interpretations.  When voters' population distribution is dispersed, we must deter candidates from attracting any voter. The maximum attraction-proof position for every voter is $\min_{k\in \mathcal{K}}\xi^p(k)$,  which should intuitively pin down  the maximum policy that could arise in equilibrium.  The proof presented in the appendix formalizes this intuition by ``gluing'' the various forming pieces of $\mathcal{E}^{p,q}$ together into the desired result.  More work is needed to show that the entire interval $[|t(k)|, \xi^{\mathcal{S}}(k)]$ is contained in the attraction-proof set,  for reasons independent of the choice of the attention cost function. 

\subsection{Comparative statics}\label{sec_cs}
This section examines the comparative statics of equilibrium policy sets. Since all policies between zero and policy polarization can arise in equilibrium, it is w.l.o.g. to focus on the comparative statics of policy polarization: As policy polarization increases, the equilibrium policy set increases in the strong set order. 

\paragraph{Segmentation technology}The next proposition concerns the policy polarization effect of (disabling) personalized news aggregation \citep{gdpr, warren}. 

\begin{prop}\label{prop_personalization}
Policy polarization is strictly higher in the personalized case than in the broadcast case if and only if one of the following situations happens in the personalized case.
\begin{enumerate}[(i)]
\item Centrist voters are disciplining.
\item Right-leaning voters are disciplining, and the belief induced by the occasional big surprise of their personalized signal is sufficiently strong:  $\upsilon_L^p\left(1\right)>\upsilon_L^b+t\left(1\right)$.
\item Left-leaning voters are disciplining and have a sufficiently strong policy preference: $|t\left(-1\right)|>\upsilon_L^b-\upsilon_L^p\left(-1\right)$.
\end{enumerate} 
\end{prop}

Proposition \ref{prop_personalization} follows immediately from Theorem \ref{thm_main}. Part (i) of the proposition exploits the fact that centrist  voters' personalized signal is more Blackwell-informative than the broadcast signal, hence their policy latitude increases as news aggregation becomes personalized.

Parts (ii) and (iii) of Proposition \ref{prop_personalization} show that if extreme voters are disciplining in the personalized case, then the skewness of their signals is crucial for sustaining a greater degree of policy polarization than in the broadcast case. The role of skewness differs according to which type of extreme voter is disciplining. In the case where base voters (i.e., right-leaning voters) are disciplining, the only explanation for why they could have a big policy latitude must be the occasional big surprise of their personalized signal. Indeed, we require that base voters be significantly more pessimistic about candidate $R$'s valence following unfavorable information than the centrist voters in the broadcast case, i.e., $\upsilon_L^p\left(1\right)>\upsilon_L^b+t\left(1\right)$. 

In the case where opposition voters (i.e., left-leaning voters) are disciplining, a presumption is that they have a smaller policy latitude than base voters, i.e., $-t\left(-1\right)+\upsilon_L^p\left(-1\right)<-t\left(1\right)+\upsilon_L^p\left(1\right)$. 
The last condition, while delicate at first sight, stems naturally from the trade-off between voters' policy preferences and their beliefs about candidate valence: Since base voters most prefer candidate $R$'s policies, they seek the biggest occasional surprise and so are most pessimistic about candidate $R$'s valence following unfavorable information. In contrast, opposition voters least prefer candidate $R$'s policies but are nonetheless most optimistic about his valence following unfavorable information. For the above condition to hold, the difference in base and opposition voters' beliefs must exceed the difference in their policy preferences, i.e., $\upsilon_L^p\left(1\right)-\upsilon_L^p\left(-1\right)>t(1)-t(-1)$. Simplifying the last condition using symmetry yields
\[\tag{$\ast$} \upsilon_L^p\left(1\right)-\upsilon_R^p\left(1\right)>2t\left(1\right),\label{eqn_ast} \]
which stipulates that extreme voters' personalized signals be  sufficiently skewed that the beliefs induced by the occasional big surprise and own-party bias differ by a significant amount. In that case, candidate $R$  wouldn't target his base when contemplating a deviation. Instead, he appeals to his opposition, which itself could be challenging due to the latter's preference against his policies. When such an anti-preference is sufficiently strong, i.e., $|t\left(-1\right)|>\upsilon_L^b-\upsilon_L^p\left(-1\right)$,  policy polarization increases as a result of personalized news aggregation. Notice the role of skewness in the above argument, which is crucial yet indirect. 

The next example reduces Proposition \ref{prop_personalization} to model primitives under quadratic attention cost. See also Appendix \ref{sec_numerical} for the numerical solutions for the case of entropy attention cost.

\begin{example}\label{exm_quadratic}
When $h(\mu)=\mu^2$, solving voters' equilibrium belief magnitudes and policy latitudes  yields 
\[(\upsilon_L^b, \upsilon_R^b)=(\frac{1+\sqrt{1-16\lambda t(1)}}{4\lambda},\frac{1+\sqrt{1-16\lambda a, t(1)}}{4\lambda}),\]
\[(\upsilon_L^p(k), \upsilon_R^p(k))=\begin{cases}
(\frac{1}{2\lambda}-4t(1), \frac{1}{2\lambda}) &\text{ if } k=-1,\\
(\frac{1}{2\lambda}, \frac{1}{2\lambda}) &\text{ if } k=0,\\
(\frac{1}{2\lambda}, \frac{1}{2\lambda}-4t(1)) &\text{ if } k=1,
\end{cases}\]
and 
\[\xi^b(0)=\frac{1+\sqrt{1-16\lambda t(1)}}{4\lambda} \text{ and } 
\xi^p(k)=\begin{cases}
\frac{1}{2\lambda}-3t(1) &\text{ if } k=-1,\\
\frac{1}{2\lambda} & \text{ if } k=0,\\
\frac{1}{2\lambda}-t(1) & \text{ if } k=1.
\end{cases}\]
Thus under personalized news aggregation,  Condition (\ref{eqn_ast}) always holds: $\upsilon_L^p(1)-\upsilon_R^p(1)=4t(1)>2t(1)$, and opposition voters are always disciplining: $\min_{k \in \mathcal{K}}\xi^p(k)=\xi^p(-1)$.  Condition (\ref{eqn_doubleast}) thus becomes $|t(-1)|>\upsilon_L^p(-1)-\upsilon_L^b$, and solving it explicitly yields $\lambda t(1) \geq 1/18\approx .55$.  

To develop intuition for the last result, note that as extreme voters' policy preferences become stronger, they find news consumption less useful. Under personalized news aggregation, the infomediary is reluctant to cut back $\upsilon_R^p(1)=1/2\lambda$, the occasional big surprise that makes news consumption valuable to left-leaning voters, and so must cut back $\upsilon_L^p(1)=1/2\lambda-3t(1)$ significantly in order to prevent these voters from tuning out.  Under broadcast news aggregation,  $\upsilon_L^b$ and $\upsilon_R^b$ must decrease by the same amount to prevent extreme voters from tuning out.  When $t(1)$ is sufficiently large, the right-hand side of Condition (\ref{eqn_doubleast}) is small, whereas the left-hand side of it is big, hence the condition is satisfied.  

Meanwhile as the attention cost parameter increases, signals must become less Blackwell-informative in order to prevent voters from tuning out,  hence $\upsilon_L^p(1)$ and $\upsilon_L^b$ both decrease. When $\lambda$ is large, the right-hand side of Condition (\ref{eqn_doubleast}) is small whereas the left-hand side of it is independent of $\lambda$, hence the condition holds.  $\hfill \diamondsuit$
\end{example}

\paragraph{Attention cost parameter} 
The next proposition shows that policy polarization decreases with  voters' attention cost parameter. 

\begin{prop}\label{prop_lambda}
Let $\lambda'>\lambda>0$ be two attention cost parameters such that the corresponding environments satisfy Assumptions \ref{assm_attention} and \ref{assm_regularity}, and the beliefs induced by optimal signals take interior values between $-1$ and $1$. As we increase the attention cost parameter from $\lambda$ to $\lambda'$, policy polarization strictly decreases under both broadcast news aggregation and personalized news aggregation. 
\end{prop}

The proof of Proposition \ref{prop_lambda} exploits an important  fact: As the attention cost parameter increases,  optimal signals become less Blackwell-informative, which attenuate voters' beliefs about candidate valence. As voters become more susceptible to policy deviations, their policy latitudes fall.  

Proposition \ref{prop_lambda} sheds light on the policy polarization effect of introducing perfect competition between infomediaries, which is advocated by the British government as a preferable way of regulating tech giants \citep{digitalcompetition}. In Online Appendix \ref{sec_competitive}, we investigate an extension of the baseline model, whereby each type $k \in \mathcal{K}$ of voter is served by multiple infomediaries competing \`{a} la Bertrand and solving 
\[\max_{\Pi} V\left(\Pi; {\bf{a}},k\right)-\lambda I\left(\Pi\right).\]
The solution to this problem, which we name as the \emph{competitive signal} for type $k$ voters, coincides with their monopolistic personalized signal for some  attention cost parameter $\lambda'>\lambda$.  Intuitively, monopolistic personalized signals overfeed voters with information about candidate valence through reducing the attention cost parameters they effectively face.  Introducing competition between infomediaries corrects this overfeeding problem; its policy polarization effect is negative by Proposition \ref{prop_lambda}.  

\paragraph{Population distribution}
Recently, a growing body of the literature has been devoted to the understanding of voter polarization, also termed \emph{mass polarization}. Notably, \cite{fiorina} define mass polarization as a bimodal distribution of voters' policy preferences on a liberal-conservative scale, and \cite{gentzkow} develops a related concept that measures the average ideological distance between Democrats and Republicans. Inspired by these authors, we define \emph{increasing mass polarization} as a mean-preserving spread of voters' policy preferences. The next proposition shows that under personalized news aggregation, increasing mass polarization may surprisingly reduce policy polarization rather than increasing it. 

\begin{prop}\label{prop_population}
Let $q$ and $q'$ be two population distributions such that \emph{the mass is more polarized under $q'$ than under $q$}, i.e., $q(0)>q'(0)$. As we change the population distribution from $q$ to $q'$ in the personalized case, policy polarization weakly decreases, and it strictly decreases if $q\left(0\right)>1/2 \geq q'\left(0\right)$ and $\min\left\{\xi^p\left(-1\right), \xi^p\left(1\right)\right\}<\xi^p\left(0\right)$. 
\end{prop}

Proposition \ref{prop_population} follows immediately from Theorem \ref{thm_main}: As we keep redistributing voters' population from the center to the margin, candidates would eventually benefit from attracting extreme voters in addition to attracting centrist voters. If any extreme voter has a smaller policy latitude than that of the centrist voters as in the case of quadratic attention cost, then a reduction in the policy polarization will ensue. 

\section{Extensions}\label{sec_extension}
In this section, we report main extensions of the baseline model and their takeaways without touching any technical detail. See the online appendices for formal analysis. 

\paragraph{General voters and joint signal distribution} In Online Appendix \ref{sec_general}, we extend the baseline model to arbitrary finite types of voters holding general policy preferences. We also relax the assumption that signals are conditionally independent across market segments, and instead consider joint signal distributions that are consistent with the marginal distributions as solved in Section \ref{sec_news}.  We will impose regularity conditions on either voters' utility functions or their attention cost function,  but rest be assured that the many-voter version of the baseline model is nested as a special case. 

Our analysis leverages a new concept called \emph{influential coalition}. Loosely  speaking, a coalition of voters is influential if attracting all its members, holding other things constant, strictly increases the deviating candidate's winning probability. In the broadcast case, signals are perfectly correlated among voters, so a coalition of voters is influential if and only if it is a majority coalition. In the personalized case, non-majority coalitions can be influential, due to the imperfect correlation between different voters' signals.  Table \ref{table1} compiles the influential coalitions in the baseline model.

\begin{table}[ht]
\centering
\begin{tabular}{ r|c|c| }
\multicolumn{1}{r}{}
 &  \multicolumn{1}{c}{$\mathcal{S}=b$}
 & \multicolumn{1}{c}{$\mathcal{S}=p$} \\
\cline{2-3}
$q(0)>1/2$ & majority coalitions & majority coalitions \\
\cline{2-3}
$q(0)< 1/2$ & majority coalitions & nonempty coalitions\\
\cline{2-3}
\end{tabular}
\caption{Influential coalitions in the baseline model.}
\label{table1}
\end{table}

Personalized news aggregation affects policy polarization through changing the marginal signal distributions, as well as the influential coalitions. So far we've focused on the first effect, under the restriction that personalized signals are conditionally independent across voters. As demonstrated in Online Appendix \ref{sec_general}, lifting the last restriction while holding marginal signal distributions fixed can only increase policy polarization. Among all joint signal distributions and voter  population distributions, the exact lower bound for policy polarization: $\min_{k \in \mathcal{K}} \xi^p(k)$,  is attained when signals are conditionally independent across voters and voters' population distribution is uniform across types. 
Both findings follow from a characterization of policy polarization as the minimum policy latitude among all influential coalitions, as well as the comparative statics of influential coalitions as we vary the joint signal distribution, holding marginal signal distributions fixed. 

Two takeaways are immediate. First, results so far prescribe the exact lower bound for the policy polarization effect of personalized news aggregation. Second, as long as the lower bound $\min_{k \in \mathcal{K}}\xi^p(k)$ stays positive, changes in the environment (e.g., enrich voters' types, divide voters of the same type into multiple subgroups) wouldn't render policy polarization trivial.  

\paragraph{A continuum of states} In Online Appendix \ref{sec_state}, we extend the analysis to a continuum of states while assuming mutual information as the attention cost. All previous findings regarding the  personalized case remains qualitatively valid. As for the broadcast case, we show, as in the baseline model, that any optimal signal has at most three signal realizations: LL, LR, and RR. Interestingly, this result holds for arbitrary finite types of voters, because among all voters, only those of the most extreme types can have binding participation constraints, and the signal acquired by the representative voter acting on their behalves prescribes at most three voting recommendation profiles as above. 

The case of two signal realizations can be solved analogously as before. In the new case of three signal realizations, we argue, using the convexity of mutual information in the signal structure \citep{cover}, that the optimal signal must be symmetric, hence the posterior mean of the state given signal realization LR must equal zero. Given this, we then argue that equilibrium policy polarization must equal zero, hence the personalization of news aggregation always strictly increases policy polarization.

\section{Related literature}\label{sec_literature}
The current paper contributes to three strands of the economic literature: Rational Inattention (RI),  media bias, and electoral competition. 

\paragraph{Rational inattention} The literature on RI pioneered by \cite{sims} and \cite{sims1} assumes that decision-makers can optimally aggregate source data into signals themselves.  To create a role for infomediaries, we assume that the aggregator is designed and operated by an infomediary, whereas voters must fully absorb the information given to them. Apart from this departure from the RI paradigm, we otherwise follow the standard model of posterior-separable attention cost that nests Shannon entropy as a special case. Posterior separability \citep{caplindean13} has recently received attention from economists because of its axiomatic and revealed-preference foundations  \citep{caplindean, tsakas, denti2022posterior, zhong2022optimal},  connections to sequential sampling \citep{hebertwoodford, morrisstrack}, and validations by lab experiments \citep{ambuehl2017offer, dean}.

The flexibility of information aggregation is essential to our predictions, as well as that of many other RI models (see \citealt{risurvey} for a survey).  It is absent from most existing political models with costly information acquisition (often referred to as  \emph{rational ignorance} models),  whereby voters can only acquire signals that follow stylized, exogenous probability distributions (see, e.g., \citealp{persico, martinelli}).\footnote{An exception is our companion paper \cite{li2020electoral}, which  examines the impact of personalized news aggregation for electoral accountability and selection, assuming that voters can aggregate information optimally themselves as in the standard RI paradigm. Here, the focuses are on the attention-maximizing signals provided by a monopolistic infomediary (though we do study perfectly competitive signals in an extension), as well as  their impacts on electoral competition.}  Evidence for attentional flexibility has been documented by the aforementioned lab experiments, as well as \cite{novak}.

\paragraph{Media bias} The current paper adds to the literature on demand-driven media bias. A high-level idea it seeks to formalize---namely even rational consumers can exhibit a preference for biased information when constrained by information processing capacities---dates back to \cite{calvert} and is later expanded on by \cite{suen}, \cite{burke}, \cite{oliveros}, and \cite{che} among others. While some of these models also predict a predisposition reinforcement and, implicitly, an occasional big surprise, they work with ad-hoc information aggregation technologies and do not examine the consequences of biased information aggregation for electoral competition. Even if they did, as in \cite{chansuen}, their predictions could still depart significantly from ours due to the subtle differences in the information aggregation technology  (more on this later).   

A recent, noteworthy contribution to the literature is made by \cite{perego2022media}, who study the impact of media entry on news personalization and opinion disagreement in a variant of Salop's circle model. The news consumers in their model have heterogeneous preferences for information, whereas media outlets with limited information processing capacities compete by adjusting the locations of their products on the circle. While the model of \cite{perego2022media} also predicts personalized information aggregation, their media outlets face  different objectives, choices, and constraints from ours. Electoral competition isn't a concern in their model, but it lies at the heart of the current analysis. 

There is also a vast literature on supply-driven media bias, studying how self-interested media could persuade voters to favor one candidate over another through biased information disclosure.  Notable, recent contributions to this literature include \cite{dugganmartinelli}, \cite{gehlbachsonin}, and \cite{pratmediapower}; see also \cite{mediahandbook} for a survey of the earlier literature.  On the methodological side, there is a growing theoretical literature studying the optimal (private) persuasion of voters whose actions may have payoff externalities on each other (see, among others, \citealp{schnakenberg, alonsocamara, salcedo}).  Since NARI can be rephrased as a game of persuading a representative voter, it is closer to the single sender, single receiver problem studied by  \cite{bayesianpersuasion} than those studied by the aforementioned studies.



\paragraph{Electoral competition} In most  existing probabilistic voting models, voters' signals are assumed to be continuously distributed, so even small changes in candidates' positions could affect their voting decisions (see \citealp{duggan} for a thorough literature survey).  Under this assumption,  \cite{calvertb} establishes a policy convergence between office-seeking candidates, and  pioneers the use of policy preferences  for generating  policy polarization between candidates (hereinafter, the \emph{Calvert-Wittman logic}). Strict obedience stands in sharp contrast to this assumption, although it is a natural consequence of NARI. 

There is a small but growing literature on electoral competition with personalized information aggregation.\footnote{Certainly,  personalized information aggregation is only one of the many mechanisms that generate information personalization.  Other notable mechanisms include shrouded attributes and targeted campaign.  \cite{glaeser} explore the first mechanism by assuming that voters can only observe the policy deviations committed by their own-party candidates.  In equilibrium, policy polarization arises due to the lack of monitoring by voters from the opposite side.  \cite{levine} formalize the second mechanism in a Calvert-Wittman model, showing that an increase in the uncertainty of voters' preferences raises  both campaign spending and policy polarization.  Ensuing studies to these  papers are numerous; they are not reviewed here due to space constraints.   } The current work differs from the existing studies in two main aspects. First, the signal  structures generated by NARI are new to the literature. Second, in order to single out the policy polarization effect of NARI, we embed the analysis in a plain probabilistic voting model where candidates are office-motivated, and the only source of uncertainty is their valence shock (see, e.g., \citealp{calvertb, duggan}). Together, these modeling choices generate new insights that even the closest works to ours tend to ignore.

\cite{chansuen} study an electoral competition model where voters care about whether the realization of a random state variable is above or below their personal thresholds. Information is provided by personal media, which form bi-partitions of the state space using threshold rules. A consequence of working with this information aggregation technology, rather than NARI, is that signal realizations are monotone in voters' thresholds, i.e., if a left-leaning voter is recommended to vote for candidate $R$, then a right-leaning voter must receive the same recommendation. As a result, centrist voters are always disciplining despite a pluralism of media.\footnote{Starting from there, the analysis of \cite{chansuen} differs completely from ours. In particular, \cite{chansuen} exploit the Calvert-Wittman logic, assuming that voters observe a preference shock in addition to the state variable disclosed by media outlets, and that candidates have policy preferences. We do not use the Calvert-Wittman logic to generate polarization.  } We instead predict that the disciplining voter can vary with model primitives and will discuss the empirical implication of this prediction in the conclusion section. 

Two recent papers: \cite{matejka} and \cite{yuksel}, study electoral competition models with personalized  information acquisition.  In \cite{matejka}, voters face normal uncertainties about candidates' policies that do \emph{not} directly enter their utility functions. Information acquisition takes the form of variance reduction, generating signals that violate strict obedience and sustain policy polarization only if the cost of information acquisition differs across candidates.  The current work differs from \cite{matejka} in the source of uncertainty, the attention technology, and the driving force behind  policy polarization. 

\cite{yuksel} studies a variant of the Calvert-Wittman model, where voter learning takes the form of partitioning a multi-dimensional issue space. Aside from these modeling differences that set our reasoning apart,\footnote{In particular, Yuksel's (2022) reasoning exploits the multi-dimensionality of the issue space and the Calvert-Wittman logic. Our results hold regardless of the dimensionality of the state space (see Footnote \ref{fn_singledimension}), and they do not exploit the Calvert-Wittman logic.  \label{footnote_singledimensional}} none of our main predictions---including the rise of policy polarization between office-motivated candidates and the comparative statics of policy polarization---have analogous counterparts in \cite{yuksel}.

\section{Concluding remarks}\label{sec_conclusion}
Tech-enabled personalization is now ubiquitous and seems to maximize social surplus by best serving individuals' needs. To us, this argument ignores the vital role of modern infomediaries in shaping consumers' beliefs and, in turn, the location choices of politicians, companies, etc. After formalizing this role, the welfare consequences of many regulatory proposals to tame tech giants become less clear-cut. For example, while enabling personalization clearly makes the monopolistic infomediary better off and news consumers worse off,  holding candidates' positions fixed, it could affect the social welfare in either way once policies become the subject of candidates' strategic reasoning.   For this reason, we caution that prudence be exercised and our equilibrium characterization be considered when evaluating the overall impacts of these proposals.\footnote{A common, alternative measure of social welfare in election models is the probability of correct selection (i.e., that of selecting the candidate with the greatest valence). In a  companion paper \cite{li2020electoral}, we examine properties of this measure in a model of electoral accountability, whereby voters can aggregate information optimally themselves as in the standard RI paradigm.  Interested readers can adapt the results therein to the current context and examine the selection effect of NARI. A thorough investigation of this subject matter is beyond the scope of the current paper.} The usefulness of our theory in this regard is illustrated by the next example.

\begin{example}In our model, a voter's equilibrium expected utility equals 
\[\underbrace{V(\Pi; \mathbf{a},k)-\lambda I(\Pi)}_{(1)}+\underbrace{u(a_L, k)}_{(2)},\] 
where the first term in the above expression equals zero for voters with binding participation constraints in news consumption, 
and the second term depends on the exact location choice of candidate $R$ relative to the voter's bliss point. Absent the second effect, enabling  personalization makes voters (weakly) worse off by allowing the monopolistic infomediary to perfectly discriminate against them. To evaluate the second effect, we use a symmetric social welfare function that assigns equal weights to left-leaning voters and right-leaning voters. The weighted sum of the second effects across voters is then decreasing in $|a_L|$. Combining the two effects shows that personalized news aggregation reduces voter welfare if it increases policy polarization. The last situation happens if and only if $\lambda t(1)>1/18$ in the context laid out in Example \ref{exm_quadratic}.  $\hfill \diamondsuit$
\end{example}

In an earlier version of this paper, we applied our theory to  the study of product differentiation between firms with personalized product information aggregation for consumers.  Interested readers can consult \cite{hu2019politics} for further details.

An important takeaway from our analysis is the indeterminacy of the disciplining voter under personalized news aggregation. This prediction, while delicate at first sight, suggests that a first step towards testing our theory is to survey political consultants and volunteers about the disciplining voter---an approach that \cite{hersh} advocates in the context of personalized campaign. It also indicates the usefulness of studying shocks to infomediaries, as they may generate the needed variations for empirical research (e.g., introducing perfect competition to infomediaries is mathematically equivalent to increasing the attention cost parameter). We hope someone,  maybe us, will pursue these agendas in the future.

\begin{spacing}{1}
\bibliographystyle{econometrica} 
\bibliography{ref.bib}
\end{spacing}

\appendix

\newpage

\section{Proofs}\label{sec_proof}
The proofs presented in this appendix exploit the following properties of the distance utility function. 

\begin{obs}\label{obs_utility}
$u(a,k)=-|t(k)-a|$ satisfies the following properties, provided that the bliss point function 
$t: \mathcal{K} \rightarrow \mathbb{R}$ is strictly increasing and is symmetric around zero. 
\begin{description}
\item[Continuity and weak concavity]$u
\left(\cdot, k\right)$ is continuous and weakly concave for any $k \in \mathcal{K}$.

\item[Symmetry] $u\left(a,k\right)=u\left(-a,-k\right)$ for any $a\in \mathbb{R}$ and $k \in \mathcal{K}$.

\item[Inverted V-shape] $u\left(\cdot, k\right)$ is strictly increasing on $(-\infty, t\left(k\right)]$ and is strictly decreasing on $[t\left(k\right), +\infty)$ for any $k \in \mathcal{K}$. 

\item[Increasing differences] $v(-a,a',k) \coloneqq u\left(a,k\right)-u\left(a',k\right) $ is increasing in $k$ for any $a>a'$. For any $a>0$, $v(-a,a,k) \coloneqq u\left(a,k\right)-u\left(-a,k\right)$ is strictly positive if $k=1$, equals zero if $k=0$, and is strictly negative if $k=-1$. 
 \end{description}
\end{obs}

\subsection{Proofs for Section \ref{sec_news}}\label{sec_proof_news}
The proofs presented in this appendix take an arbitrary policy profile $\mathbf{a}=(-a,a)$ with $a > 0$ as given. Since the underlying state is binary, we can represent any signal structure by the tuple $(\pi_z, \mu_z)_{z \in \mathcal{Z}}$, where $\pi_z$ denotes the probability that the signal realization is $z \in \mathcal{Z}$, and $\mu_z$ denotes the posterior mean of the state conditional on the signal realization being $z$. 
Any binary signal structure must satisfy 
\[
\pi_L=\frac{\mu_R}{\mu_R-\mu_L} \text{ and } \pi_R=\frac{-\mu_L}{\mu_R-\mu_L},
\]
and so can be represented by the profile $\bm\mu= (\mu_L, \mu_R)$ of posterior means. Type $k$ voters' utility gain from consuming $\bm\mu$ is simply  
\[
V\left(\bm\mu; {\bf{a}}, k\right)=\begin{cases}
\pi_R\left[v\left({\bf{a}}, k\right)+ \mu_R \right]^+ &\text{ if } k \leq 0,\\
-\pi_L \left[v\left({\bf{a}}, k\right)+ \mu_L \right]^- &\text{ if } k >0,
\end{cases}
\]
where  $v(\mathbf{a},1)>0=v(\mathbf{a},0)>v(\mathbf{a}, -1)=-v(\mathbf{a},1)$ according to Observation \ref{obs_utility} \textbf{symmetry} and \textbf{increasing differences}. For ease of notation, we shall write $v$ for $v(\mathbf{a},1)$ and $-v$ for  $v(\mathbf{a},-1)$, and drop the notation of $\mathbf{a}$ from $V(\Pi; \mathbf{a}, k)$. We will also use $\overline\Pi$ to denote the signal that fully reveals the true state.

We presented two results: Theorems \ref{thm_binary} and \ref{thm_product},  in Section \ref{sec_news}. Given how interrelated these results are, we feel it is best to prove them together.  In what follows,  we will maintain Assumption \ref{assm_attention} and Assumption \ref{assm_regularity}(i) (i.e.,  the feasibility  condition)  throughout. Additional assumptions, such as Assumption \ref{assm_regularity}(ii) and (iii), will only be invoked for certain parts of the proof.  

\paragraph{Personalized case} Consider w.l.o.g. the market segment that consists of left-leaning voters. Any optimal personalized signal for these voters must solve 
\begin{align}\label{eqn_personalized}
\max_{\Pi: \Omega \rightarrow \Delta\left(\mathcal{Z}\right)}   I\left(\Pi\right) \text{ subject to } V\left(\Pi;-1\right) \geq  \lambda  I\left(\Pi\right).
\end{align} 
Let $\gamma \geq 0$ denote the Lagrange multiplier associated with voters'  participation constraint, and define the Lagrangian function as 
\[\mathcal{L}(\Pi, \gamma)=I(\Pi)+\gamma[V(\Pi; -1)-\lambda  I(\Pi)].\]
Then the primal problem: (\ref{eqn_personalized}),  can be rewritten as
$
\sup_{\Pi} \inf_{\gamma \geq 0} \mathcal{L}(\Pi, \gamma), 
$
whereas the dual problem is 
$
\inf_{\gamma \geq 0} \sup_{\Pi} \mathcal{L}(\Pi, \gamma). 
$
Let $p^*$ and $d^*$ denote the values of the primal problem and dual problem, respectively, and note that $d^* \geq p^*$. Also note that $p^*>0$, as  Assumption \ref{assm_regularity}(i) (i.e., the feasibility  condition) implies that any solution to the primal problem must be a nondegenerate signal. 

\subparagraph{Step 1.} Characterize the solution to $\sup_{\Pi}\mathcal{L}(\pi, \gamma)$ for each $\gamma \geq 0$. Show that the solution is always unique and has two signal realizations. 

 When $\gamma=0$, we have $\mathcal{L}(\Pi, \gamma)=I(\Pi)$, so the solution to $\sup_{\Pi}\mathcal{L}(\Pi, \gamma)$ is $\overline\Pi$.   When $\gamma >0$, we can rewrite $\sup_{\Pi}\mathcal{L}(\pi, \gamma)$ as $\sup_{\Pi}\gamma[V(\Pi; -1)-(\lambda-1/\gamma)I(\Pi)]$, or equivalently
\begin{equation}\label{eqn_personalized_simplified}
\sup_{\Pi}V(\Pi; -1)-(\lambda-1/\gamma)I(\Pi). 
\end{equation}
When $\lambda(\gamma) \coloneqq \lambda-1/\gamma \leq 0$, the solution to (\ref{eqn_personalized_simplified}) is again $\overline{\Pi}$. When $\lambda(\gamma)>0$, the maximand of (\ref{eqn_personalized_simplified}) becomes
\[\sum_{z \in \mathcal{Z}} \pi_z \underbrace{\left[[-v+\mu_z]^+ - \lambda\left(\gamma\right) h\left(\mu_z\right) \right]}_{f(\mu_z)},\]
where $f$ is the maximum of two strictly concave functions of $\mu$: $-\lambda\left(\gamma\right) h\left(\mu\right)$ and $-v+\mu_z -\lambda\left(\gamma\right) h\left(\mu\right)$. As depicted in Figure \ref{fig_p}, the two functions single-cross at $\mu=v$, and their maximum is M-shaped.  Thus solving (\ref{eqn_personalized_simplified}) using the concavification method developed by \cite{bayesianpersuasion} yields a unique solution with at most two signal realizations. Hereinafter we shall denote this solution by $\Pi(\gamma)$. 
\begin{figure}[!h]
  \centering
  \includegraphics[width=8cm]{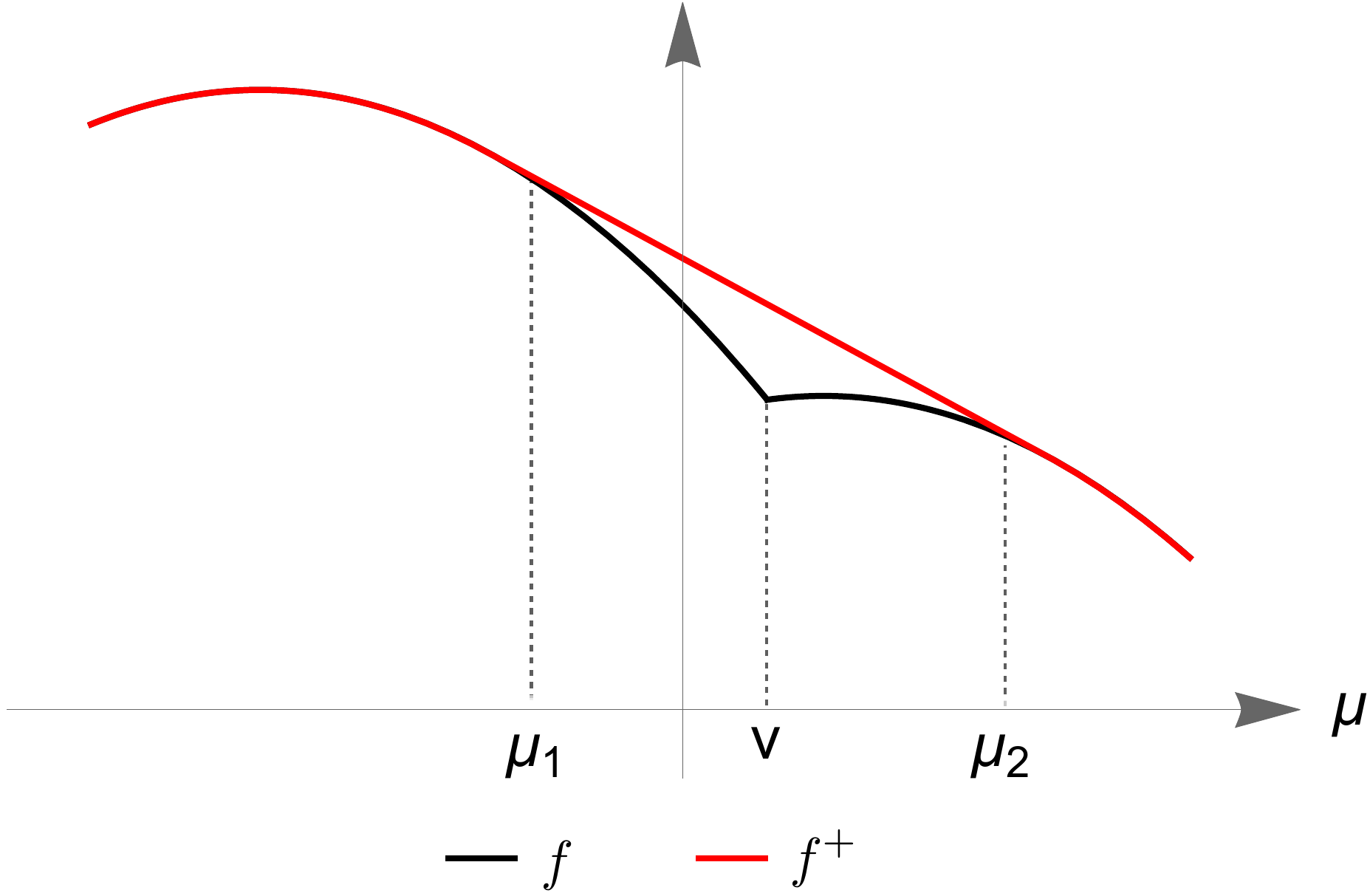}
  \caption{$f$ and its concave closure $f^+$ in the personalized case.}\label{fig_p}
\end{figure}

To rule out the case of one signal realization (i.e., a degenerate signal), let $W(\gamma)$ denote the value of (\ref{eqn_personalized_simplified}) at $\Pi=\Pi(\gamma)$. At $\gamma=+\infty$, (\ref{eqn_personalized_simplified}) becomes the optimal information acquisition problem faced by left-leaning voters if they could freely choose the signal structure themselves, i.e., $\max_{\Pi}V(\Pi; -1)-\lambda I(\Pi)$. Assumption \ref{assm_regularity}(i) implies that $\Pi(+\infty)$ is nondegenerate, and that $W(+\infty)>0$.  Meanwhile,   $W'(\gamma)=-I(\Pi(\gamma))/\gamma^2 \leq 0$ holds almost surely by the envelope theorem. Combining these observations shows that $W(\gamma)\geq W(+\infty)>0$ and, hence, $
\Pi(\gamma)$ is nondegenerate for all $\gamma \in (1/\lambda, +\infty)$, 

\subparagraph{Step 2.} Show that strong duality holds, i.e., $p^*=d^*$. Recall from Step 1 that  $I(\overline\Pi)=\mathcal{L}(\Pi(0), 0)\geq  d^* \geq p^* > 0$, i.e., $d^*$ is a positive, finite number. Also note that $\mathcal{L}(\Pi(\gamma),\gamma)=\gamma W(\gamma)$, and that $W(\gamma)$ is continuous in $\gamma$ by Berge's maximum theorem. From $W(+\infty)>0$, it follows that $\lim_{\gamma \rightarrow +\infty} \mathcal{L}(\Pi(\gamma), \gamma)=+\infty \cdot W(+\infty)=+\infty > d^*$. Thus, $d^*$ must be attained at some finite $\gamma^* \in [0, +\infty)$. 

Define $\Gamma=\{\gamma \geq 0: \Pi(\gamma)=\overline{\Pi}\}$. Note that $\Gamma \supseteq [0, 1/\lambda]$, and that the set is closed because $\Pi(\gamma)$ is the unique solution to  (\ref{eqn_personalized_simplified}) and so is continuous in $\gamma$ by Berge's maximum theorem.   Consider two situations.

$\bullet$ If $V(\overline{\Pi})-\lambda I(\overline{\Pi}) \geq 0$, then $\overline{\Pi}$ is primal feasible, and so $p^* \geq I(\overline{\Pi})=\mathcal{L}(\Pi(0), 0) \geq d^* \geq p^*$ as desired. 

$\bullet$ If $V(\overline{\Pi})-\lambda I(\overline{\Pi})<0$, then $\sup \Gamma <+\infty$. Thus $\Gamma$ is a compact set, hence every open cover $\cup_{\gamma \in \Gamma}\mathcal{B}_{\epsilon}(\gamma)$ of it has a finite subcover,  which we denote by $\Gamma_{\epsilon}$.\footnote{Throughout, $\mathcal{B}_{\epsilon}(x)$ denotes the $\epsilon$-ball surrounding $x$.} By the envelop theorem, $\mathcal{L}(\Pi(\gamma), \gamma)$ is differentiable in $\gamma$ almost surely, and the derivative, whenever it exists, is given by 
\[\frac{d}{d\gamma} \mathcal{L}(\Pi(\gamma), \gamma)=V(\Pi(\gamma))-\lambda I(\Pi(\gamma)). \]
Now, since the right-hand side of the above expression is continuous in $\gamma$, $\mathcal{L}(\Pi(\gamma), \gamma)$ must be differentiable in $\gamma$ everywhere rather than being just absolutely continuous in $\gamma$. Moreover, $\frac{d}{d\gamma}\mathcal{L}(\Pi(\gamma), \gamma)$ must be negative on $\Gamma_{\epsilon}$ when $\epsilon$ is sufficiently small, hence $\mathcal{B}_{\epsilon'}(\gamma) \subseteq  \mathbb{R}_+-\Gamma$ for some $\epsilon'>0$. At $\gamma=\gamma^*$ where $d^*$ is attained, 
\[\frac{d}{d\gamma}\mathcal{L}(\Pi(\gamma), \gamma)\bigg|_{\gamma=\gamma^*}=V(\Pi(\gamma^*); -1)-\lambda I(\Pi(\gamma^*))=0\]
must hold, which together with $0 \notin \mathcal{B}_{\epsilon}(\gamma^*)$ implies that $(\gamma^*, \Pi(\gamma^*))$ satisfies the complementary slackness condition and so is primal feasible. Thus $p^* \geq I(\Pi^*)$, which together with $d^*=\mathcal{L}(\Pi(\gamma^*), \gamma^*)=I(\Pi^*)$ implies that $p^*=d^*$ as desired.

\subparagraph{Step 3.} Show that the primal problem admits a unique solution. From Steps 1 and 2, we know that the primal problem either admits $\overline{\Pi}$ as the unique solution, or any solution to it must take the form of $\Pi(\gamma)$ for some $\gamma>1/\lambda$. In the first case, we are done. In the second case, suppose, to the contrary, that the primal problem admits two distinct solutions: $\Pi^1=\Pi(\gamma^1)$ and $\Pi^2=\Pi(\gamma^2)$, with $\gamma^1 \neq \gamma^2$. Assume w.l.o.g. that $\gamma^1>\gamma^2$, and hence that $\lambda(\gamma^1)=\lambda-1/\gamma^1>\lambda-1/\gamma^2=\lambda(\gamma^2) (>0)$. Since $\Pi^i$ is the unique solution to (\ref{eqn_personalized_simplified}) given $\lambda(\gamma^i)$, 
\[
\lambda(\gamma^1)[I\left(\Pi^1\right)-I(\Pi^2)]>V(\Pi^1; -1)-V(\Pi^2; -1)>\lambda(\gamma^2)[I(\Pi^1)-I(\Pi^2)]
\]
must hold. Therefore $I(\Pi^1)>I(\Pi^2)$, and so $\Pi^2$ cannot be a solution to the primal problem, a contradiction. 

\subparagraph{Step 4.} Show that if the solution to the primal problem differs from $\overline{\Pi}$ (as required by Assumption \ref{assm_regularity}(ii)), then it must be left-skewed, i.e., $\mu_L+\mu_R>0$. 
 From the previous steps, we know that 
 \[V\left((\mu_L, \mu_R); -1\right)=\frac{\mu_R}{\mu_R-\mu_L}\left[-v+
\mu_R\right]^+ \geq \lambda I\left((\mu_L, \mu_R)\right)>0\] and hence that $\mu_R>v$.  $\mu_L+\mu_R \geq 0$ must hold, because if the contrary is true, i.e., $\mu_L+\mu_R<0$, then consuming $(-\mu_R, -\mu_L)$ incurs the same attention cost as consuming $(\mu_L, \mu_R)$ by Assumption \ref{assm_attention}, but the first option generates a strictly higher consumption utility:  
\begin{multline*}
V\left((-\mu_R, -\mu_L); -1\right)
=\frac{\mu_R}{\mu_R-\mu_L}\left(-v-\mu_L\right)
>\frac{-\mu_L}{\mu_R-\mu_L}\left(-v+\mu_R\right)
=V\left((\mu_L, \mu_R); -1\right).
\end{multline*} 
It remains to show that $\mu_L+\mu_R \neq 0$. For starters, recall that if $(\mu_L, \mu_R) \neq (-1,1)$ as required by Assumption \ref{assm_regularity}(ii), then it must solve
\begin{equation}\label{eqn_personalized_relaxed+}
\max_{\substack{(\mu_L, \mu_R) \in \\ \left[-1,0\right] \times \left[v,1\right]}} \frac{-\mu_L}{\mu_R-\mu_L} (-v+\mu_R)-\lambda(\gamma^*)\left[\frac{\mu_R}{\mu_R-\mu_L}h\left(\mu_L\right)-\frac{\mu_L}{\mu_R-\mu_L} h\left(\mu_R\right)\right]
\end{equation}
for some $\gamma>1/\lambda$. There are two cases to consider. 

$\bullet$ If $\mu_R=1$, then $\mu_L \neq -1$ must hold in order to satisfy $(\mu_{L}, \mu_R)\neq (-1,1)$,  and so $\mu_R+\mu_L>1$ as desired.

$\bullet$ If $\mu_R \in (v,1)$, then $\mu_L >-1$ must hold in order to satisfy $\mu_L+\mu_R \geq 0$, and $\mu_L \neq 0$ must hold to prevent the signal from becoming degenerate. Thus $(\mu_L, \mu_R)$ belongs to the interior of $[-1,0] \times [v,1]$ and so is characterized by the following system of first-order conditions:  
\begin{align}
-v+\mu_R &=\lambda(\gamma)\left[\Delta h -h'\left(\mu_L\right)\Delta \mu\right]\label{eqn_foc2} \\
\text{ and } v-\mu_L &=\lambda(\gamma)\left[h'\left(\mu_R\right)\Delta \mu-\Delta h\right]. \label{eqn_foc1}
\end{align}
Now, if $\mu_L+\mu_R=0$, then $\Delta h=0$ and $h'\left(\mu_R\right)=-h'\left(\mu_L\right)$ by Assumption \ref{assm_attention}, so the right-hand sides of (\ref{eqn_foc2}) and (\ref{eqn_foc1}) are the same. Meanwhile, the left-hand sides differ, which leads to a contradiction.


\paragraph{Broadcast case} Due to space constraints, we focus on the case where it is strictly optimal to include all voters in news consumption (i.e., Assumption \ref{assm_regularity}(iii) holds) for now,  and will comment on the remaining cases at the end of this section. Since the following must hold for any nondegenerate signal structure $\Pi$: 
\begin{align*}
&V\left(\Pi; -1\right)=\sum_{z \in \mathcal{Z}}\pi_z [-v+\mu_z]^+<\sum_{z \in \mathcal{Z}}\pi_z [\mu_z]^+=  V\left(\Pi; 0\right)\\
\text{ and } &V\left(\Pi; 1\right) =\sum_{z \in \mathcal{Z}}-\pi_z [v+\mu_z]^-<\sum_{z \in \mathcal{Z}}-\pi_z [\mu_z]^- = V\left(\Pi; 0\right), 
\end{align*}
we can ignore centrist voters' participation constraint and formalize the primal problem as follows: 
\begin{equation}\label{eqn_broadcast}
\max_{\mathcal{Z}, \Pi: \{-1,1\}\rightarrow \Delta(\mathcal{Z})} I(\Pi) \text{ subject to } V(\Pi;k)\geq \lambda I(\Pi) \text{ } \forall k\in \{-1,1\}.
\end{equation}
For each $k \in \{-1,1\}$, let $\gamma_k \geq 0$ denote the Lagrange multiplier associated with type $k$ voters' participation constraint,  and define the Lagrangian function as 
\[\mathcal{L}(\Pi, \bm\gamma)=I(\Pi)+\sum_{k\in \{-1,1\}}\gamma_k[V(\Pi; k)-\lambda I(\Pi)],\]
where $\bm\gamma=(\gamma_{-1}, \gamma_1)$.  The primal problem: (\ref{eqn_broadcast}),  can be rewritten as $\sup_{\Pi}\inf_{\bm\gamma \geq \bm0} \mathcal{L}(\Pi, \bm\gamma)$,  whereas the dual problem is $\inf_{\bm\gamma \geq \bm0} \sup_{\Pi}\mathcal{L}(\Pi, \bm\gamma)$. As before, let $p^*$ and $d^*$ to denote the values of the primal problem and dual problem, respectively. Note that $d^* \geq p^*$, and that $p^*>0$ by Assumption \ref{assm_regularity}(i).  The remainder of the proof consists of three steps.  

\subparagraph{Step 1.} Characterize the solution(s) to  $\sup_{\Pi} \mathcal{L}(\Pi, \bm\gamma)$ for each $\bm\gamma \geq \bm 0$. When $\bm\gamma=\bm 0$, $\mathcal{L}(\Pi; \bm\gamma)=I(\Pi)$,  so the solution to $\sup_{\Pi} \mathcal{L}(\Pi, \bm\gamma)$ is $\overline{\Pi}$. For each $\bm \gamma \neq \bm 0$, define (i) $\bm{\delta}(\bm{\gamma})\coloneqq (\delta_{-1}(\bm\gamma), \delta_1(\bm\gamma))$, where $\delta_k(\bm\gamma)\coloneqq \gamma_k(\gamma_{-1}+\gamma_1)^{-1}$ $\forall k  \in \{-1,1\}$, (ii) $\lambda(\bm{\gamma})\coloneqq\lambda-(\gamma_{-1}+\gamma_{1})^{-1}$, as well as (iii) 
\[W(\Pi, \bm\gamma)\coloneqq \bm{\delta}(\bm\gamma) \cdot \bm{V}(\Pi)-\lambda(\bm{\gamma}) I(\Pi)\]
where $\bm{V}(\Pi)\coloneqq (V(\Pi; -1), V(\Pi; 1))$. 
Then
$\mathcal{L}(\Pi, \bm\gamma)=(\gamma_{-1}+\gamma_1) W(\Pi, \bm\gamma)$, and the problem $\sup_{\Pi} \mathcal{L}(\Pi, \bm\gamma)$ is equivalent to
\begin{equation}\label{eqn_broadcast_simplified}
\sup_{\Pi} W(\Pi, \bm\gamma).
\end{equation}
If $\lambda(\bm\gamma)\leq 0$, then the solution to (\ref{eqn_broadcast_simplified}) is again $\overline{\Pi}$. If $\lambda(\bm{\gamma})>0$, then the maximand of (\ref{eqn_broadcast_simplified}) is
\[
\sum_{z \in \mathcal{Z}} \pi_z  \underbrace{\left[\delta_{-1} [-v+\mu_z]^+ - \delta_{1}[v+\mu_z]^- -\lambda(\bm\gamma) h(\mu_z)\right]}_{f(\mu_z)},
\]
where $f$ is the maximum of three strictly concave functions of $\mu$: $\delta_{-1}\left(-v+\mu\right)-\lambda(\bm\gamma) h\left(\mu\right)$, $-\lambda(\bm\gamma) h\left(\mu\right)$, and $-\delta_{1} \left(v+\mu\right)-\lambda(\bm\gamma) h\left(\mu\right)$, as depicted in Figure \ref{fig_uf}. Let $f^+$ denote the concave closure of $f$, and note that $\mu_1 \coloneqq \inf\left\{\mu: f^+\left(\mu\right)>f\left(\mu\right)\right\}$ and $\mu_2 \coloneqq \sup\left\{\mu: f^+\left(\mu\right)>f\left(\mu\right)\right\}$ are finite and satisfy $\mu_1<0<\mu_2$. There are three cases to consider. 
\begin{figure}[!h]
\centering
\begin{subfigure}{.5\linewidth}
  \includegraphics[height=5cm]{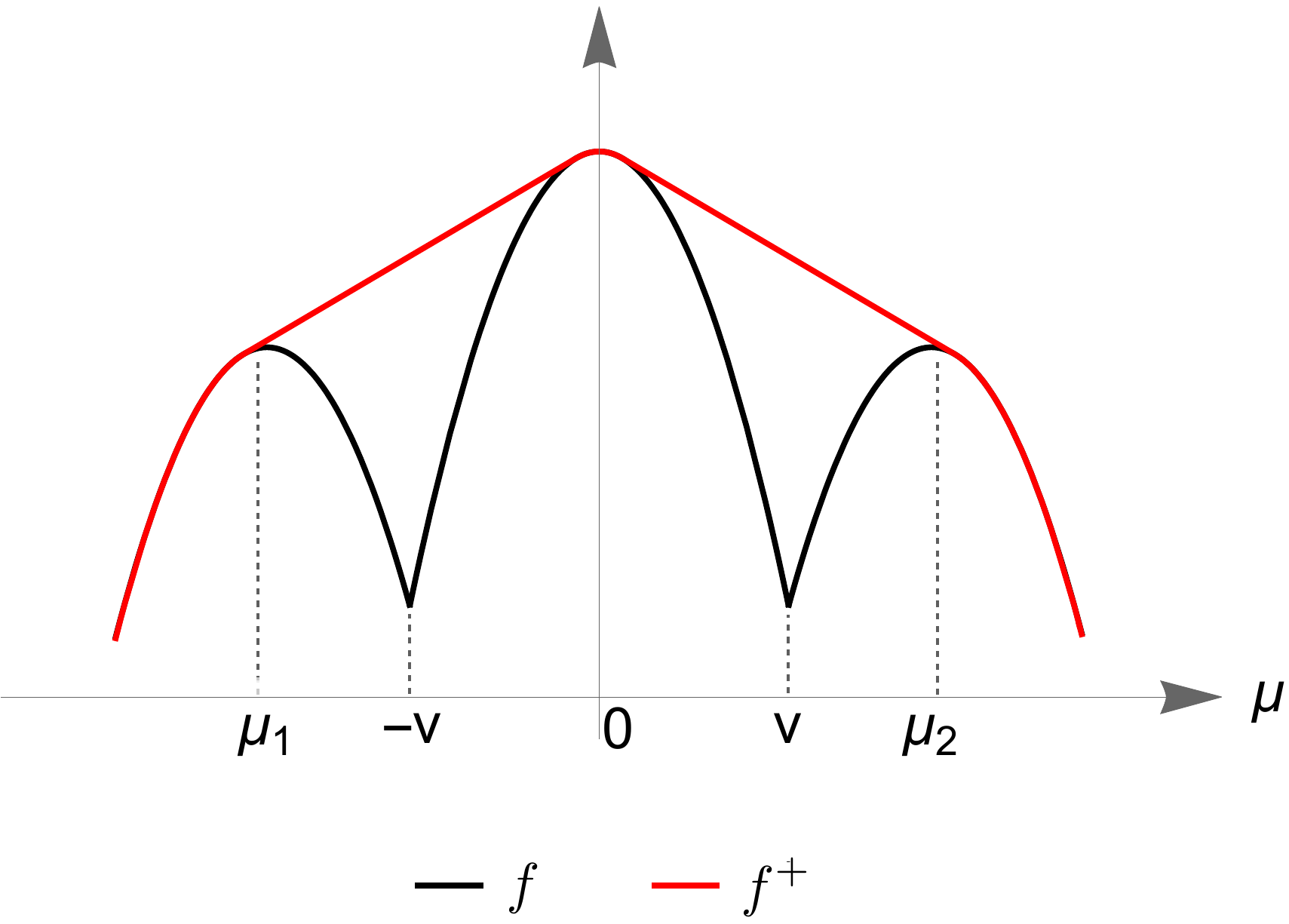}
  \caption{}
\end{subfigure}%
\begin{subfigure}{.5\textwidth}
  \includegraphics[height=5cm]{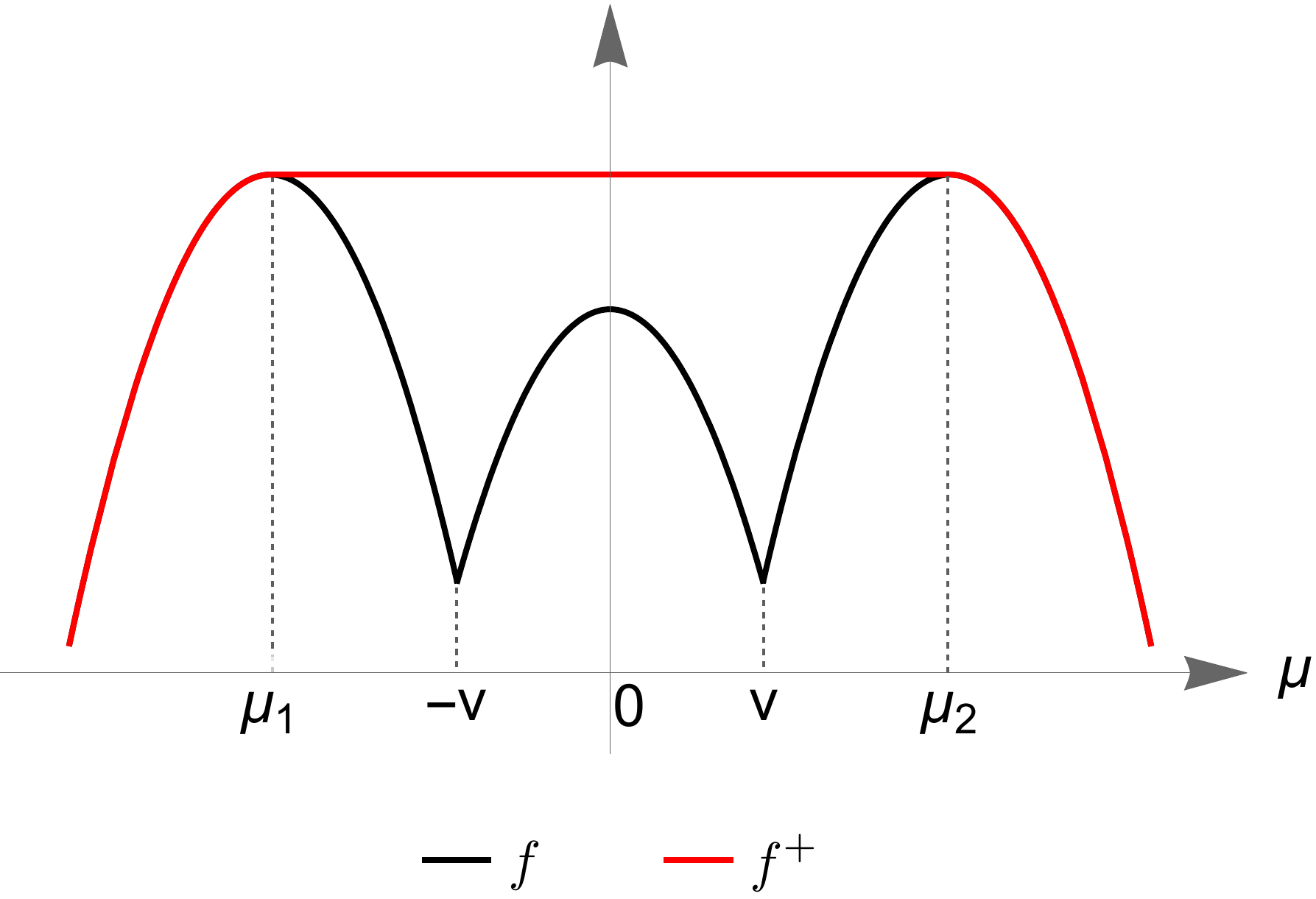}
  \caption{}
\end{subfigure}
\\
\begin{subfigure}{.5\textwidth}
  \includegraphics[height=5cm]{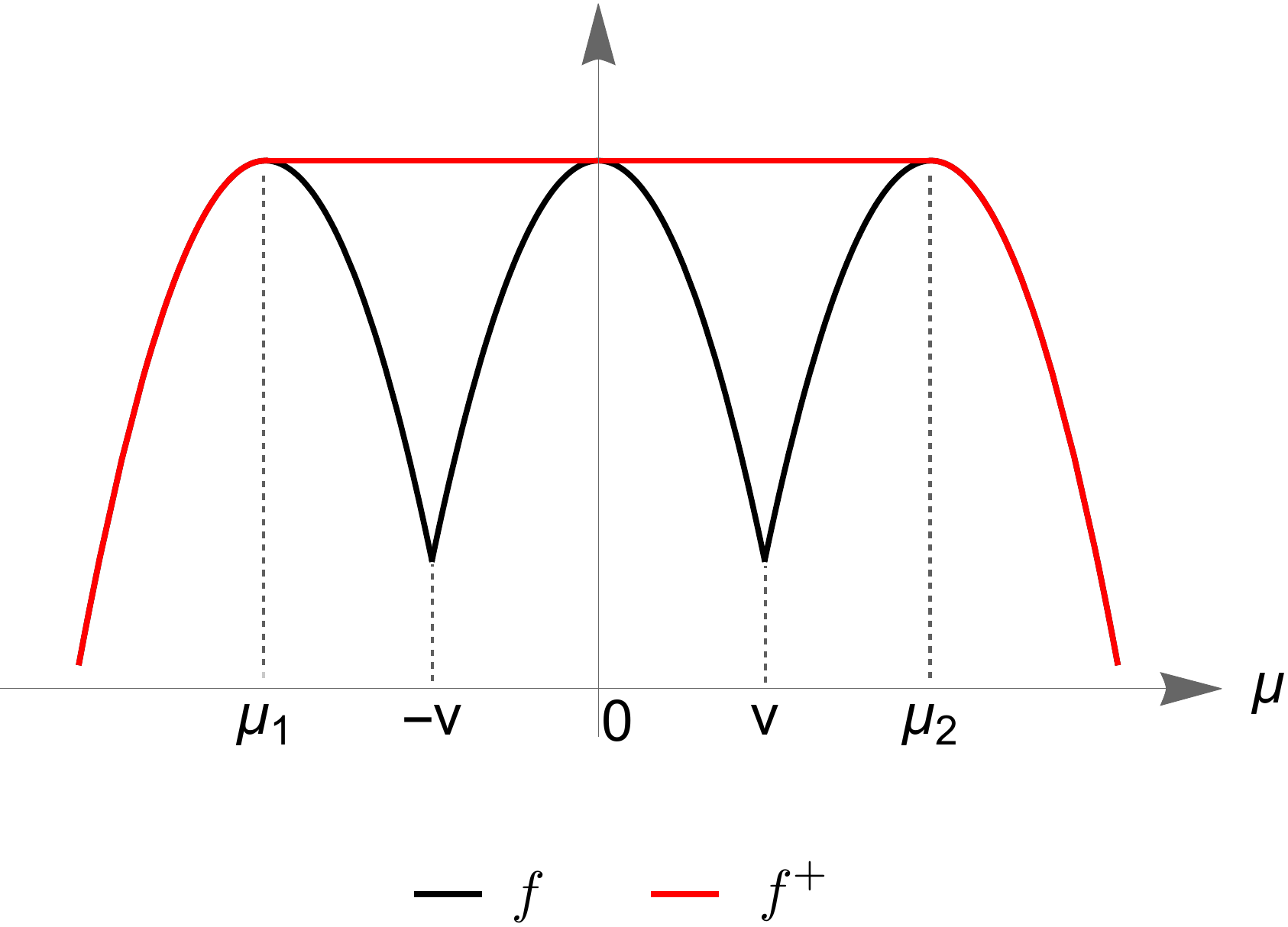}
  \caption{}
\end{subfigure}
\caption{$f$ and its concave closure $f^+$ in the broadcast case.}
\label{fig_uf}
\end{figure}
\begin{enumerate}[(a)]
\item If $f^+\left(0\right) > (1-\alpha) f^+(\mu_1)+\alpha f^+(\mu_2)$ for all $\alpha \in [0,1]$, then the unique solution to (\ref{eqn_broadcast_simplified}) is the degenerate signal $(0, 0)$ (as depicted on Panel (a) of Figure \ref{fig_uf}). 

\item If $f^+\left(0\right)=(1-\alpha) f^+(\mu_1)+\alpha f^+(\mu_2)>f(0)$ for some $\alpha \in [0,1]$, then the unique solution to (\ref{eqn_broadcast_simplified}) is the binary signal $(\mu_1, \mu_2)$ (as depicted on Panel (b) of Figure \ref{fig_uf}). 
\item $f^+\left(0\right)=(1-\alpha) f^+(\mu_1)+\alpha f^+(\mu_2)=f(0)$ for some $\alpha \in [0,1]$. In this case, (\ref{eqn_broadcast_simplified}) has multiple solutions that entail at most three signal realizations (as depicted on Panel (c) of Figure \ref{fig_uf}).   Among these solutions, the binary signal $(\mu_1, \mu_2)$ is the most Blackwell-informative, and the degenerate signal $(0,0)$ is the least Blackwell-informative. 
\end{enumerate}

\subparagraph{Step 2.} Show that strong duality holds, i.e., $p^*=d^*$, and that $p^*$ and $d^*$ are attained by a binary signal structure. 

For each $\bm\gamma \geq \bm 0$, let $\Pi(\bm\gamma)$ denote a typical solution to  (\ref{eqn_broadcast_simplified}). By Berge's maximum theorem, the correspondence $\bm\gamma \mapsto \{\Pi(\bm\gamma)\}$ is upperhemicontinuous (uhc), and $W(\bm\gamma)$ is continuous in $\bm\gamma$. Recall, from the previous step, that $I(\overline{\Pi})=\mathcal{L}(\overline{\Pi}, \bm 0) \geq d^*\geq p^*>0$, i.e., $d^*$ is a positive, finite number. To show that $d^*$ is attained by some finite $\bm\gamma^*$, suppose that the contrary is true, i.e., there exists a sequence $(\bm\gamma^n, \Pi(\bm\gamma^n))_n$ such that $\|\bm\gamma^n\| \rightarrow +\infty$ ($\|\cdot\|$ denotes the sup norm) and $\lim_{n \rightarrow +\infty}\mathcal{L}(\Pi(\bm\gamma^n), \bm\gamma^n) =d^*$. Since $\bm\delta (\bm\gamma^n) \in LS \coloneqq \{\bm\delta \in [0,1]^2: \delta_{-1}+\delta_1=1\}$ $\forall n$, there exists a subsequence $(\bm\gamma^{n_m})_{n^m}$ such that $\lim_{n^m \rightarrow +\infty}\bm\delta(\bm\gamma^{n_m})=\bm\delta'\in LS$. In the limit, (\ref{eqn_broadcast_simplified}) becomes the optimal information acquisition problem faced by a representative voter who weighs the two extreme voters by $\bm\delta'$ and faces $\lambda$ as the attention cost parameter, i.e., $\max_{\Pi} \bm\delta' \cdot \bm{V}(\Pi)-\lambda I(\Pi)$. Under Assumption \ref{assm_regularity}(i), the value of this optimization problem is strictly positive. But then 
$d^*=\lim_{n^m \rightarrow +\infty} \mathcal{L}(\Pi(\bm\gamma^{n_m}), \bm\gamma^{n_m})=\lim_{n^m \rightarrow +\infty}(\gamma_{-1}^{n_m}+\gamma_1^{n_m}) W(\Pi(\bm\gamma^{n_m}), \bm\gamma^{n_m})=+\infty > d^*$, which is impossible.

Define $\Gamma=\{\bm\gamma \geq \bm 0: \overline{\Pi} \in \{\Pi(\bm\gamma)\}\}$. Note that $\Gamma \supseteq \{\bm\gamma \geq \bm 0: \gamma_{-1}+\gamma_1 \leq 1/\lambda\}$, and that the set is closed by the upper hemicontinuity of the correspondence $\bm\gamma \mapsto \{\Pi(\bm\gamma)\}$.  By symmetry, either $V(\overline{\Pi}; k)-\lambda I(\overline{\Pi}) \geq 0$ $\forall k \in \{-1,1\}$, or $V(\overline{\Pi}; k)-\lambda I(\overline{\Pi}) <0$ $\forall k \in \{-1,1\}$.

$\bullet$ In the first case, $\overline{\Pi}$ is the unique solution to the primal problem, and $p^* \geq I(\overline{\Pi}) \geq d^*\geq p^*$ as desired. 

$\bullet$ In the second case,  $\sup\{\|\bm\gamma\|: \bm\gamma \in \Gamma\}<\infty$ must hold, because if the contrary is true, then there exists a sequence $(\bm\gamma^n)_n \in \Gamma$ with $\lim_{n\rightarrow +\infty} \|\bm\gamma^n\|=+\infty$. Let $(\bm \gamma^{n_m})_{n^m}$ be any subsequence of $(\bm\gamma^n)_n$ such that $\delta(\bm\gamma^{n_m})$ converges (to an element $\bm\delta'$ of LS). Then $\lim_{n^m \rightarrow +\infty}\bm\delta (\bm\gamma^{n_m})\cdot \bm{V}(\overline{\Pi})-\lambda (\bm\gamma^{n_m}) I(\overline{\Pi})=\bm\delta'\cdot \bm{V}(\overline{\Pi})-\lambda I(\overline{\Pi}) \geq 0$, which contradicts the fact that $V(\overline{\Pi}, k)-\lambda I(\overline{\Pi})<0$ $\forall k\in \{-1,1\}$. 

The remainder of the proof focuses on the second case, in which $\Gamma$ is compact, and  $\bm\gamma^* \in \mathbb{R}_+^2 -\Gamma$ and so must differ from $\bm 0$. Indeed, we claim that $\bm\gamma^* >\bm 0$, and so $\mathcal{B}_{\epsilon}(\bm\gamma^*) \subseteq \mathbb{R}_+^2-\Gamma$ for some $\epsilon>0$. To prove this claim, all we need to do is to rule out the case where $\gamma_k^*=0$ and $\gamma_{-k}^*>0$ for some $k \in \{-1,1\}$. In that case, $d^*$ is attained by the optimal personalized signal for type $-k$ voters. Since the infomediary could always implement that  signal structure but strictly prefers to include all voters in news consumption, $p^*>d^*$ must hold, which is impossible. 

At $\bm\gamma=\bm\gamma^*$, $W(\bm\gamma^*)=0$ if we end up in the situation depicted on Panel (a) or Panel (c) of Figure \ref{fig_uf}, in which the degenerate signal is a solution to (\ref{eqn_broadcast_simplified}). But then $d^*=(\gamma_1^*+\gamma_{-1}^*)W(\bm\gamma^*)=0<p^*$, which is impossible. The only remaining possibility is depicted on Panel (b) of Figure \ref{fig_uf}, whereby $\{\Pi(\bm\gamma^*)\}$ is a singleton, and $\Pi(\bm\gamma^*)$ is a binary signal.  
Since the mapping $\bm\gamma \mapsto \{\Pi(\bm\gamma)\}$ is uhc, there exists a small neighborhood $\mathcal{B}_{\epsilon}(\bm\gamma^*)$, over which  
$\{\Pi(\bm\gamma)\}$ is a singleton, $\Pi(\bm\gamma)$ is binary, and the mapping $\bm\gamma \mapsto \Pi(\bm\gamma)$ is continuous.  By the envelope theorem, $\mathcal{L}(\Pi(\bm\gamma), \bm\gamma)$ is differentiable in $\bm\gamma$ almost surely, and the derivative, whenever it exists, is given by 
\[\frac{\partial}{\partial \gamma_k} \mathcal{L}(\Pi(\bm\gamma), \bm\gamma)=V(\Pi(\bm\gamma), k)-\lambda I(\Pi(\bm\gamma)) \text{ } \forall k \in \{-1,1\}.\]
Now, since the right-hand side of the above expression is continuous $\bm\gamma$ over $\mathcal{B}_{\epsilon}(\bm\gamma^*)$, $\mathcal{L}(\Pi(\bm\gamma), \bm\gamma)$ must be differentiable in $\bm\gamma$, rather than being just absolutely continuous in $\bm\gamma$, over $\mathcal{B}_{\epsilon}(\bm\gamma^*)$.  Thus at $\bm\gamma=\bm\gamma^*$, 
\[\frac{\partial}{\partial \gamma_k} \mathcal{L}(\Pi(\bm\gamma), \bm\gamma)\bigg|_{\bm\gamma=\bm\gamma^*}=V(\Pi(\bm\gamma^*), k)-\lambda I(\Pi(\bm\gamma^*))=0 \text{ } \forall k \in \{-1,1\}\]
must hold, which together with $\bm\gamma^*>\bm 0$ implies that $(\bm\gamma^*, \Pi(\bm\gamma^*))$ satisfies the complementary slackness condition and is so primal feasible, i.e.,  $p^* \geq I(\Pi (\bm\gamma^*))$. Combining this finding with $d^*=\mathcal{L}(\Pi(\bm\gamma^*), \bm\gamma^*)=I(\Pi(\bm\gamma^*))$ yields $d^*=p^*$ as desired.

\subparagraph{Step 3.} Show that the optimal broadcast signal is unique and, indeed, symmetric. As demonstrated in the previous steps, the solution to the primal problem is either $\overline{\Pi}$, or it takes the form of $\Pi(\bm\gamma^*)$ for some $\bm\gamma^* >\bm 0$.  In the second case, both types of extreme voters have binding participation constraints, and so 
\begin{align*}
\lambda I\left((\mu_L, \mu_R)\right)=V\left((\mu_L, \mu_R); -1 \right)&=-\frac{\mu_L}{\mu_R-\mu_L}[-v+\mu_R]^+\\
&=V\left((\mu_L, \mu_R); 1\right)=-\frac{\mu_R}{\mu_R-\mu_L}[v+\mu_L]^-.
\end{align*}
Simplifying the above expression yields $\mu_L=-\mu_R$, so $(\mu_L, \mu_R)$  is symmetric, and $\mu_L$ solves
\begin{equation}\label{eqn_musymmetry}
\max_{\mu\in [-1,0]} h\left(\mu\right) \text{ s.t. } \frac{1}{2}\left[-v-\mu\right]^+ \geq \lambda h\left(\mu\right).
\end{equation}
Since $h$ is strictly convex and strictly decreasing on $\left[-1, 0\right]$, (\ref{eqn_musymmetry}) admits a unique solution (draw a picture yourself). 

\paragraph{Remarks} First, it is easy to see that any optimal signal must induce strict obedience from its consumers. No more proof is required than the verbal argument made in the main text. 

Second, in the case where it is optimal to induce some, but not all voters to consume news, the only possibilities are: (i) include only the centrist voters in news consumption; (ii) include centrist voters and one type of extreme voters in news consumption.  Regardless of which situation we end up with,  only one type of voters has a binding participation constraint,  and the optimal broadcast signal constitutes the optimal personalized signal for these voters.

Finally,  while solving the model in closed form is in general challenging (due to the endogeneity of the Lagrange multipliers associated with voters' participation constraints), progresses can be made by restricting attention to the case of quadratic attention cost.  In that case,  the beliefs induced by the solutions to  (\ref{eqn_personalized_simplified}) and (\ref{eqn_broadcast_simplified}) are 
\[\begin{cases}
(4\min\{a, t(1)\}-\frac{1}{2\lambda}, \frac{1}{2\lambda}) &\text{ if } k=-1,\\
(-\frac{1}{2\lambda}, \frac{1}{2\lambda}) & \text{ if } k=0,\\
(-\frac{1}{2\lambda}, \frac{1}{2\lambda}-4\min\{a, t(1)\}) & \text{ if } k=1,
\end{cases}\]
and 
\[(-\frac{1+\sqrt{1-16\lambda \min\{a, t(1)\}}}{4\lambda},\frac{1+\sqrt{1-16\lambda \min\{a, t(1)\}}}{4\lambda}), \]
respectively, when the underlying policy profile is $(-a,a)$.  These beliefs take values in $(-1,0) \times (0,1)$ if and only if $\lambda> 1/2$ and $\lambda t(1) <  1/16=.0625$. Further algebra shows that it is strictly optimal to include all voters in news consumption under broadcast news aggregation if and only if $\lambda t(1)< 3\sqrt{2}/4-1\approx .060$.  These results are used in Example \ref{exm_quadratic} to examine the comparative statics of equilibrium policies.  \qed

\subsection{Proofs for Sections \ref{sec_policy} and \ref{sec_cs}}\label{sec_proof_policy}
\paragraph{Proof of  Lemma \ref{lem_maxmin}} 
In the case of personalized news aggregation with $q(0) \leq 1/2$,  consider any unilateral deviation of candidate $R$ from a policy profile $(-a,a)$ with $a \geq 0$  to $a'$. Below we demonstrate that the deviation is unprofitable if and only if (i) $a' \notin \left[-a,a\right]$, or (ii) $a' \in [-a,a)$, and it doesn't attract any type $k$ voter with $t(k) \in [-a,a]$.

\paragraph{Step 1.} Show that no $a' >a$ strictly increases candidate $R$'s winning probability. Fix any $a'>a$. From Observation \ref{obs_utility} \textbf{inverted V-shape} and (\ref{eqn_ob}),  it follows that neither left-leaning voters nor centrist voters would find $a'$ attractive, i.e., $\forall k \leq 0$, 
\[
v\left(-a,a', k\right)+\mu_L^{p}\left(a,k\right)<v\left(-a,a,k\right)+\mu_L^{p}\left(a,k\right)<0.
\]
Given this, as well as the symmetry of the joint signal distribution, it suffices to show that if $a'$ attracts right-leaning voters, then it must repel left-leaning voters, i.e., \[v\left(-a,a',1\right)+\mu_L^{p}\left(a,1\right)>0\Longrightarrow v\left(-a,a',-1\right)+\mu_R^{p}\left(a,-1\right)<0.\]
Our argument exploits the symmetry of marginal signal  distributions, i.e., $\mu_R^{p}\left(a,-1\right)=-\mu_L^{p}\left(a,1\right)$, which together with Observation \ref{obs_utility} \textbf{symmetry} implies that 
\begin{align*}
v\left(-a,a', -1\right)+\mu_R^{p}\left(a,-1\right)&\coloneqq u\left(a', -1 \right)-u\left(-a, -1\right)+\mu_R^{p}\left(a, -1\right)\\
&=u\left(-a',1 \right)-u\left(a,1\right)-\mu_L^{p}\left(a, 1\right). 
\end{align*}
Thus if $v\left(-a,a',1\right)+\mu_L^{p}\left(a,1\right)\coloneqq u\left(a',1\right)-u\left(-a,1\right)+\mu_L^{p}\left(a,1\right)>0,$
then
\begin{multline*}
v\left(-a,a', -1\right)+\mu_R^{p}\left(a,-1\right)
=u\left(-a',1\right)-u\left(a,1\right)-\mu_L^{p}\left(a, 1\right)\\
<u\left(a',1\right)+u\left(-a',1\right)-\left[u\left(a,1\right)+u\left(-a,1\right)\right]
\leq 0,
\end{multline*}
where the last inequality follows from Observation \ref{obs_utility} \textbf{concavity}. 

\paragraph{Step 2.} Show that no $a'<-a$ strictly increases candidate $R$'s winning probability. The proof closely parallels that in Step 1. First, note that no $a'<-a$ attracts centrist or right-leaning voters by Observation  \ref{obs_utility} \textbf{inverted V-shape}, i.e., $\forall k \geq 0$,  
\[v\left(-a,a',k\right)+\mu_L^p\left(a,k\right) < v\left(-a,-a,k\right)+\mu_L^p\left(a,k\right)=0+\mu_L^p\left(a,k\right)<0.\] Second, if any $a'$ as above  attracts left-leaning voters, then it must repel right-leaning voters for the reason given in Step 1. Combining these observations gives the desired result. 

\paragraph{Step 3.} Show that no $a' \in [-a,a)$ repels any voter. Fix any $a'$ as such. From Observation \ref{obs_utility} \textbf{inverted V-shape} and (\ref{eqn_ob}), it follows that if $t\left(k\right) \leq a' (<a)$, then
\[v\left(-a,a', k\right)+\mu_R^{p}\left(a,k\right)> v\left(-a,a,k\right)+\mu_R^{p}\left(a,k\right)>0,\]
and if $t\left(k\right)>a'( \geq -a)$, then 
\[v\left(-a,a',k\right)+\mu_R^{p}\left(a,k\right)\geq v\left(-a,-a,k\right)+\mu_R^{p}\left(a,k\right)=0+\mu_R^{p}\left(a,k\right)>0.\]
Combining these observations yields $v\left(-a,a',k\right)+\mu_R^{p}\left(a,k\right)>0$ for any $k$.

\paragraph{Step 4.} Show that no $a' \in [-a,a)$ attracts any voter with $t(k) \notin \left[-a,a\right]$. Fix any $a'$ as such. From Observation  \ref{obs_utility} \textbf{inverted V-shape} and  (\ref{eqn_ob}), it follows that if $t\left(k\right)< -a (\leq a')$, then 
\[v\left(-a,a',k\right)+\mu_L^{p}\left(a,k\right)\leq v\left(-a,-a, k\right)+\mu_L^{p}\left(a,k\right)= 0+\mu_L^{p}\left(a,k\right)<0,\]
and if $t\left(k\right)>a(>a')$, then 
\[v\left(-a,a',k\right)+\mu_L^{p}\left(a,k\right)< v\left(-a,a,k\right)+\mu_L^{p}\left(a,k\right)<0.\]
Combining these observations yields $v\left(-a,a',k\right)+\mu_L^{p}\left(a,k\right)<0$ for any $k$.

\bigskip

Taken together, we conclude that a unilateral deviation from $(-a,a)$ strictly increases candidate $R$'s winning probability if and only if it belongs to $[-a,a)$ and attracts any type $k$ voter with $t(k) \in \left[-a,a\right]$. Ruling out such deviations leads us to sustain the original policy profile in an equilibrium. \qed 

\paragraph{Proof of Lemma \ref{lem_xi}} For every segmentation technology $\mathcal{S}\in \{b,p\}$, define type $k$ voters' \emph{susceptibility} to candidate $R$'s deviation from $(-a,a)$ to $a'$ as 
\[\phi^{\mathcal{S}}(-a,a',k)\coloneqq v(-a,a', k)+\mu_L^{\mathcal{S}}(a,k),\]
and note that  $a'$ attracts these voters if and only if $\phi^{\mathcal{S}}(-a,a',k)>0$. 
The attraction-proof set for type $k$ voters, or simply the \emph{$k$-proof set}, is thus $\Xi^{\mathcal{S}}(k)=\{a \geq 0: \phi^{\mathcal{S}}(-a, t(k), k) \leq 0\}$.
To show that the $k$-proof set satisfies the properties described in Lemma \ref{lem_xi}, we exploit the following property of the distance utility function $u\left(a,k\right)=-|t\left(k\right)-a|$: 
\[v(-a,a,k)=\begin{cases}
2\sgn(k)a &\text{ if } 0 \leq a <|t(k)|,\\
2t(k) &\text{ if } a \geq |t(k)|.
\end{cases}
\] 
Substituting this observation into the proofs of Theorems  \ref{thm_binary} and \ref{thm_product} yields 
$\mu_L^b\left(a\right) \equiv \mu_L^b\left(t\left(1\right)\right) \coloneqq  -\upsilon_L^b $ $\forall a \geq t(1)$;  as well as $\mu_L^p\left(a,k\right) \equiv \mu_L^p\left(|t\left(k\right)|,k\right) \coloneqq -\upsilon_L^p\left(k\right)$ $\forall a \geq |t\left(k\right)|$.

\bigskip

\noindent Part (i): Recall that $\mu_L^b\left(a\right)$ is the unique solution to (\ref{eqn_musymmetry}), i.e., 
\[
\max_{\mu\in [-1,0]} h\left(\mu\right) \text{ s.t. } \frac{1}{2}\left[v\left(-a,a,-1\right)-\mu\right]^+ \geq \lambda h\left(\mu\right).\] Solving this problem using (i) 
$h$ is strictly convex and strictly decreasing on $\left[-1,0\right]$ and (ii) $v(-a,a,-1)$ is nonincreasing in $a$, shows that $\mu_L^b(a)$ is nondecreasing in $a$ (draw a picture yourself). As a result, $\phi^b\left(-a,0,0\right)= a+\mu_L^b\left(a\right)$ is strictly increasing in $a$, which, together with $\left.\phi^b\left(-a,0,0\right)\right\vert_{a=0}=\mu_L^b\left(0\right)<0$, implies that the unique root of $\phi^b(-a,0,0)$ is strictly positive, and that $\Xi^b(0)\coloneqq \{a \geq 0: \phi^b(-a,0,0) \leq 0 \} = [0, \text{ unique root of } \phi^b(-a,0,0)].$ In case the  root exceeds $t(1)$, solving it using the fact that $\phi^b\left(-a,0,0\right)=a-\upsilon_L^b$ $\forall a \geq t(1)$ yields $\upsilon_L^b (\coloneqq \xi^b(0))$.

\bigskip

\noindent Part (ii): For $k=1$, notice that $\phi^p\left(-a, t\left(1\right), 1\right)= a+t\left(1\right)+\mu_L^p\left(a, 1\right)=a+t\left(1\right)-\upsilon_L^p\left(1\right)$ $\forall a \geq t\left(1\right)$, and that $\left.\phi^p\left(-a, t\left(1\right),1\right)\right\vert_{a=t\left(1\right)}=v\left(-t\left(1\right), t\left(1\right), 1\right)-\upsilon_L^p(1)<0$ by (\ref{eqn_ob}). Therefore, the maximum root of $\phi^p(-a,t(1), 1)$ equals $-t(1)+\upsilon_L^p(1) (\coloneqq \xi^p(1))$ and exceeds $t(1)$, and $\Xi^p(1) \coloneqq \{a \geq 0: \phi^p(-a, t(1), 1) \leq 0\}$ satisfies $\Xi^p(1) \cap [t(1), +\infty)=[t(1), \xi^p(1)]. $

The proof for $k=0$ is analogous. For $k=-1$, note that $\phi^p(-a, t(-1), -1)=a-t(1)-\upsilon_L^p(-1)$ $\forall a \geq t(1)$, and that $\left.\phi^p(-a,t(-1), -1)\right\vert_{a=t(1)}=v(-t(1), -t(1), -1)-\upsilon_L^p(-1)=-\upsilon_L^p(-1)<0$. Everything else is the same as above.\qed

\paragraph{Proof of Theorem \ref{thm_main}} 
In the broadcast case,  Lemmas \ref{lem_broadcast} and \ref{lem_xi} together imply  that $\mathcal{E}^{b,q}=\Xi^b(0)=[0, \xi^b(0)]$. The proof for the personalized case with $q(0)>1/2$ is analogous. If $q(0) \leq 1/2$, then 
\[\mathcal{E}^{p, q}=\underbrace{[0, t(1)) \cap \Xi^p(0)}_{A} \cup \underbrace{[t(1), +\infty) \cap \cap_{k \in \mathcal{K}}\Xi^p(k)}_{B}\]
by Lemma \ref{lem_maxmin}. Consider two cases. First, if $\xi^p(0)<t(1)(<\xi^p(\pm 1))$, then A $=[0, \xi^p(0)]$ by Lemma \ref{lem_xi} and B $=\emptyset$. Second, if  $\xi^p(0) \geq t(1)$, then A $=[0, t(1))$ and B  $= [t(1), \min_{k \in \mathcal{K}} \xi^p(k)]$. In both cases, $\mathcal{E}^{p,q}=A\cup B=[0, \min_{k \in \mathcal{K}}\xi^p(k)]$ as desired.  \qed

\paragraph{Proof of Proposition \ref{prop_lambda}} 

It suffices to show that optimal signals become less Blackwell-informative as we raise the attention cost parameter from $\lambda'$ to $\lambda''$. For the broadcast case, the result follows from the fact that the optimal signal is symmetric and, hence, becomes less Blackwell-informative as we raise $\lambda$. For the personalized case, we only prove the result for left-leaning voters. 

Let $\gamma(\lambda)$ denote the Lagrange multipliers associated with voters' participation constraint when the attention cost parameter is given by $\lambda$, and write $\beta(\lambda)$ for $\lambda-1/\gamma(\lambda)$. When proving Theorem \ref{thm_product}, we demonstrated that the optimal personalized signal is pinned down by the first-order conditions (\ref{eqn_foc2}) and (\ref{eqn_foc1}). Summing up  (\ref{eqn_foc2}) and (\ref{eqn_foc1}) yields
\begin{equation}\label{eqn3}
h'\left(\mu_R\right)-h'\left(\mu_L\right)=1/\beta(\lambda),
\end{equation}
and using this result to simplify the total derivative of  (\ref{eqn_foc1}) w.r.t. $\beta(\lambda)$  yields 
\begin{align*}
\frac{d\mu_L}{d\beta(\lambda)}=&\Delta h-h'\left(\mu_R\right)\Delta \mu+\beta(\lambda)\begin{bmatrix}
\cancel{h'\left(\mu_R\right)\frac{d\mu_R}{d\beta(\lambda)}}-h'\left(\mu_L\right)\frac{d\mu_L}{d\beta(\lambda)}-h''\left(\mu_R\right)\frac{d\mu_R}{d\beta(\lambda)}\Delta\mu \\
-\cancel{h'\left(\mu_R\right)\frac{d\mu_R}{d\beta(\lambda)}} +h'\left(\mu_R\right)\frac{d\mu_L}{d\beta(\lambda)}
\end{bmatrix}\\
=&\Delta h-h'\left(\mu_R\right)\Delta \mu-\beta(\lambda) h''\left(\mu_R\right)\frac{d\mu_R}{d\beta(\lambda)}\Delta\mu+\frac{d\mu_L}{d\beta(\lambda)},
\end{align*} 
where $\Delta \mu \coloneqq \mu_R-\mu_L$ and $\Delta h \coloneqq h(\mu_R)-h(\mu_L)$. Therefore, 
\begin{align}\label{eqn4}
\frac{d\mu_R}{d\beta(\lambda)}=\frac{\Delta h-h'\left(\mu_R\right)\Delta \mu}{\beta(\lambda) h''\left(\mu_R\right)\Delta \mu}=\frac{h(\mu_R)-h(|\mu_L|)-h'\left(\mu_R\right)\Delta \mu}{\beta(\lambda) h''\left(\mu_R\right)\Delta \mu}<0,
\end{align}
where the last inequality holds because $h$ is symmetric around zero, $h'>0$ on $[0,1]$, $h''>0$, and $\Delta \mu>0$, hence the numerator of (\ref{eqn4}) is bounded above by $h'\left(\mu_R\right)\left(\mu_R-|\mu_L|-\Delta \mu\right)=-2h'\left(\mu_R\right)|\mu_L|<0$ if $\mu_R>|\mu_L|$ and by $-h'\left(\mu_R\right)\Delta \mu<0$ if $\mu_R\leq |\mu_L|$.  Meanwhile, differentiating (\ref{eqn3}) with respect to $\beta(\lambda)$ yields
\[h''\left(\mu_L\right)\frac{d\mu_L}{d\beta(\lambda)}=h''\left(\mu_R\right)\frac{d\mu_R}{d\beta(\lambda)}+\frac{1}{\beta(\lambda)^2},\]
and simplifying the above expression using (\ref{eqn4}) yields 
\begin{equation}\label{eqn5}
\frac{d\mu_L}{d\beta(\lambda)}=\frac{\Delta h-h'\left(\mu_L\right)\Delta \mu}{\beta(\lambda) h''\left(\mu_L\right) \Delta \mu}>0.
\end{equation}
Together, (\ref{eqn4}) and (\ref{eqn5}) imply that the optimal personalized signal becomes less Blackwell-informative as $\beta(\lambda)$ increases. 

We next show that $\beta(\lambda)$ increases as we raise $\lambda$ from $\lambda'$ to $\lambda''$. Suppose, to the contrary, that $\beta\left(\lambda''\right)\leq \beta\left(\lambda'\right)$, and let $\Pi'$ and $\Pi''$ denote the optimal personalized signals when the attention cost parameter is given by $\lambda'$ and $\lambda''$, respectively. From the previous step, we know that if $\beta(\lambda'')<\beta(\lambda')$, then $\Pi''$ is more Blackwell-informative than $\Pi'$ and so must satisfy  $I\left(\Pi''\right)>I\left(\Pi'\right)>0$. Then from 
$V\left(\Pi'; -1 \right)=\lambda'  I\left(\Pi'\right)$ and $V\left(\Pi'';-1 \right)=\lambda''  I\left(\Pi''\right)$, 
it follows that 
$V\left(\Pi''; -1 \right)-\lambda'  I\left(\Pi''\right)>0=V\left(\Pi'; -1 \right)-\lambda'  I\left(\Pi'\right),$
which together with $I\left(\Pi''\right)>I\left(\Pi'\right)$ implies that $\Pi'$ is not optimal when $\lambda=\lambda'$, a contradiction. Meanwhile, if $\beta\left(\lambda'\right)=\beta\left(\lambda''\right)$, then $
\Pi'=\Pi''$ as demonstrated in the previous step. But then $V\left(\Pi'; -1\right)=\lambda' I\left(\Pi'\right)<\lambda'' I\left(\Pi''\right)=V\left(\Pi''; -1\right)$, which contradicts $\Pi'=\Pi''$. \qed

\section{Numerical solutions}\label{sec_numerical}
In this appendix, we solve the baseline model entropy attention cost.  We first reduce Assumption \ref{assm_regularity} to model primitives. The results depicted in Figure \ref{figure_condition1} are consistent with the intuition discussed in Section \ref{sec_news}, namely when the attention cost parameter is intermediate and extreme voters' policy preferences are moderate, it is optimal to include all voters in nontrivial news consumption, although revealing the true state to them would tune them out. 

\begin{figure}[!h]
\centering
     \includegraphics[width=6cm]{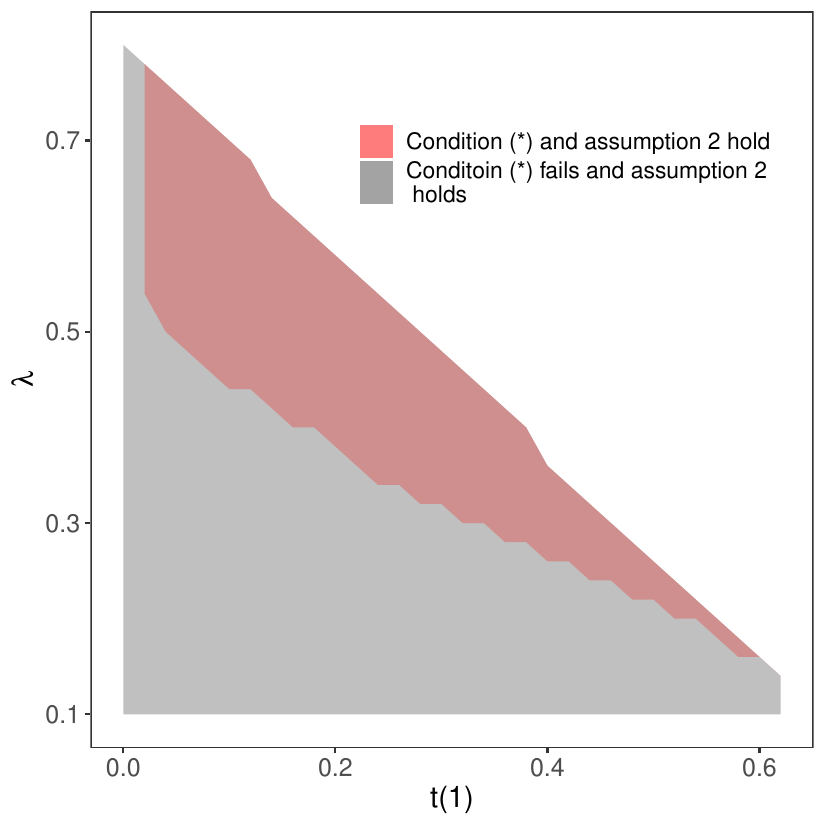}
\caption{Assumption \ref{assm_regularity} and Condition \eqref{eqn_ast}: entropy attention cost, uniform population distribution.}\label{figure_condition1}
\end{figure} 

We next solve for the primitives that render the base voters of candidate $R$ (i.e., right-leaning voters)  disciplining under personalized news aggregation. As demonstrated in Section \ref{sec_cs}, the last situation happens if and only if extreme voters' personalized signals are sufficiently skewed that the beliefs induced by the occasional big surprise and predisposition reinforcement differ by a significant amount:  
\[\tag{$\ast$}\upsilon_L^p\left(1\right)-\upsilon_R^p\left(1\right)>2t\left(1\right).\]
As depicted in Figure \ref{figure_condition1}, Condition (\ref{eqn_ast}) is most likely to hold when the attention cost parameter $\lambda$ is high and extreme voters' policy preference parameter $t(1)$ is large. The finding concerning the policy  preference parameter $t(1)$ is quite intuitive. As for the attention cost parameter $\lambda$, note that as paying attention becomes more costly, the infomediary makes right-leaning  voters' signal less Blackwell-informative in order to prevent them from tuning out. During that process, she is reluctant to cut back $\upsilon_L^p\left(1\right)$,  the occasional big surprise that makes news consumption valuable to these voters. Instead, she reduces $\upsilon_R^p\left(1\right)$ significantly, which causes the left-hand side of Condition (\ref{eqn_ast}) to increase. 

We finally solve for the primitives that increase policy polarization as news aggregation becomes personalized. As demonstrated in Section \ref{sec_cs}, the last situation happens if and only if 
\[\tag{$\ast \ast$}\xi^b\left(0\right)<\min_{k \in \mathcal{K}} \xi^p(k).\label{eqn_doubleast}\]
For all parameter values we've tried, only extreme voters can be disciplining, leaving us with two sub-cases to consider. 

\bigskip

\noindent \emph{Base voters are disciplining.} In this case, Condition \eqref{eqn_ast} fails, and Condition \eqref{eqn_doubleast} becomes $ \upsilon_L^p\left(1\right)-\upsilon_L^b >t\left(1\right)$. As depicted in Figure \ref{figure_condition2a}, the last condition is most likely to hold when the attention cost parameter $\lambda$ and extreme voters' policy preference parameter $t(1)$ are both large.  As $\lambda$ increases (hence paying attention becomes more costly), the infomediary makes signals less Blackwell-informative in order to prevent voters from tuning out. In the personalized case, she can reduce $\upsilon_R^p\left(1\right)$ significantly while keeping $\upsilon_L^p\left(1\right)$ almost unchanged in order to make news consumption still useful for right-leaning voters. Such flexibility is absent in the broadcast case, where the two posterior beliefs $\upsilon_L^b$ and $\upsilon_R^b$ must be reduced by the same magnitude. As a result, $\upsilon_L^p\left(1\right)-\upsilon_L^b$ increases, which makes Condition \eqref{eqn_doubleast} easier to satisfy. 

\begin{figure}[!h]
\centering
\includegraphics[width=6cm]{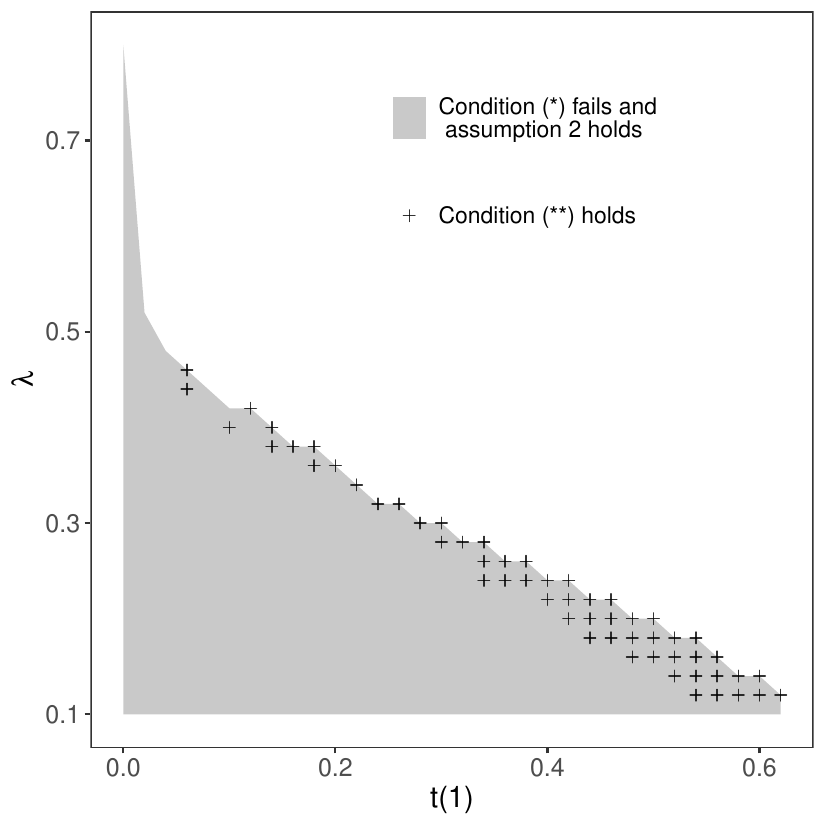}
    \caption{Condition \eqref{eqn_doubleast}: entropy attention cost, uniform population distribution, Condition \eqref{eqn_ast} fails.}\label{figure_condition2a}
\end{figure}

As for the effects of strengthening voters' policy preferences, note that as $t\left(1\right)$ increases, extreme voters find news consumption less useful, so the broadcast signal must become less Blackwell-informative in order to prevent them from tuning out, i.e., $\upsilon_L^b$ must decrease. In the meantime, $\upsilon_L^p\left(1\right)$ should increase, because to convince right-leaning voters to vote for candidate $L$ requires a bigger occasional surprise than before. Thus $\upsilon_L^p\left(1\right)-\upsilon_L^b$ increases, which relaxes Condition \eqref{eqn_ast} when $t\left(1\right)$ is sufficiently large.

\bigskip

\noindent \emph{Opposition voters are disciplining.} In this case,  Condition \eqref{eqn_ast} holds, and so Condition \eqref{eqn_doubleast} becomes $|t\left(-1\right)|>\upsilon_L^b-\upsilon_L^p\left(-1\right)$. As depicted in Figure \ref{figure_condition2b}, the last condition is most likely to hold  when $t(1)$ is large, and so extreme voters have strong policy preferences. As $t\left(1\right)$ increases (hence $t(-1)$ becomes more negative), left-leaning voters seek a bigger occasional surprise from news  consumption than before, so $\upsilon_R^p\left(-1 \right)$ should increase. Also since they find news consumption less useful, $\upsilon_L^p\left(-1\right)$ must decrease significantly to prevent them from tuning out. Meanwhile in the broadcast case, $\upsilon_L^b$ ($=\upsilon_R^b$) must decrease to prevent extreme voters from tuning out. When $t\left(1\right)$ is sufficiently large, the right-hand side of Condition \eqref{eqn_doubleast} is close to zero whereas the left-hand side of it is big, which explains the pattern depicted in Figure \ref{figure_condition2b}. 

\bigskip

 \begin{figure}[!h]
\centering
     \includegraphics[width=6cm]{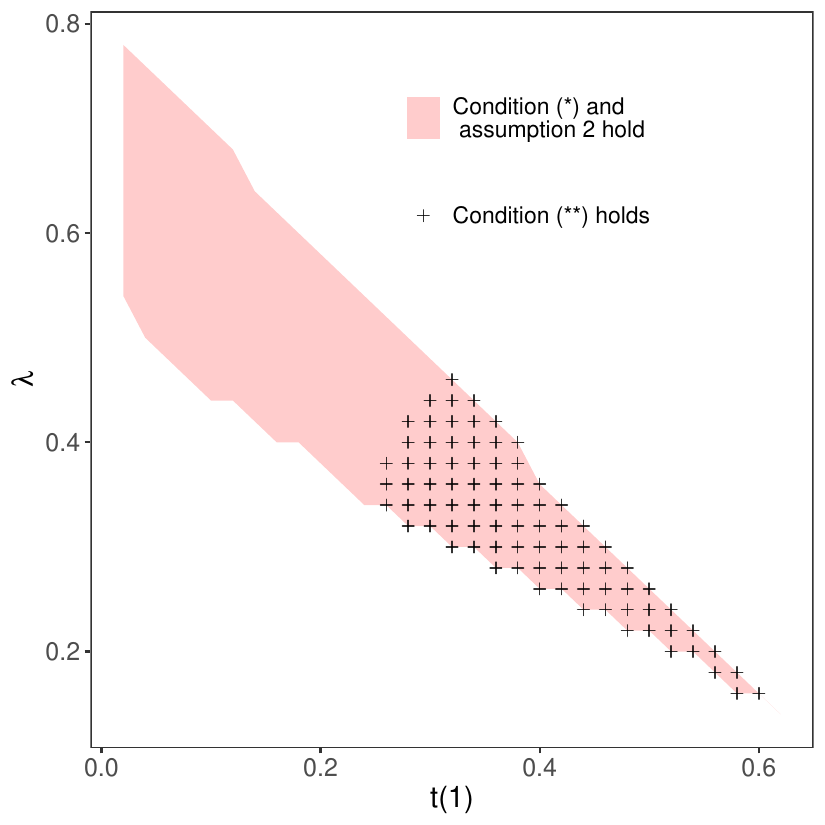}
\caption{Condition \eqref{eqn_doubleast}: entropy attention cost, uniform population distribution, Condition \eqref{eqn_ast} holds.}\label{figure_condition2b}
\end{figure}

\cleardoublepage

    \vspace*{16em}
    \begin{center}
        \Huge \textbf{Online Appendices \\(For Online Publication Only)}
    \bigbreak
    \end{center}

\thispagestyle{empty}
\cleardoublepage

\appendix
\setcounter{section}{0}
\setcounter{page}{1}
\gdef\thesection{O.\arabic{section}}

\section{General model}\label{sec_general}
This appendix has two purposes. The first purpose is to extend the baseline model to general voters. Throughout, suppose that candidates can adopt the policies in a compact interval $\mathcal{A}=[-\overline{a}, \overline{a}]$, where $\overline{a}$ is finite but large. Voters' type space is a finite set  $\mathcal{K}=\left\{-K,\cdots, 0, \cdots, K\right\}$ with $K \geq 1$. Their population function $q: \mathcal{K} \rightarrow \mathbb{R}_{++}$ has support $\mathcal{K}$ and is symmetric around zero. Their utility function $u: \mathcal{A} \times \mathcal{K} \rightarrow \mathbb{R}$ satisfies the properties listed in Observation \ref{obs_utility}, i.e., 

\begin{assmO}\label{assm_utility}
\begin{description}
\item[Continuity and weak concavity]$u
\left(\cdot, k\right)$ is continuous and weakly concave for any $k \in \mathcal{K}$.

\item[Symmetry] $u\left(a,k\right)=u\left(-a,-k\right)$ for any $a\in \mathbb{R}$ and $k \in \mathcal{K}$.

\item[Inverted V-shape] $u\left(\cdot, k\right)$ is strictly increasing on $[-\overline{a}, t\left(k\right)]$ and is strictly decreasing on $[t\left(k\right), \overline{a}]$ for any $k \in \mathcal{K}$, where $t: \mathcal{K} \rightarrow \mathcal{A}$ is strictly increasing and symmetric around zero. 

\item[Increasing differences] $v(-a,a',k) \coloneqq u\left(a,k\right)-u\left(a',k\right) $ is increasing in $k$ for any $a>a'$. For any $a>0$, $v(-a,a,k) \coloneqq u\left(a,k\right)-u\left(-a,k\right)$ is strictly positive if $k>0$, equals zero if $k=0$, and is strictly negative if $k<0$. 
 \end{description}
\end{assmO}

\begin{obsO}
Theorems \ref{thm_binary} and \ref{thm_product} of the baseline model remain valid under Assumptions \ref{assm_attention}, \ref{assm_regularity}, and \ref{assm_utility}. 
\end{obsO}

\begin{proof}
The proof for the personalized case is the exact same as before. As for the broadcast case, notice that in the current setting, as well as in the baseline model, only voters of the most extreme types can have binding participation constraints, whereas those of interim types must have slack participation constraints. Replacing $k=\pm 1$ with $k=\pm K$ in the proofs of Theorems \ref{thm_binary} and \ref{thm_product} gives the desired result. 
\end{proof}

The second purpose of this appendix is to relax the assumption that signals are conditionally independent across market segments. In what follows, we'll develop new concepts in Appendix \ref{sec_general_concept} and  conduct equilibrium analyses in Appendix \ref{sec_general_result}. 

\subsection{Key concepts}\label{sec_general_concept}
\paragraph{Joint signal distribution} A \emph{joint signal distribution} is  a tuple $(\bm{\chi}, \bf{b}^+, \bf{b}^-)$ of a  \emph{configuration matrix} $\bm{\chi}$ and \emph{probability vectors} ${\bf{b}}^+$ and ${\bf{b}}^-$. The configuration matrix $\bm{\chi}$ has $|\mathcal{K}|$ rows. Each column of it constitutes a profile of the voting recommendations to type $-K,\cdots, K$ voters that occurs with a strictly positive probability. Each entry of $\bm{\chi}$ is either $0$ or $1$, where $0$ means that candidate $R$ is disapproved of, and $1$ means that he is endorsed. For example, the configuration matrix is 
\[\bm{\chi}^{\ast}= \begin{bmatrix}
0 & 1\\
0 & 1\\
\vdots & \vdots\\
0 & 1
\end{bmatrix}\]
if $\mathcal{S}=b$, and it is 
\[\bm{\chi}^{\ast \ast} = \underbrace{\begin{bmatrix}
0 & 1 & 0 & \cdots & 0 & 1 & \cdots & 0 & \cdots & 1\\
0 & 0 & 1 & \cdots & 0 & 1 & \cdots & 0 & \cdots & 1 \\
\vdots & \vdots & \vdots & \cdots & \vdots & \vdots & \cdots & \vdots &\cdots & 1  \\
0 & 0 & 0 & \cdots & 0 & 0 & \cdots & 1 & \cdots & 1 \\
0 & 0 & 0 & \cdots & 1 & 0 & \cdots & 1 & \cdots & 1 
\end{bmatrix}}_\text{$2^{|\mathcal{K}|}$ columns}\]
if $\mathcal{S}=p$ and signals are conditionally independent across voters. The vectors ${\bf{b}}^+$ and ${\bf{b}}^-$ compile the probabilities that each column of $\bm{\chi}$ occurs in states $\omega=1$ and $\omega=-1$, respectively. By definition, all elements of ${\bf{b}}^+$ or ${\bf{b}}^-$ are strictly positive and add up to one.

We consider \emph{symmetric} joint signal distributions that are \emph{consistent} with the marginal signal distributions solved in Section \ref{sec_news}. To formally define symmetry, let ${\bf{x}}$ be a generic voting recommendation profile to type $-K, \cdots, K$ voters, $\bf{1}$ be the $|\mathcal{K}|$-vector of  ones, and \[{\bf{P}}=\left[\begin{matrix}
& & 1\\
& \iddots & \\
1 & &
\end{matrix}\right]\]
be a $|\mathcal{K}| \times |\mathcal{K}|$ permutation matrix. Define the \emph{symmetry operator} $\Sigma$ as 
\[\Sigma  ({\bf{x})} = {\bf{P}} \left({\bf{1}}-{\bf{x}}\right),\] 
so that ${\bf{x}}$ recommends candidate $z \in \left\{L, R\right\}$ to type $k$ voters if and only if $\Sigma ({\bf{x}})$ recommends candidate $-z$ to type $-k$ voters. A joint signal distribution is \emph{symmetric} if the probability that a voting recommendation profile ${\bf{x}}$ occurs in state $\omega=1$ equals the probability that $\Sigma( {\bf{x}})$ occurs in state $\omega=-1$. Formally, 

\begin{defnO}
A configuration matrix $\bm{\chi}$ is \emph{symmetric} if for any \linebreak $m \in \left\{1, \cdots, \#\mathrm{columns}\left(\mathcal{\bm{\chi}}\right)\right\}$, there exists $n\in \left\{1,\cdots, \#\mathrm{columns}\left(\mathcal{\bm{\chi}}\right)\right\}$ such that $\Sigma (\left[\bm{\chi}\right]_m)=\left[\bm{\chi}\right]_n$. A joint signal distribution $(\bm{\chi}, {\bf{b}}^+, {\bf{b}}^-)$ is \emph{symmetric} if $\bm{\chi}$ is symmetric and $\left[{\bf{b}}^+\right]_m=\left[{\bf{b}}^-\right]_n$ for any $m,n$ as above.\footnote{With a slight abuse of notation, we use $\left[\cdot\right]_m$ to denote both the $m^{th}$ entry of a column vector and the $m^{th}$ column of a matrix. $\#\mathrm{columns}\left(\mathcal{\bm{\chi}}\right)$ denotes the number of the columns of $\bm{\chi}$.}
\end{defnO}

Next is our notion of \emph{consistency}. In Footnote \ref{footnote2}, we solved for the marginal probabilities that the signal consumed by type $k$ voters endorses candidate $R$  in states $\omega =1$ and $\omega=-1$, respectively, holding any segmentation technology $\mathcal{S}$ and symmetric policy profile $(-a,a)$ fixed. Compiling these probabilities across type $-K,\cdots, K$ voters  yield two $|\mathcal{K}|$-vectors $\bm{\pi}^{\mathcal{S}, +}\left(a\right)$ and $\bm{\pi}^{\mathcal{S}, -}\left(a\right)$ of marginal probabilities. 

\begin{defnO}[label=defn_consistency]
A joint signal distribution $(\bm{\chi}, {\bf{b}}^+, {\bf{b}}^-)$ is \emph{$(\mathcal{S},a)$-consistent} for some $\mathcal{S} \in \left\{b,p\right\}$ and $a \in \left[0, \overline{a}\right]$ if 
\[
\bm{\chi} {\bf{b}^+} =\bm{\pi}^{\mathcal{S},+}\left(a\right) \text{ and } \bm{\chi} {\bf{b}}^-={\bm \pi}^{\mathcal{S},-}\left(a\right).\]
A configuration matrix $\bm{\chi}$ is \emph{$(\mathcal{S},a)$-consistent}  if there exist probability vectors ${\bf{b}}^+$ and ${\bf{b}}^-$ such that the joint signal distribution $(\bm{\chi}, {\bf{b}}^+, {\bf{b}}^-)$ is \emph{$(\mathcal{S}, a)$-consistent}. $\bm{\chi}$ is \emph{$\mathcal{S}$-consistent} if it is $(\mathcal{S}, a)$-consistent for all $a \in \left[0, \overline{a}\right]$. 
\end{defnO}
 
 By definition, $\bm{\chi}^{\ast}$ is $b$-consistent and, indeed, the only $(b,a)$-consistent configuration for any given $a \in \left[0, \overline{a}\right]$.  $\bm{\chi}^{\ast \ast}$ is $p$-consistent, but it is not the only $p$-consistent configuration in general  (examples are available upon request). 

\paragraph{Attraction-proof set} In the baseline model, we defined several related concepts, including a voter's susceptibility to policy deviations, his attraction-proof set, and his policy latitude.  
We now generalize these concepts to sets of voters. 
 
\begin{defnO}
Under segmentation technology $\mathcal{S} \in \{b,p\}$, a deviation of candidate $R$ from a symmetric policy profile $(-a,a)$ with $a \in \left[0, \overline{a}\right]$  to $a'$ \emph{attracts} a set $\mathcal{D} \subseteq \mathcal{K}$ of voters if it attracts all members of $\mathcal{D}$, i.e., $\phi^{\mathcal{S}}\left(-a,a',k\right)>0$ $\forall k \in \mathcal{D}$. This is equivalent to  \[\phi^{\mathcal{S}}\left(-a,a',\mathcal{D}\right)\coloneqq \displaystyle \min_{k \in \mathcal{D}}\phi^{\mathcal{S}}\left(-a,a',k\right)>0,\] where $\phi^{\mathcal{S}}\left(-a,a',\mathcal{D}\right)$ is the \emph{$\mathcal{D}$-susceptibility} to $a'$ following unfavorable news to candidate $R$'s valence. The \emph{$\mathcal{D}$-proof set} $\Xi^{\mathcal{S}}\left(\mathcal{D}\right)$ gathers all  nonnegative policy $a$'s such that no deviation of candidate $R$ from $(-a,a)$ attracts $\mathcal{D}$, i.e., 
\[\Xi^{\mathcal{S}}\left(\mathcal{D}\right)\coloneqq \left\{a \in \left[0, \overline{a}\right]: \max_{a' \in \mathcal{A}}\phi^{\mathcal{S}}\left(-a,a', \mathcal{D}\right) \leq 0\right\}.\]
The maximum of the $\mathcal{D}$-proof set
\[\xi^{\mathcal{S}}\left(\mathcal{D}\right)\coloneqq \max\Xi^{\mathcal{S}}\left(\mathcal{D}\right)\] 
is $\mathcal{D}$'s \emph{policy latitude}. 
\end{defnO}

\paragraph{Influential coalition}  The next concept  is integral to the upcoming analysis.

\begin{defnO}[label=defn_general_coalition]
Fix any segmentation technology $\mathcal{S} \in \left\{b,p\right\}$,  symmetric policy profile $(-a,a)$ with $a \in \left[0, \overline{a}\right]$, and population function $q$, and let the default be the strictly obedient outcome induced by any joint signal distribution $(\bm{\chi}, {\bf{b}}^+, {\bf{b}}^-)$ that is $(\mathcal{S}, a)$-consistent. A set $\mathcal{C} \subseteq \mathcal{K}$ of voters constitutes an \emph{influential coalition} if attracting $\mathcal{C}$ while holding other things constant strictly increases candidate $R$'s winning probability compared to the default. 
\end{defnO}

By definition, majority coalitions are influential, and supersets of influential coalitions are influential. In the broadcast case, all voters consume the same signal, so a coalition of voters is influential if and only if it is a majority coalition. In the personalized case, non-majority coalitions can be influential due to the imperfect correlation between voters' signals (see Table \ref{table1} for an illustration). 

In principle, influential coalitions can depend on the entire joint signal distribution (and certainly voters' population distribution). The next lemma limits such dependence to the configuration matrix only. 

\begin{lemO}\label{lem_coalition}
Let everything be as in Definition \ref{defn_general_coalition}. Then influential coalitions depend on the joint signal distribution $(\bm{\chi}, {\bf{b}}^+, {\bf{b}}^-)$ and voter's population distribution $q$ only through the pair $(\bm{\chi}, q)$,  and they are independent of the policy profile $(-a,a)$ if $\bm{\chi}$ is $\mathcal{S}$-consistent.
\end{lemO}

To make sense of Lemma \ref{lem_coalition}, let us visualize what happens to the configuration matrix after a candidate commits a policy deviation that attracts some voters without repelling anyone. For any voter who is attracted by the deviation, we turn the corresponding row of voting decisions to all ones. The deviation strictly increases the candidate's winning probability if there exists a column, i.e., a voting recommendation profile, such that after the augmentation of rows, there are enough ones that allow the candidate to win the election in that column.  The probabilities of the columns do not matter, after the configuration matrix and voters' population distribution have been taken into account. The proof presented in Online Appendix \ref{sec_general_proof} formalizes this idea. 

\subsection{Main results}\label{sec_general_result}
The next lemma giving a full characterization of the symmetric policy profiles that can arise in equilibrium, thus extending Lemma \ref{lem_maxmin}  to general voters and joint signal distributions. 

\begin{lemO}[label=lem_general_maxmin]
Fix any pair of segmentation technology $\mathcal{S} \in \left\{b,p \right\}$ and population function $q$, and assume Assumptions \ref{assm_attention}, \ref{assm_regularity}, and \ref{assm_utility}. Then the following are equivalent. 
\begin{enumerate}[(i)]
\item A symmetric policy profile $(-a,a)$ with $a \in \left[0, \overline{a}\right]$ can arise in an equilibrium with a joint signal  distribution $(\bm{\chi}, {\bf{b}}^+, {\bf{b}}^- )$ that is $(\mathcal{S}, a)$-consistent.
\item No deviation of candidate $R$ from $(-a,a)$ to $a' \in [-a,a)$ attracts any influential coalition formed under $(\bm{\chi}, q)$ whose members have ideological bliss points in $\left[-a,a\right]$.
\end{enumerate}
\end{lemO}

\begin{proof}
In the proof of Lemma  \ref{lem_maxmin}, replacing the left-leaning voters (of type $k=1$) with any type $k<0$ voter and the right-leaning voters (of type $k=1$) with any type $k>0$ voter gives the desired result. 
\end{proof}

In what follows, we'll use $\mathcal{E}^{\mathcal{S}, \bm{\chi}, q}$ denote the set of the nonnegative policy $a$'s such that $(-a,a)$ can arise in equilibrium under segmentation technology $\mathcal{S}$, configuration matrix $\bm{\chi}$, and population function $q$. As before, we are interested in the degree of policy polarization $a^{\mathcal{S}, \bm{\chi}, q}$, defined  as the maximum of $\mathcal{E}^{\mathcal{S}, \bm{\chi}, q}$, and whether all policies between zero and $a^{\mathcal{S}, \bm{\chi}, q}$ can arise in equilibrium.  We focus on $\mathcal{S}$-consistent $\bm{\chi}$s, so that $\mathcal{E}^{\mathcal{S}, \bm{\chi}, q}$ can be computed in two steps.\footnote{For an arbitrary $\bm{\chi}$, one needs to check, after Step 2, whether the output policy $a$ is $(\mathcal{S},a)$-consistent with $\bm{\chi}$. } 
\begin{enumerate}
\item Compute the influential coalitions formed under $(\bm{\chi}, q)$.
\item For each $a \in \left[0, \overline{a}\right]$, check if any  deviation as in Lemma \ref{lem_general_maxmin}(ii) attracts any influential coalition whose members have ideological bliss points in $[-a,a)$. If the answer is negative, then add $a$ to the output set. 
\end{enumerate}
In Online Appendix \ref{sec_general_proof},  we express the equilibrium policy set in terms of attraction-proof sets using the above algorithm.  We then argue, as in the baseline model, that the equilibrium policy polarization  must be strictly positive, due to basic properties of voters' preferences and the strict obedience induced by optimal signals. 
To obtain sharper characterizations, we impose the following regularity condition on the susceptibility function.  

\begin{assmO}\label{assm_monotone_a}
$\phi^{\mathcal{S}}\left(-a,a',k\right)$ is increasing in $a$ on $\left[|t\left(k\right)|, \overline{a}\right]$ for any $\mathcal{S} \in \left\{b,p\right\}$, $k \in \mathcal{K}$, and $a' \in \mathcal{A}$. 
\end{assmO}

To grasp the meaning of Assumption \ref{assm_monotone_a}, note that $\phi^{\mathcal{S}}(-a,a',k)$ depends on $a$ only through $-u(-a,k)+\mu_L^{\mathcal{S}}(a,k)$. Since  $-u(-a,k)$ is strictly increasing in $a$ on $[|t(k)|, \overline{a}]$, Assumption \ref{assm_monotone_a} holds if $\mu_L^{\mathcal{S}}(a,k)$ doesn't vary significantly with $a$ on $[|t(k)|, \overline{a}]$. This is indeed the case under the distance utility function, as $\mu_L^{\mathcal{S}}(a,k)\equiv -\upsilon_L^{\mathcal{S}}(k)$ $\forall a \in [|t(k)|, \overline{a}]$ and so is independent of $a$ on that domain. The next lemma provides additional sufficient conditions for Assumption \ref{assm_monotone_a} to hold. 

\begin{lemO}\label{lem_monotone_a}
Under Assumptions \ref{assm_attention}, \ref{assm_regularity}, and \ref{assm_utility}, 
 Assumption \ref{assm_monotone_a} holds if $\mathcal{S}=b$ or if $\mathcal{S}=p$ and either $u\left(a,k\right)=-|t\left(k\right)-a|$ or $h(\mu)=\mu^2$. 
\end{lemO}

With the above preparations, we can now state our main result. 

\begin{thmO}[label=thm_general_main]
Fix any segmentation technology $\mathcal{S} \in \left\{b,p\right\}$, $\mathcal{S}$-consistent configuration matrix $\bm{\chi}$, and population function $q$. Let $\mathcal{C}$ denote a typical influential coalition formed under $(\bm{\chi}, q)$. Under Assumptions \ref{assm_attention}, \ref{assm_regularity}, \ref{assm_utility}, and \ref{assm_monotone_a}, $\mathcal{E}^{\mathcal{S}, \bm{\chi},q}=\left[0, a^{\mathcal{S}, \bm{\chi},q}\right]$, where $a^{\mathcal{S}, \bm{\chi},q}=\displaystyle \min_{\mathcal{C}s \text{ formed under }(\bm{\chi}, q)} \xi^{\mathcal{S}}\left(\mathcal{C}\right)>0$. 
\end{thmO}

Theorem \ref{thm_general_main} conveys two important messages. First, policy polarization is in general  disciplined by the influential coalition with the smallest policy latitude and is strictly positive. Second, marginal signal distributions affect policy polarization through policy latitudes, whereas the joint signal distribution does so through the configuration matrix, holding marginal signal distributions constant.  

The remainder of this appendix investigates the comparative statics of policy polarization regarding influential coalitions, holding marginal signal distributions (and, hence, voters' policy latitudes) fixed. Our starting observation is that enriching the configuration matrix enriches influential coalitions and, by Theorem \ref{thm_general_main}, reduces policy polarization. Formally, we say that $\bm{\chi}$ is \emph{richer than $\bm{\chi}'$} and write $\bm{\chi} \succeq \bm{\chi}'$ if $\bm{\chi}$ prescribes more voting recommendation profiles than $\bm{\chi}'$.

\begin{defnO}
$\bm{\chi} \succeq \bm{\chi}'$ if every column of $\bm{\chi}'$ is a column of $\bm{\chi}$. 
\end{defnO}

\begin{obsO}\label{obs_richness}
For any segmentation technology $\mathcal{S} \in \left\{b,p\right\}$, any $\mathcal{S}$-consistent configuration matrices $\bm{\chi}$ and $\bm{\chi}'$ such that $\bm{\chi}\succeq \bm{\chi}'$, and any population function $q$, $\left\{\mathcal{C}s \text{ formed under } (\bm{\chi}', q )\right\} \subseteq \left\{\mathcal{C}s \text{ formed under } (\bm{\chi}, q)
\right\}$.
\end{obsO}

\begin{proof}
In the proof of Lemma \ref{lem_coalition}, we noted that a policy deviation attracts a voter if it turns his row of decisions in the configuration matrix into all ones, and that it strictly increases candidate $R$'s winning probability if and only if there exists a column of the configuration matrix such that after the augmentation of rows, the candidate strictly increases his winning probability in that column. Thus the richer the configuration matrix is, the more influential coalitions there are.
\end{proof}

We examine two implications of Observation \ref{obs_richness}.  The next proposition shows that under personalized news aggregation,  policy polarization is minimized when signals are conditionally independent across voters for any given  population distribution, and the global minimum $\displaystyle \min_{k \in \mathcal{K}}\xi^p\left(k\right)$ is attained when voters' population distribution is uniform across types. 

\begin{propO}\label{prop_richness}
Let everything be as in Theorem \ref{thm_general_main}. Then 
$\displaystyle \min_{k \in \mathcal{K}}\xi^p\left(k\right)=a^{p, \bm{\chi}^{\ast \ast}, \text{uniform}}\leq a^{p, \bm{\chi}^{\ast \ast}, q} \leq a^{p, \bm{\chi},q}$
for any $p$-consistent configuration matrix $\bm{\chi}$ and any population function $q$.
\end{propO}

\begin{proof}
The second inequality holds because $\bm{\chi}^{\ast \ast} \succeq \bm{\chi}$ for any  $p$-consistent $\bm{\chi}$. To establish the first equality and inequality, note that under $\bm{\chi}^{\ast \ast}$ and uniform population distribution, each type of voter is influential, and the resulting collection $2^{\mathcal{K}}-\left\{\emptyset\right\}$ of influential coalitions is the richest across all scenarios. 
\end{proof}

The implications of Proposition \ref{prop_richness} are twofold. First, Theorem \ref{thm_main} of the main text prescribes the exact lower bound for the policy polarization under personalized news aggregation. Second, as long as this lower bound stays positive, changes in the environment (e.g., enrich voters' types, divide the same type of voters into multiple subgroups) wouldn't render policy polarization trivial. 

Consider next the transition from broadcast news aggregation to  personalized news aggregation. As demonstrated in the next proposition, such a transition enriches the configuration matrix and hence has a negative policy polarization effect, holding other things constant. In light of this result, the reader can safely attribute the increasing policy polarization shown in Proposition \ref{prop_personalization} to   changes in marginal signal distributions.

\begin{propO}\label{prop_btop}
$\left\{\mathcal{C}s \text{ formed under } (\bm{\chi}^{\ast}, q)\right\} \subseteq \left\{\mathcal{C}s \text{ formed under } (\bm{\chi}, q)
\right\}$ for any $p$-consistent configuration matrix $\bm{\chi}$ and any population function $q$. 
\end{propO}

\begin{proof}
$\forall \bm{\chi}$ and $q$ as above,
$\left\{\mathcal{C}s \text{ formed under } (\bm{\chi}, q)
\right\} \supseteq \left\{\text{majority coalitions}\right\}\\=
\left\{\mathcal{C}s \text{ formed under } (\bm{\chi}^{\ast}, q)\right\} 
$.
\end{proof}

We finally examine the policy polarization effect of increasing mass polarization under personalized news aggregation. As in the main text, we define increasing mass polarization as a mean-preserving spread of voters' policy preferences. 

\begin{defnO}\label{defn_population}
\emph{The mass is more polarized} under $q'$ than $q$  (write $q \succeq_{mass} q'$) if $\sum_{k=m}^K q\left(k\right)\leq \sum_{k=m}^K q'\left(k
\right)$ $\forall m=1,\cdots, K$.
\end{defnO}

The analysis assumes quadratic attention cost. Since we are only concerned with personalized news aggregation, all we need, in addition to Assumption \ref{assm_attention} and \ref{assm_utility}, is the following assumption, which ensures that Assumption \ref{assm_regularity} holds under personalized news aggregation and quadratic attention cost. 

\begin{assmO}\label{assm_quadratic}
$h(\mu)=\mu^2$, $\lambda >1/2$, and $4\lambda v\left(-\overline{a},\overline{a}, K\right)<1$. 
\end{assmO}

The next proposition proves a similar result to Proposition \ref{prop_population} for general voters and $p$-consistent  configurations under Assumption \ref{assm_quadratic}. 

\begin{propO}\label{prop_general_population}
Under Assumptions \ref{assm_attention}, \ref{assm_utility}, and \ref{assm_quadratic}, $a^{p, \bm{\chi},q} \geq a^{p, \bm{\chi},q'}$ 
for any $p$-consistent configuration $\bm{\chi}$ and any population functions $q$ and $q'$ such that $q \succeq_{mass} q'$.
\end{propO}

Proposition \ref{prop_general_population} exploits the fact that under quadratic attention cost, voters become more resistant to policy deviations as their types become more centrist. For a deviating candidate, this reduces the problem of attracting a voter coalition to that of attracting the most centrist voter in that coalition. As the mass becomes more polarized, the voters that the candidate needs to attract become more susceptible to policy deviations, hence equilibrium policy polarization falls. The proof presented in Online Appendix \ref{sec_general_proof} formalizes this idea. 

\section{Competitive infomediaries}\label{sec_competitive}
This appendix extends the baseline model to perfectly competitive infomediaries. In the environment laid out in Appendix  \ref{sec_general}, suppose that each type $k \in \mathcal{K}$ voter is served by $m\left(k\right) \geq 2$ infomediaries. A \emph{market segment} is a pair $\left(k,i\right)$, where $k \in \mathcal{K}$ represents the type of the voters being served, and $i \in \left\{1,\cdots, m\left(k\right)\right\}$ represents the serving infomediary.  The population of the voters in market segment $\left(k,i\right)$ is $\rho\left(k,i\right)$, where $\rho\left(k, i\right)>0$ and 
$\sum_{i=1}^{m\left(k\right)} \rho\left(k, i\right)=q\left(k\right)$. The functions $m$ and $\rho$ are symmetric (i.e., $m\left(k\right)=m\left(-k\right)$ and $\rho\left(k, i\right)=\rho\left(-k, i\right)$ for any $ k \in \mathcal{K}$ and $i=1,\cdots, m\left(k\right)$), and they are taken as given throughout this appendix. 

Infomediaries compete \`{a} la Bertrand for voters.  The resulting signal for market segment $\left(k,i\right)$ maximizes the net expected utilities of the voters therein (as in the standard RI model), taking candidates' policy profile ${\bf{a}}=(-a,a)$ as given: 
\[\max_{\Pi} V\left(\Pi; {\bf{a}},k\right)-\lambda I\left(\Pi\right).\]
Across market segments, we consider all joint signal distributions that are symmetric and consistent with the marginal signal distributions that solve the above problem  (hereinafter, \emph{$c$-consistency}). As in Appendix  \ref{sec_general}, we can represent a joint signal distribution by its matrix form and define the $c$-consistency of the configuration matrix. The exercise is omitted to for brevity's sake. 

We examine the policy polarization effect of introducing perfect  competition between infomediaries. To facilitate comparison with the monopolistic personalized case, we redefine $p$-consistency by first forming  market segments using functions $m$ and $\rho$, and then restricting voters of the same type to receiving the same voting recommendation. By Theorem \ref{thm_general_main}, equilibrium policies are fully determined by (i) $\mathcal{S} \in \left\{c,p\right\}$, which pins down marginal signal distributions, (ii) the configuration matrix $\bm{\chi}$, and (iii) population functions $m$ and $\rho$. Hereinafter we shall use $\mathcal{E}^{\mathcal{S}, \bm{\chi}, m,\rho}$ to denote the equilibrium policy set and $a^{\mathcal{S}, \bm{\chi}, m,\rho}$ to denote policy polarization.

Compared to the monopolistic personalized case, the perfect competition between infomediaries affects  policy polarization through both the marginal news distributions and the news configuration matrix.  In the baseline model, we already discussed the first channel, namely monopolistic personalized signals overfeed voters with information about candidates' valence, in the sense of Blackwell,  in order to maximize the infomediary's profit rather than voters' utilities. Competition corrects this overfeeding problem and, hence, attenuate voters' beliefs about candidates' valence. As voters become more susceptible to policy deviations, their policy latitude should fall. 

The second channel is new and more subtle, namely  competition reduces the correlation between voters' signals if competing infomediaries act independently of each other. The situation we have in mind is that voters of the same type receive perfectly correlated signals from the monopolistic infomediary, but conditionally independent signals from competitive infomediaries. While nothing prevents the monopolistic infomediary from randomizing among the same type of voters (which has no impact on its  profit, taking candidates' policies as given), it is fair to say that in the case where competing infomediaries act more independently than the monopolistic infomediary, competition enriches the news configuration matrix compared to the monopolistic personalized case. 

Combining the aforementioned forces, which move in the same direction, we obtain the following proposition.

\begin{propO}\label{prop_competitive}
Fix any functions $m$ and $\rho$ as above, and suppose that Assumptions \ref{assm_attention}, \ref{assm_regularity}, \ref{assm_utility}, and \ref{assm_monotone_a} hold for $\mathcal{S} \in \left\{c,p\right\}$. Then $\mathcal{E}^{c,\bm{\chi},m,\rho}=\left[0, a^{c, \bm{\chi}, m, \rho}\right] \subseteq  \mathcal{E}^{p,\bm{\chi}',m,\rho}=\left[0, a^{p, \bm{\chi}', m, \rho}\right]$ for any $c$-consistent configuration $\bm{\chi}$ and $p$-consistent configuration $\bm{\chi}'$ such that $\bm{\chi} \succeq \bm{\chi}'$.
\end{propO}

\section{Continuous state space}\label{sec_state}
This appendix extends the baseline model to a continuum of states. In the environment laid out in Appendix  \ref{sec_general}, suppose that the valence state is distributed on $\mathbb{R}$ according to a c.d.f. $G$ that is differentiable and symmetric around zero.\footnote{Assuming $\omega \in \mathbb{R}$ is w.l.o.g. because an RI voter who cares ultimately about the differential quality between the two candidates would only acquire information about this single-dimensional random variable.\label{fn_singledimension}} A  signal structure is a mapping $\Pi: \mathbb{R} \rightarrow \Delta\left(\mathcal{Z}\right)$, where each $\Pi\left(\cdot \mid\omega\right)$ specifies a probability distribution over a finite but potentially large set $\mathcal{Z}$ of signal realizations conditional on the state being $\omega \in \mathbb{R}$. Under signal structure $\Pi$, 
\[\pi_z=\int_{\omega \in \mathbb{R}} \Pi\left(z \mid \omega\right) dG\left(\omega\right)\]
is the probability that the signal realization is $z \in \mathcal{Z}$. Whenever $\pi_z>0$,  
\[\mu_z=\int_{\omega \in \mathbb{R}} \omega \Pi\left(z \mid \omega\right) dG\left( \omega\right)/\pi_z\]
is the posterior mean of the state conditional on the signal realization being $z \in \mathcal{Z}$. 

The analysis assumes entropy attention cost. 
\begin{assmO}\label{assm_entropy}
The needed amount of attention for consuming $\Pi: \mathbb{R} \rightarrow \Delta\left(\mathcal{Z}\right)$ is 
\[I\left(\Pi\right)= H\left(G\right)-\mathbb{E}_{\Pi}\left[H\left(G\left(\cdot \mid z\right)\right)\right]\]  where $H\left(G\right)$ is the entropy of the valence state,  and $H\left(G\left(\cdot \mid z\right)\right)$ is the conditional entropy of the valence state given signal realization $z$. 
\end{assmO}

We proceed in two steps, first characterizing the optimal signals for any given symmetric policy profile and then examining their impacts for equilibrium policy polarization. The next proposition establishes the counterparts of Theorems \ref{thm_binary} and \ref{thm_product} for a continuum of states and entropy attention cost.  

\begin{propO}\label{prop_state}
For any symmetric policy profile $(-a,a)$ with $a \in (0, \overline{a}]$, the following are true under Assumption \ref{assm_regularity}, \ref{assm_utility}, and \ref{assm_entropy}.
\begin{enumerate}[(i)]
\item The optimal personalized signal  for any voter is unique, binary, induces \emph{(\ref{eqn_ob})} from its consumers, and satisfies the skewness properties stated in  Theorem \ref{thm_product}(ii).
\item Any optimal broadcast signal, whose support is denoted by $\mathcal{Z}$ and induced beliefs $(\mu_z)_{z \in \mathcal{Z}}$, has two or three signal realizations. Specifically, 
 \begin{enumerate}[(a)]
 \item if $|\mathcal{Z}|=2$, then $\mathcal{Z}=\{LL, RR\}$, $\mu_{LL}<0<\mu_{RR}=|\mu_{LL}|$, and  $v(\mathbf{a}, k)+\mu_{LL}<0<v(\mathbf{a}, k)+\mu_{RR}$ $\forall k$; 
 \item if $|\mathcal{Z}|=3$, then $\mathcal{Z}=\{LL, LR, RR\}$, $\mu_{LL}\left(a\right)<0$, $\mu_{LR}\left(a\right)=0$, $\mu_{RR}\left(a\right)=|\mu_{LL}\left(a\right)|>0$, and $v(\mathbf{a}, k)+\mu_{LL}<0<v(\mathbf{a}, k)+\mu_{RR}$ $\forall k$.
\end{enumerate}
\end{enumerate}
\end{propO}

The major difference between Proposition  \ref{prop_state} and its baseline counterpart lies in the broadcast case. In the main body of the paper, we already explained why this case can be analyzed by aggregating voters with binding participation constraints into a representative voter. Under the assumption that voters' preferences exhibit increasing differences, only voters of the most extreme types can have binding participation constraints, and the representative voter acting on their behalves faces three terminal decisions: LL, LR, RR (the first and second letters stand for the voting decisions of the most left-leaning and most right-leaning voters, respectively). Then from the Blackwell monotonicity of the attention cost function, it follows that the optimal broadcast signal prescribes at most three voting recommendation profiles that the representative voter strictly prefers to obey. The case of a single signal realization is ruled out by the feasibility condition, leaving two or three signal realizations as the only possibilities. 

The case of two signal realizations can be analyzed analogously as before. In the new case of three signal realizations, strict obedience implies that $v\left({\bf{a}}, k\right)+\mu_{LL}<0<v\left({\bf{a}}, k\right)+\mu_{RR}$ $\forall k \in\mathcal{K}$, and that $v(\mathbf{a}, -K)+\mu_{LR}<0<v({\bf{a}}, K)+\mu_{LR}$. To pin down $\mu_{LR}$, we argue, based on the the convexity of mutual information in the signal structure \citep{infotheory}, that the optimal broadcast signal must be symmetric, and hence the posterior mean of the state given LR must equal zero. Simple as it is, this result implies that when we endogenize candidates' policies,  the only symmetric policy profile that can arise in equilibrium is $(0,0)$, hence the transition from broadcast news aggregation to personalized news aggregation always strictly increases policy polarization.\footnote{The proof of this claim combines  Lemma \ref{lem_maxmin} with the standard median voter theorem. Specifically, from any symmetric policy profile $(-a,a)$ with $a>0$,  the deviation to $a'=0$ weakly increases candidate $R$'s winning probability when the recommendation profile is either LL or RR, because under these signal realizations, moving towards the center doesn't repel any one by Lemma \ref{lem_general_maxmin}. In addition, it strictly increases candidate $R$'s winning probability when the signal realization is LR, due to the standard median voter theorem. In the opposite direction, no deviation from $(0,0)$ to $a'>0$ increases candidate $R$'s winning probability when the recommendation profile is  LL or RR, because under these signal realizations, moving away from the center doesn't attract anyone by Lemma  \ref{lem_general_maxmin}. Moreover, it strictly reduces candidate $R$'s winning probability when the signal realization is LR, due to, once again, the standard median voter theorem.  }  

\section{Proofs}\label{sec_general_proof}

\paragraph{Proof of Lemma \ref{lem_coalition}}
Fix any segmentation technology $\mathcal{S}$, symmetric policy profile $(-a, a)$ with $a \geq 0$, $(\mathcal{S}, a)$-consistent signal distribution $(\bm{\chi}, {\bf{b}}^+, {\bf{b}}^-)$,  and population function $q$. Let $\bf{q}$ denote the $|\mathcal{K}|$-column vector that compiles the populations of voters $-K,\cdots, K$. Let the default be the strictly obedient outcome induced by the joint signal distribution. 

Define two matrix operations. First, for any $\mathcal{C}\subseteq \mathcal{K}$, let $\bm{\chi}_{\mathcal{C}}$ be the resulting matrix from replacing every row $k \in \mathcal{C}$ of $\bm{\chi}$ with a row of ones. Second, for any matrix ${\bf{A}}$, let $\widehat{{\bf{A}}}$ be the resulting matrix from rounding the entries of ${\bf{A}}$, i.e., replacing those entries above $1/2$ with $1$ and those below $1/2$ with zero. By definition, the row vector $\widehat{{\bf{q}}^{\top}\bm{\chi}}$ compiles candidate $R$'s default winning probabilities across the voting recommendation profiles that occur with strictly positive probabilities, and $(\widehat{{\bf{q}}^{\top}\bm{\chi}} {\bf{b}}^{+}+\widehat{{\bf{q}}^{\top}\bm{\chi}} {\bf{b}}^-)/2$ 
is candidate $R$'s default winning probability in expectation. After candidate $R$ commits a unilateral deviation from $(-a,a)$ that attracts a set $\mathcal{C} \subseteq \mathcal{K}$ of voters without affecting anything else, his winning probability vector becomes $\widehat{{\bf{q}}^{\top}\bm{\chi}_{\mathcal{C}}}$, and his expected winning probability becomes
$(\widehat{{\bf{q}}^{\top}\bm{\chi}_{\mathcal{C}}}{\bf{b}}^{+} +  \widehat{{\bf{q}}^{\top}\bm{\chi}_{\mathcal{C}}}{\bf{b}}^-)/2$.
Since $\widehat{{\bf{q}}^{\top}\bm{\chi}_{\mathcal{C}}} \geq \widehat{{\bf{q}}^{\top}\bm{\chi}}$, the deviation strictly increases candidate $R$'s winning probability in expectation if and only if it does so under some voting recommendation profile, i.e., $(\widehat{{\bf{q}}^{\top}\bm{\chi}_{\mathcal{C}}}{\bf{b}}^{+}+ \widehat{{\bf{q}}^{\top}\bm{\chi}_{\mathcal{C}}}{\bf{b}}^-)/2>(\widehat{{\bf{q}}^{\top}\bm{\chi}} {\bf{b}}^{+}+\widehat{{\bf{q}}^{\top}\bm{\chi}} {\bf{b}}^-)/2$ if and only if $\widehat{{\bf{q}}^{\top}\bm{\chi}_{\mathcal{C}}} \neq \widehat{{\bf{q}}^{\top}\bm{\chi}}.$ The last condition is  equivalent to $\mathcal{C}$ being an influential coalition, and it depends on $\mathcal{S}$, $(-a,a)$, $(\bm{\chi}, {\bf{b}}^+, {\bf{b}}^-)$, and $q$ only through the pair $(\bm{\chi}, q)$.  \qed

\begin{lemO}\label{lem_quadratic}
Under Assumptions \ref{assm_attention}, \ref{assm_utility},  and \ref{assm_quadratic}, the following must hold for any $ a \geq 0$ and $a' \in \left[-a,a\right]$. 
\begin{enumerate}[(i)]
\item $\phi^{p}\left(-a, a', k\right)$ is decreasing in $k$ on $\left\{k \in \mathcal{K}: k \leq 0\right\}$ and is increasing in $k$ on $\left\{k \in \mathcal{K}: k \geq 0\right\}$. 
\item $\phi^{p}\left(-a,a', k\right) \leq \phi^{p}\left(-a,a',-k\right)$ for any $
k>0$.
\end{enumerate}
\end{lemO}

\begin{proof}
Fix any $a$ and $a'$ as above. Under Assumption \ref{assm_quadratic},  solving the case of personalized news aggregation yields 
\begin{equation}\label{eqn_quadraticmuL}
\mu_L^{p}\left(a,k\right)=
\begin{cases}
-2v\left(-a,a,k\right)-1/\left(2\lambda\right) &\text{if } k \leq 0,\\
-1/\left(2\lambda\right) &\text{if } k>0.
\end{cases}
\end{equation}
Also recall that $\phi^{p}\left(-a,a',k\right)\coloneqq v\left(-a,a',k\right)+\mu_L^p\left(a,k\right)$. 
\bigskip

\noindent Part (i): If $ k \leq 0$, then 
\begin{align*}
\phi^{p}\left(-a,a',k\right)&=v\left(-a,a',k\right)-2v\left(-a,a,k\right)-\frac{1}{2\lambda}\\
&=u\left(a',k\right)-u\left(-a,k\right)-2\left[u\left(a,k\right)-u\left(-a,k\right)\right]-\frac{1}{2\lambda}\\
&=u\left(a',k\right)+u\left(-a,k\right)-2u\left(a,k\right)-\frac{1}{2\lambda}\\
&=-\left[v\left(a',a,k\right)+v\left(-a,a,k\right)\right]-\frac{1}{2\lambda}, 
\end{align*}
where the last line is decreasing in $k$ by Assumption \ref{assm_utility} \textbf{increasing differences}. If $k>0$, then 
$\phi^{p}\left(-a,a',k\right)=v\left(-a,a',k\right)-1/\left(2\lambda\right)$, which is increasing in $k$ again by  Assumption \ref{assm_utility} \textbf{increasing differences}.

\bigskip

\noindent Part (ii): Under Assumption \ref{assm_utility},  the following must hold for any $k>0$: 
\begin{align*}
&\phi^{p}\left(-a,a',k\right)-\phi^{p}\left(-a,a',-k\right)\\
&=v\left(-a,a',k\right)-\frac{1}{2\lambda}-\left[v\left(-a,a',-k\right)-2v\left(-a,a,-k\right)-\frac{1}{2\lambda}\right]\\
\tag{$\because$ \textbf{symmetry}}&=v\left(-a,a',k\right)-v\left(a, -a',k\right)-2v\left(-a,a,k\right)\\
&=\left[u\left(-a,k\right)-u\left(-a',k\right)\right]-\left[u\left(a,k\right)-u\left(a',k\right)\right]\\
\tag{$\because$ \textbf{symmetry}}&=v\left(a',a,-k\right)-v\left(a',a,k\right)\\
\tag{$\because$ \textbf{increasing differences}}& \leq 0. 
\end{align*}
\end{proof}

\paragraph{Proof of Lemma \ref{lem_monotone_a}}
\begin{proof}
We wish to verify that $\phi^{\mathcal{S}}\left(-a,a',k\right)\coloneqq v\left(-a,a',k\right)+\mu_L^{\mathcal{S}}\left(a,k\right)$ is increasing in $a$ on $\left[|t\left(k\right)|, \overline{a}\right]$ for any $k \in \mathcal{K}$ and $a' \in \mathcal{A}$. Since $v\left(-a,a',k\right)$ is strictly increasing in $a$ on $\left[|t\left(k\right)|, \overline{a}\right]$ by Assumption \ref{assm_utility} \textbf{inverted V-shape}, it suffices to show that $\mu_L^{\mathcal{S}}\left(a,k\right)$ is nondecreasing in $a$ on $\left[|t\left(k\right)|, \overline{a}\right]$, in the following cases. 

\bigskip

$\bullet$ $\mathcal{S}=b$.  Recall that $\mu_L^b\left(a\right)$ is the unique solution to  (\ref{eqn_musymmetry}), i.e., 
\[
\max_{\mu\in [-1,0]} h\left(\mu\right) \text{ s.t. } \frac{1}{2}\left[v\left(-a,a,-K\right)-\mu\right]^+ \geq \lambda h\left(\mu\right), \]
where $h$ is strictly convex and strictly decreasing on $\left[-1,0\right]$. Also note that $v\left(-a,a,-K\right)$ is decreasing in $a$, because under Assumption \ref{assm_utility}, the following must hold for any $a'>a \geq 0$: 
\begin{align*}
&v\left(-a', a', -K\right)-v\left(-a,a,-K\right)\\
&=u\left(a',-K\right)-u\left(-a', -K\right)-u\left(a,-K\right)+u\left(-a,-K\right)\\
\tag{$\because$ \textbf{symmetry}}&=u\left(a', -K\right)-u\left(a, -K\right)-\left[u\left(a',K\right)-u\left(a, K\right)\right]\\
&=v\left(a,a',-K\right)-v\left(a,a',K\right)\\
\tag{$\because$ \textbf{increasing differences}}& \leq 0.
\end{align*}
Combining these observations gives the desired result (drawing a picture will make the point clear). 

\bigskip

$\bullet$ $\mathcal{S}=p$ and $u\left(a,k\right)=-|t\left(k\right)-a|$. In this case, the fact that $v\left(-a,a,k\right)$  is invariant with $a$ (indeed, $\equiv 2t(k)$) on $\left[|t\left(k\right)|, \overline{a}\right]$ implies that $\mu_L^p\left(a,k\right)\equiv \mu_L^p\left(|t\left(k\right)|,k\right)$ on $\left[|t\left(k\right)|, \overline{a}\right]$.

\bigskip

$\bullet$ $\mathcal{S}=p$ and $h\left(\mu\right)=\mu^2$.  In this case, a careful inspection of the expression for $\mu_L^p\left(a,k\right)$ in (\ref{eqn_quadraticmuL}) gives the desired result.
\end{proof}

\begin{lemO}[label=lem_general_xi]
Let everything be as in Theorem \ref{thm_general_main}. Then for any  $k \in \left\{0,\cdots, K\right\}$ and any $\mathcal{D} \subseteq \left\{-k,\cdots,k \right\}$ such that $\mathcal{D} \cap \left\{-k,k\right\} \neq \emptyset$, the following must hold: $\xi^{\mathcal{S}}\left(\mathcal{D}\right)>t\left(k\right)$ and  
$\left[t\left(k\right), \overline{a}\right] \cap \Xi^{\mathcal{S}}\left(\mathcal{D}\right)=\left[t\left(k\right), \xi^{\mathcal{S}}\left(\mathcal{D}\right)\right]$. 
\end{lemO}

\begin{proof}
Fix any $k$ and $\mathcal{D}$ as above. Recall that 
\[\Xi^{\mathcal{S}}\left(\mathcal{D}\right)\coloneqq \left\{a \geq 0: \max_{a' \in \mathcal{A}}\phi^{\mathcal{S}}\left(-a,a', \mathcal{D}\right) \leq 0\right\},\]
where 
$\phi^{\mathcal{S}}\left(-a, a', \mathcal{D}\right) \coloneqq  \displaystyle \min_{k' \in \mathcal{D}} \phi^{\mathcal{S}}\left(-a, a', k'\right)$. Let $t(\mathcal{D})$ denote the image of $\mathcal{D}$ under mapping $t$, and write $\widetilde{\mathcal{D}}$ for $\left[\min t\left(\mathcal{D}\right), \max t\left(\mathcal{D}\right)\right]$.  Under Assumption \ref{assm_utility} \textbf{inverted V-shape}, we can restrict attention to $a' \in \tilde{\mathcal{D}}$ in the definition of $\Xi^{\mathcal{S}}(\mathcal{D})$, i.e., 
\[
\Xi^{\mathcal{S}}\left(\mathcal{D}\right)=\left\{a \geq 0: \max_{a' \in \widetilde{\mathcal{D}}}\phi^{\mathcal{S}}\left(-a,a', \mathcal{D}\right) \leq 0\right\}.
\]

Fix the policy profile to be $(-t\left(k\right), t\left(k\right))$, as well as any $a' \in \widetilde{\mathcal{D}}$. From Assumption \ref{assm_utility} and (\ref{eqn_ob}), it follows that $a'$ doesn't attract type $k$ voters:
\begin{align*}
\phi^{\mathcal{S}}\left(-t\left(k\right), a' ,k\right) 
& \coloneqq  
v\left(-t\left(k\right), a',k\right)+\mu_L^{\mathcal{S}}\left(t\left(k\right),k\right)
 \\ 
\tag{$\because$ \textbf{inverted V-shape}} &\leq 
v\left(-t\left(k\right), t\left(k\right),k\right)+\mu_L^{\mathcal{S}}\left(t\left(k\right),k\right)\\
\tag{$\because$ \ref{eqn_ob}}&<0,
\end{align*}
and it doesn't attract type $-k$ voters, either: 
\begin{align*}
&\phi^{\mathcal{S}}\left(-t\left(k\right), a' ,-k\right)\\
&\coloneqq v\left(-t\left(k\right),a', -k\right)+\mu_L^{\mathcal{S}}\left(t\left(k\right),-k\right) \\
\tag{$\because$ \textbf{inverted V-shape}}&  \leq v\left(-t\left(k\right),t\left(-k\right), -k\right)+\mu_L^{\mathcal{S}}\left(t\left(k\right),-k\right)\\
\tag{$\because$ \textbf{symmetry}}& =0+\mu_L^{\mathcal{S}}\left(t\left(k\right),-k\right)\\
&<0.
\end{align*}
Thus $\phi^{\mathcal{S}}\left(-t\left(k\right), a', \mathcal{D}\right) \coloneqq  \displaystyle \min_{k' \in \mathcal{D}} \phi^{\mathcal{S}}\left(-t\left(k\right),a', k'\right)<0$, and taking maximum over $a'$ yields $\displaystyle\max_{a' \in\widetilde{\mathcal{D}}} \phi^{\mathcal{S}}\left(-t\left(k\right), a', \mathcal{D}\right)<0$. Meanwhile, Assumption \ref{assm_monotone_a} implies that $\phi^{\mathcal{S}}\left(-a, a', \mathcal{D}\right)$ is increasing in $a$ on $\left[t\left(k\right), \overline{a}\right]$ for any  $a'$, and taking maximum over $a'$ yields: $\forall a_2>a_1 \geq t\left(k\right)$, 
\begin{align*}
\displaystyle\max_{a' \in \widetilde{\mathcal{D}}} \phi^{\mathcal{S}}\left(-a_1,a', \mathcal{D}\right)
&=\phi^{\mathcal{S}}\left(-a_1, \argmax_{a' \in \widetilde{\mathcal{D}}} \phi^{\mathcal{S}}\left(-a_1,a', \mathcal{D}\right), \mathcal{D}\right)\\
&\leq \phi^{\mathcal{S}}\left(-a_2, \argmax_{a' \in \widetilde{\mathcal{D}}} \phi^{\mathcal{S}}\left(-a_1,a', \mathcal{D}\right), \mathcal{D}\right)\\
&\leq \displaystyle\max_{a' \in \widetilde{\mathcal{D}}}\phi^{\mathcal{S}}\left(-a_2,a', \mathcal{D}\right), 
\end{align*}
i.e., $\displaystyle\max_{a' \in \widetilde{\mathcal{D}}} \phi^{\mathcal{S}}\left(-a, a', \mathcal{D}\right)$ is increasing in $a$ on $\left[t\left(k\right), \overline{a}\right]$. Taken together, we conclude that $\mathcal{D}$'s policy latitude exceeds $t\left(k\right)$:
\[\xi^{\mathcal{S}} \left(\mathcal{D}\right)\coloneqq \max\Xi^{\mathcal{S}}\left(\mathcal{D}\right)=\max\left\{a \geq 0: \max_{a' \in \widetilde{\mathcal{D}}}\phi^{\mathcal{S}}\left(-a,a', \mathcal{D}\right) \leq 0\right\}>t\left(k\right),\]
and that all policies in $\left[t\left(k\right), \xi^{\mathcal{S}} \left(\mathcal{D}\right)\right]$ belong to the $\mathcal{D}$-proof set: 
\begin{align*}
\left[ t\left(k\right),\overline{a}\right] \cap \Xi^{\mathcal{S}}\left(\mathcal{D}\right)= \left\{a \geq t\left(k\right): \max_{a' \in \widetilde{\mathcal{D}}}\phi^{\mathcal{S}}\left(-a, a', \mathcal{D}\right) \leq 0\right\} = \left[t\left(k\right), \xi^{\mathcal{S}}\left(\mathcal{D}\right)\right]. 
\end{align*}
\end{proof}

\paragraph{Proof of Theorem \ref{thm_general_main}}
Fix any segmentation technology $\mathcal{\mathcal{S}} \in \left\{b,p\right\}$, $\mathcal{S}$-consistent configuration $\bm{\chi}$, and population function $q$. Let $\mathcal{C}$ denote a typical influential coalition formed under $(\bm{\chi}, q)$. For each $k=0,\cdots, K-1$, define
\[A\left(k\right)= \begin{cases}
[t\left(k\right), t\left(k+1\right))\cap \displaystyle\bigcap_{\mathcal{C} \subseteq \left\{-k,\cdots, k\right\}} \Xi^{\mathcal{S}}\left(\mathcal{C}\right) & \text{ if } \exists \mathcal{C} \subseteq \{-k,\cdots, k\},\\
[t\left(k\right), t\left(k+1\right))& \text{ else}.
\end{cases}\]
For $k=K$, define 
\[A\left(K\right)=\left[t\left(K\right), \overline{a}\right]\cap \displaystyle\bigcap_{\mathcal{C} } \Xi^{\mathcal{S}}\left(\mathcal{C}\right). \]
Lemma \ref{lem_general_maxmin} shows that 
\[\mathcal{E}^{\mathcal{S},\bm{\chi},q}=\bigcup_{k=0}^{K} A\left(k\right). \] Below we prove by induction that $\cup_{k=0}^K A\left(k\right)=\left[0, \displaystyle \min_{\mathcal{C}} \xi^{\mathcal{S}}\left(\mathcal{C}\right)\right]$.

\begin{description}
\item[Step 0.] Letting $k=0$ in Lemma \ref{lem_general_xi} shows that $\Xi^{\mathcal{S}}\left(\left\{0\right\}\right)=\left[0, \xi^{\mathcal{S}}\left(\left\{0\right\}\right)\right]$, so 
\[A\left(0\right)=
\begin{cases}
\left[0, \xi^{\mathcal{S}}\left(\left\{0\right\}\right)\right] &\text{ if } \{0\} \text{ is influential and } \xi^{\mathcal{S}}\left(\left\{0\right\}\right)<t\left(1\right),\\
[0, t\left(1\right)) &\text{ else}.
\end{cases}\]
In the first case,  $A\left(k\right)\subseteq [t\left(k\right), t\left(k+1\right)]  \cap \Xi^{\mathcal{S}}\left(\{0\}\right)=\emptyset$ for any $k \geq 1$, and $\displaystyle \min_{\mathcal{C}} \xi^{\mathcal{S}}\left(\mathcal{C}\right)=\xi^{\mathcal{S}}\left(\left\{0\right\}\right)$ because $\xi^{\mathcal{S}}\left(\mathcal{C}\right)>t\left(1\right)$ for any $\mathcal{C} \neq \{0\}$ by Lemma \ref{lem_general_xi}. Taken together, we conclude that $\cup_{k=0}^K A\left(k\right)=\left[0, \displaystyle \min_{\mathcal{C}}\xi^{\mathcal{S}}\left(\mathcal{C}\right)\right]$ and terminate the procedure. In the second case, we proceed to the next step. 

\item[Step m.] The output of Step $m-1$ is $\displaystyle \cup_{k=0}^{m-1} A\left(k\right)=[0, t\left(m\right))$. Then from Lemma \ref{lem_general_xi}, which shows that $\left[t\left(m\right), \overline{a}\right] \cap \Xi^{\mathcal{S}}\left(\mathcal{C}\right)=\left[t\left(m\right),\xi^{\mathcal{S}}\left(\mathcal{C}\right)\right]$ for any $\mathcal{C} \subseteq \left\{-m, \cdots, m\right\}$ such that $\mathcal{C} \cap \left\{-m,m\right\}\neq \emptyset$, it follows that 
\begin{align*}
\displaystyle \cup_{k=0}^m A\left(k\right)=\begin{cases}
\left[0, \displaystyle \min_{\mathcal{C} \subseteq \{-m,\cdots,m\}} \xi^{\mathcal{S}}\left(\mathcal{C}\right)\right] &\text{ if } \displaystyle \min_{\mathcal{C} \subseteq \{-m,\cdots,m\}} \xi^{\mathcal{S}}\left(\mathcal{C}\right)<t\left(m+1\right),\\
[0, t\left(m+1\right)) &\text{ else}.
\end{cases}
\end{align*}
In the first case,  $A\left(k\right) \subseteq \left[t\left(k\right), t\left(k+1\right) \right] \cap \displaystyle\cap_{\mathcal{C} \subseteq \left\{-m, \cdots, m\right\}} \xi^{\mathcal{S}}\left(\mathcal{C}\right)= \emptyset$ for any $k \geq m+1$, and $\displaystyle \min_{\mathcal{C}} \xi^{\mathcal{S}}\left(\mathcal{C}\right)=\displaystyle \min_{\mathcal{C} \subseteq \{-m,\cdots,m\}} \xi^{\mathcal{S}}\left(\mathcal{C}\right)$ because $\xi^{\mathcal{S}}\left(\mathcal{C}'\right)>t\left(m+1\right)$ for any $\mathcal{C'} \nsubseteq \left\{-m,\cdots, m\right\}$ by Lemma \ref{lem_general_xi}.  Taken together, we conclude that $\cup_{k=0}^K A\left(k\right)=\left[0, \displaystyle \min_{\mathcal{C}}\xi^{\mathcal{S}}\left(\mathcal{C}\right)\right]$ and terminate the procedure. In the second case, we proceed to the next step.
\end{description}

The above procedure terminates in at most $K+1$ steps, and the output is always $\cup_{k=0}^K A\left(k\right)=\left[0, \displaystyle \min_{\mathcal{C}}\xi^{\mathcal{S}}\left(\mathcal{C}\right)\right]$. \qed

\paragraph{Proof of Proposition \ref{prop_general_population}}
We wish to demonstrate that $\displaystyle \min_{\mathcal{C}s \text{ formed under } (\bm{\chi}, q)}\xi^{p}\left(\mathcal{C}\right) \\ \geq \displaystyle \min_{\mathcal{C}s \text{ formed under } (\bm{\chi}, q')} \xi^p\left(\mathcal{C}\right)$ for any $p$-consistent $\bm{\chi}$ and any population functions $q$ and $q'$ such that $q \succeq_{mass} q'$. The proof exploits the following consequences of Lemma \ref{lem_quadratic}: For any $a \geq 0$ and $a' \in \left[-a,a\right]$, (i) $\phi^p\left(-a,a',-K\right)=\displaystyle \max_{k \in \mathcal{K}} \phi^p\left(-a,a',k\right)$, and (ii) $\phi^p\left(-a,a',k\right)$ is decreasing in $k$ on $\left\{k: k \leq 0\right\}$ and is increasing in $k$ on $\left\{k: k \geq 0\right\}$. 

\paragraph{Step 1.} Show that $\xi^p\left(\mathcal{D}\right)>t\left(K\right)$ for any $\mathcal{D} \subseteq \mathcal{K}$. Fix any $a' \in \left[t\left(-K\right), t\left(K\right)\right]$ and any $\mathcal{D}\subseteq \mathcal{K}$.  Notice two things. First, 
\begin{align*}
&\phi^p\left(-t\left(K\right),a', \mathcal{D}\right) \\
&\coloneqq \min_{k \in \mathcal{D}} \phi^p\left(-t\left(K\right),a',k\right) \\
&\leq \max_{k \in \mathcal{D}} \phi^p\left(-t\left(K\right),a',k\right)\\
\tag{$\because$ Lemma \ref{lem_quadratic}}& \leq  \phi^p\left(-t\left(K\right), a', -K\right)\\
\tag{$\because$ Assumption \ref{assm_utility} \textbf{inverted V-shape}}& \leq \phi^p\left(-t\left(K\right), t\left(-K\right), -K\right)\\
& \coloneqq v\left(-t\left(K\right), t\left(-K\right), K\right)+\mu_L^p\left(t\left(K\right),- K \right)\\
\tag{$\because$ Assumption \ref{assm_utility} \textbf{symmetry}}& =0+\mu_L^p\left(t\left(K\right),- K \right)\\
& <0.
\end{align*}
Second, since $\phi^p\left(-a,a',k\right)$ is increasing in $a$ on $\left[t\left(K\right), \overline{a}\right]$ for any $k \in \mathcal{D}$ by Lemma \ref{lem_monotone_a}, $\phi^p\left(-a,a', \mathcal{D}\right) \coloneqq \displaystyle\min_{k \in \mathcal{D}} \phi^p\left(-a,a', k\right)$ is increasing in $a$ on $\left[t\left(K\right), \overline{a}\right]$, too. Combining these observations yields
\begin{align*}
\xi^p\left(\mathcal{D}\right)&\coloneqq \max\left\{a \geq 0: \max_{a' \in \left[t\left(-K\right), t\left(K\right)\right]} \phi^p\left(-a,a',\mathcal{D}\right)\leq 0\right\}\\
&=\max\left\{a \geq t\left(K\right): \max_{a' \in \left[t\left(-K\right), t\left(K\right)\right]} \phi^p\left(-a,a',\mathcal{D}\right)\leq 0\right\}\\
&> t(K). 
\end{align*}

\paragraph{Step 2.} There are three kinds of influential coalitions: (a) $\max \mathcal{C} \leq 0$, (b) $\min \mathcal{C} \geq 0$, and (c) $\min \mathcal{C}<0<\max \mathcal{C}$. Consider case (a), and notice two things. First,  the following are equivalent for any $a \geq t\left(K\right)$ and any $a' \in \left[-a,a\right]$ by Lemma \ref{lem_quadratic}: (i) $\phi^p\left(-a,a',\mathcal{C}\right) \leq 0$, (ii) $\phi^p\left(-a,a', \max \mathcal{C}\right) \leq 0$, and (iii) $\phi^p\left(-a,a', \{k: k \leq \max \mathcal{C}\}\right) \leq 0$. Second, since $\mathcal{C}$ is influential and $\mathcal{C} \subseteq \left\{k: k \leq \max \mathcal{C}\right\}$, $\left\{k: k \leq \max \mathcal{C}\right\}$ is influential, too. Combining these observations yields
\begin{equation}\label{eqn_quadraticcoalition}
\min_{\substack{\mathcal{C}s \text{ formed under } (\bm{\chi}, q)\\ \text{ s.t. }\max\mathcal{C} \leq 0}} \xi^p\left(\mathcal{C}\right)=\min_{\substack{\mathcal{C}s \text{ formed under } (\bm{\chi}, q) \\ \text{ s.t. }\mathcal{C}=\{k: k \leq \alpha\}, \alpha \leq 0}} \xi^p\left(\mathcal{C}\right).
\end{equation}
A close inspection of (\ref{eqn_quadraticcoalition}) reveals two things. First, 
\[\xi^p\left(\left\{k: k \leq \alpha\right\}\right) = \max \left\{a \geq t\left(K\right): \max_{a' \in \left[-t\left(K\right), t\left(K\right)\right]} \phi^p\left(-a, a', \left\{k: k \leq \alpha\right\}\right) \leq 0\right\}\]
is increasing in $\alpha$ on $\left\{\alpha: \alpha \leq 0\right\}$ by Lemma \ref{lem_quadratic}. Second, every set $\{k: k \leq \alpha\}$ with $\alpha<0$ is more likely to be influential under $q'$ than under $q$ because $q \succeq_{mass} q'$. Thus
\[\min_{\substack{\mathcal{C}s \text{ formed under } (\bm{\chi}, q) \\\text{s.t.} \mathcal{C}=\{k: k \leq \alpha\}, \alpha \leq 0}} \xi^p\left(\mathcal{C}\right)  \geq \min_{\substack{\mathcal{C}s \text{ formed under } (\bm{\chi}, q') \\\text{s.t. } \mathcal{C}=\{k: k \leq \alpha\}, \alpha \leq 0}} \xi^p\left(\mathcal{C}\right),\]
which proves the desired result for case (a). The proofs for cases (b) and (c) are similar and thus are omitted for brevity.  \qed

\paragraph{Proof of Proposition \ref{prop_competitive}} Recall that for any given policy profile $(-a,a)$ with $a \in [0, \overline{a}]$, the monopolistic personalized signal for type $k$ voters is their competitive signal when the attention cost parameter equals $\lambda-1/\gamma$ for some $\gamma>0$. From Proposition \ref{prop_lambda}, it follows that $\mu_L^p(a,k)<\mu_L^c(a,k)$, and hence  that $\forall a'$ and $\mathcal{D} \subseteq \mathcal{K}$:
\begin{align*}
\phi^c\left(-a,a' ,\mathcal{D}\right)\coloneqq \min_{k \in \mathcal{D}} \phi^c\left(-a,a',k\right)&\coloneqq \min_{k \in \mathcal{D}} v\left(-a,a',k\right)+\mu_L^c\left(a,k\right)\\
&>\min_{k \in \mathcal{D}} v\left(-a,a',k\right)+\mu_L^p\left(a,k\right) \coloneqq \phi^p\left(-a,a',\mathcal{D}\right).
\end{align*}
Substituting this result into the proof of Lemma \ref{lem_general_xi} yields $\xi^c\left(\mathcal{D}\right)\leq \xi^p\left(\mathcal{D}\right)$, where $\xi^c\left(\mathcal{D}\right)$ denotes $\mathcal{D}$'s policy latitude in the competitive case. Thus for any $c$-consistent $\bm{\chi}$ and $p$-consistent $\bm{\chi}'$ such that $\bm{\chi} \succeq \bm{\chi}'$, the following must hold: 
\begin{align*}
\tag{$\because$ Theorem \ref{thm_general_main}; $\bm{\chi}$ is $c$-consistent}\mathcal{E}^{c,\bm{\chi}, \rho}
&=\left[0, \min_{ \mathcal{C}s \text{ formed under } 
(\bm{\chi}, \rho)} \xi^c\left(\mathcal{C}\right)\right] \\
\tag{$\because$ Proposition \ref{prop_richness}; $\bm{\chi} \succeq \bm{\chi}'$}&\subseteq \left[0, \min_{ \mathcal{C}s \text{ formed under } 
(\bm{\chi}', \rho)} \xi^c\left(\mathcal{C}\right)\right]\\
\tag{$\because$ $\xi^c\left(\mathcal{C}\right)<\xi^p\left(\mathcal{C}\right)$}&\subseteq \left[0, \min_{ \mathcal{C}s \text{ formed under } 
(\bm{\chi}', \rho)} \xi^p\left(\mathcal{C}\right)\right]\\
\tag{$\because$ Theorem \ref{thm_general_main}; $\bm{\chi}'$ is $p$-consistent}&=\mathcal{E}^{p,\bm{\chi}', \rho}. 
\end{align*}
\qed

\paragraph{Proof of Proposition \ref{prop_state}}  The proof presented below fixes any $\mathbf{a}=(-a,a)$ with $a \in (0, \overline{a}]$, and strengthens Assumption \ref{assm_utility} \textbf{increasing differences} to strict increasing differences, i.e., $v\left({\bf{a}},k\right)$ is strictly increasing in $k$. Doing so is w.l.o.g. because in the case where
$v\left({\bf{a}}, k\right)=v\left({\bf{a}},k+1\right)$ for some $k$, we can treat type $k$ and $k+1$ voters as a single entity. 

 \paragraph{Personalized case} The proof follows that of Theorem \ref{thm_binary} and \ref{thm_product} closely, with important caveats. 

\subparagraph{Step 1.} Characterize the solution to the dual problem (\ref{eqn_personalized_simplified}) for any given $\gamma \geq 0$. Show that the solution is unique and has two signal realizations. By \cite{matejka2015}, the solution to (\ref{eqn_personalized_simplified}), denoted by $\Pi(\gamma)$, is $\overline{\Pi}$ if and only if $\gamma \in [0, 1/\lambda]$. For any $\gamma>1/\lambda$, $\Pi(\gamma)$ is unique and has at most two signal realizations. The case of one signal realization, i.e., a degenerate signal, is ruled out by the feasibility condition as in the baseline model, leaving two signal realizations as the only possibility. 

\subparagraph{Steps 2-3.} Show that strong duality holds, and that the optimal signal is unique. These steps are much simpler than their baseline counterparts, as we can now pin down $\Gamma\coloneqq \{\gamma \geq 0:\Pi(\gamma)=\overline{\Pi}\}$ to $[0, 1/\lambda]$ using  \cite{matejka2015}. Everything else is the same as before. 

\subparagraph{Step 4.} Show that if the optimal signal differs from $\overline{\Pi}$ (as required by Assumption \ref{assm_regularity}(ii)), then it must satisfy the skewness property stated in Theorem \ref{thm_product}(ii). This step differs from its baseline counterpart. Specifically, let $\gamma(k)$ denote the Lagrange multiplier associated with type $k$ voters' participation constraints, and note that $\gamma(k) >1/\lambda$ must hold in order to satisfy $\Pi(\gamma(k)) \neq \overline{\Pi}$. For ease of notation, write $\hat{v}$ for $v(\mathbf{a}, k)/(\lambda-1/\gamma(k))$, $\hat{\omega}$ for $\omega/(\lambda-1/\gamma(k))$, $\hat{G}$ for the c.d.f. of $\hat{\omega}$, and $\rho$ for the likelihood that type $k$ voters votes candidate $R$ rather than candidate $L$. Since the optimal signal is binary, $\rho$ must belong to $(0,1)$. By \cite{matejka2015}, the probability that a type $k$ voter votes for candidate $R$ in state $\omega$ equals 
$\frac{\mathcal{\rho} \exp\left(\hat{v}+\hat{\omega}\right)}{\rho \exp\left(\hat{v}+\hat{\omega}\right)+1}
$, and the average probability that he votes for candidate $R$ equals $\frac{\rho}{\rho+1}$. Bayes' rule mandates that 
\[
\underbrace{\int_{\hat{\omega} \in \mathbb{R}} \frac{\rho\exp\left(\hat{v}+\hat{\omega}\right)}{\rho \exp\left(\hat{v}+\hat{\omega}\right)+1} d\hat{G}(\hat{\omega})}_{\text{LHS}(\rho, \hat{v})}=\underbrace{\frac{\rho}{\rho+1}}_{\text{RHS}(\rho)}. 
\]

When $\hat{v}=0$ (equivalently, $k=0$), 
\begin{align*}
\tag{$\because \hat{G}$ is symmetric} \text{LHS}(\rho, 0)&=
\int_{0}^{\infty} \frac{\rho\exp\left(\hat{\omega}\right)}{\rho\exp\left(\hat{\omega}\right)+1} + \frac{\rho\exp\left(-\hat{\omega}\right)}{\rho\exp\left(-\hat{\omega}\right)+1} d\hat{G}\left(\hat{\omega}\right)\\
&= \int_0^{\infty} \frac{\rho(2\rho+X(\hat{\omega}))}{\rho^2+\rho X(\hat{\omega})+1} d\hat{G}(\hat{\omega}),
\end{align*}
where $X(\hat{\omega}) \coloneqq \exp(\hat{\omega})+\exp(-\hat{\omega})>2$ almost surely. 
Subtracting RHS($\rho$) from the last expression yields
\[\text{LHS}(\rho, 0)-\text{RHS}(\rho)=\int_0^{\infty}\frac{(X(\hat{\omega})-2)(1-\rho)\rho}{(\rho^2+\rho X(\hat{\omega})+1)(\rho+1)}d\hat{G}(\hat{\omega}),\]
which, upon a close inspection, reveals that RHS($\rho$) single-crosses LHS$(\rho, 0)$ from below at $\rho=1$. To complete the proof, note that LHS$(\rho, \hat{v})$ is increasing in $\hat{v}$ for any $\rho$, and that LHS$(\rho, \hat{v})$ and RHS$(\rho)$ are both increasing in $\rho$. Thus, the root of LHS$(\rho, \hat{v})=$ RHS($\rho$) must be greater than one if $\hat{v}>0$ (equivalently $k>0$), and it must be smaller than one if $\hat{v}<0$ (equivalent $k<0$) as desired. 

\paragraph{Broadcast case} Since voters' preferences satisfy strict increasing differences, only voters of the most extreme types $-K$ and $K$ can have binding participation constraints, whereas those of interim types must have slack participation constraints. 
The remaining proof proceeds in three step as in the baseline case.  

\subparagraph{Step 1.} Characterize the solution to the dual problem (\ref{eqn_broadcast_simplified}) for any $\bm\gamma \geq \bm 0$. Show that the solution is unique and has two or three signal realizations. 

We prove the same result as before but using a different method. By \cite{matejka2015}, the solution to (\ref{eqn_broadcast_simplified}) is $\overline{\Pi}$ if and only if $\gamma_{-K}+\gamma_K \leq 1/\lambda$, i.e., $\Gamma\coloneqq \{\bm\gamma \geq \bm 0: \overline{\Pi}\in\{\Pi(\bm\gamma)\}\}=\{\bm\gamma \geq \bm 0: \gamma_{-K}+\gamma_K\leq 1/\lambda\}$. For any $\bm\gamma \notin \Gamma$, recall that  (\ref{eqn_broadcast_simplified}) is the optimal information acquisition problem faced by a representative voter who weighs the two extreme voters by $\bm\delta=(\delta_{-K}, \delta_K) =(\frac{\gamma_{-K}}{\gamma_{-K}+\gamma_K}, \frac{\gamma_{K}}{\gamma_{-K}+\gamma_K})$ and faces an effective attention cost parameter $\lambda(\bm\gamma) = \lambda-(\gamma_{-K}+\gamma_K)^{-1}$, i.e., $\max_{\Pi} \bm\delta \cdot \bm V(\Pi)-\lambda(\bm\gamma)I(\Pi)$. Since the attention cost function is Blackwell monotone, any  solution to this problem has at most three signal realizations LL, LR, and RR, where the first and second letters stand for the voting recommendations to type $-K$ voters and type $K$ voters, respectively.  The possibility of a degenerate signal is ruled out by the feasibility condition as before, leaving two or three signal realizations as the only possibilities. Regardless of which situation we we end up with, the solution is always unique.  This is because in state $\omega$, the representative voter's payoff from taking action $z \in \{LL, LR, RR\}$ is given by 
\[u_z(\omega)=\begin{cases}
-\delta_K(v(\mathbf{a},K)+\omega) &\text{ if } z=LL,\\
0 & \text{ if } z=LR,\\
\delta_{-K}(v(\mathbf{a}, -K)+\omega) & \text{ if } z=RR. 
\end{cases}\]
Since the random variables $\exp(u_z(\omega)/\lambda(\bm\gamma))$, $z \in \{LL, LR, RR\}$, are linearly independent with unit scaling,\footnote{That is, for every $z$, there doesn't exist $\{\alpha_{z'}\}_{z' \neq z}$ with $\sum_{z' \neq z}\alpha_z=1$ such that $\exp(u_z(\omega)/\lambda(\bm \gamma))=\sum_{z' \neq z} \alpha_{z'}\exp(u_{z'}(\omega)/\lambda(\bm\gamma))$ almost surely.} the solution to (\ref{eqn_broadcast_simplified}) is unique by Lemma 2 of the online appendix of \cite{matejka2015}. Hereinafter we shall denote this solution by $\Pi(\bm\gamma)$, and note that it is fully determined by (i) the vector of the average decision probabilities induced by $\Pi(\bm\gamma)$, hereinafter denoted by $\bm\pi(\bm\gamma)$, as well as (ii) the multinomial logit formula for conditional probabilities. Thus whenever convenient, we shall use $\bm\pi(\bm\gamma)$ to represent $\Pi(\bm\gamma)$, and note that $\bm\pi(\bm\gamma)$ is continuous in $\bm\gamma$ by Berge's maximum theorem. 

\subparagraph{Step 2. } Show that strong duality holds. The proof for the baseline case involves several substeps.

$\bullet$ The first step was to show that $d^*$ is attained at some finite $\bm\gamma^*$ such that $\mathcal{B}_{\epsilon}(\bm\gamma^*) \subseteq \mathbb{R}_+^2-\Gamma$ for some $\epsilon>0$. That argument carries seamlessly over to the current context. 

$\bullet$ The second step was to show that $\{\Pi(\bm\gamma)\}$ is a singleton and $\Pi(\bm\gamma)$ is binary for all $\bm\gamma \in \mathcal{B}_{\epsilon}(\bm\gamma^*)$.  Here, $\{\Pi(\bm\gamma)\}$ remains a singleton, but $\Pi(\bm\gamma)$ could have two or  three signal realizations. Regardless of which situation we end up with, we can proceed to the next step. 

$\bullet$ The third step was to show that 
\[\frac{\partial}{\partial \gamma_k}\mathcal{L}(\Pi(\bm\gamma), \bm\gamma)\bigg|_{\bm\gamma=\bm\gamma^*}=V(\Pi(\bm\gamma^*); k)-\lambda I(\Pi(\bm\gamma^*))=0 \text{ }\forall k \in \{-K, K\}.\]
Here, we can prove the same result, but using a slightly different argument: By the envelope theorem, $\mathcal{L}(\Pi(\bm\gamma), \bm\gamma)$ is differentiable in $\bm\gamma$ almost surely on $\mathcal{B}_{\epsilon}(\bm\gamma^*)$. Whenever the derivative exists, it is given by 
\[\frac{\partial}{\partial \gamma_k}\mathcal{L}(\Pi(\bm\gamma), \bm\gamma)=V(\bm\pi(\bm\gamma); k)-\lambda I(\bm\pi (\bm\gamma)) \text{ }\forall k \in \{-K, K\},\]
where the term $\bm\pi(\bm\gamma)$ in  the above expression represents the vector of the average decision probabilities induced by $\Pi(\bm\gamma)$. Since  $\bm\pi(\bm\gamma)$ is continuous in $\bm\gamma$,  $\mathcal{L}(\Pi(\bm\gamma), \bm\gamma)$ must differentiable in $\bm\gamma$, rather than being just absolutely continuous in $\bm\gamma$, on $\mathcal{B}_{\epsilon}(\bm\gamma^*)$. Everything else is the same as before. 

\paragraph{Step 4.} Show that any optimal broadcast signal must be symmetric. This step differs completely from its baseline counterpart. As before, let $\mathcal{Z}$ denote the support of the signal, $\pi_z$ denote the average probability that the signal realization is $z \in \mathcal{Z}$, and $\mu_z$ denote the posterior belief induced by the signal realization $z$. There are two cases to consider.  

\bigskip
$\bullet$ $|\mathcal{Z}|=2$.  In this case, the signal must recommend LL and RR to the representative voter, i.e., $\mathcal{Z}=\{LL, RR\}$. If, instead, $\mathcal{Z}=\{LL, LR\}$ or $\{LR, RR\}$, then the very type of the extreme voters who receives the same voting recommendation all the time would abstain from news consumption in order to save on the attention cost. This violates Assumption \ref{assm_regularity}(iii),  saying that it is strictly optimal to include all voters in news consumption. Back to the case of $\mathcal{Z}=\{LL, RR\}$, we know, from the previous argument, that both voting recommendations must be strictly obeyed by the representative voter, i.e.,  $v(\mathbf{a}, K)+\mu_{LL}<0<v(\mathbf{a}, -K)+\mu_{RR}$. Since $v(\mathbf{a}, k)$ is strictly increasing in $k$, $v(\mathbf{a}, k)+\mu_{LL}<0<v(\mathbf{a}, k)+\mu_{RR}$ must hold for all $k$. 

It remains to show that $|\mu_{LL}|=\mu_{RR}$. To this end, we prove a stronger claim, namely the signal structure, denoted by $\Pi$, is symmetric,  i.e., $\Pi\left(LL\mid \omega\right)=\Pi\left(RR \mid -\omega\right)$ almost surely.  Suppose the contrary is true, and construct a new signal structure $\Pi': \Omega \rightarrow \Delta(\mathcal{Z})$ whereby $\Pi'\left(LL \mid \omega\right)=\Pi\left(RR \mid -\omega\right)$ for all $\omega$. For each $z \in \mathcal{Z}$, write $\pi'_{z}$ for $\int \Pi'\left(z \mid \omega\right) dG\left(\omega\right)$ and $\mu'_z$ for $\int \omega \Pi'\left(z \mid \omega\right) dG\left(\omega\right)/\pi_z'$. By construction, the following must hold: $\pi'_{LL}=\pi_{RR}$, $\pi'_{RR}=\pi_{LL}$, $\mu'_{LL}=-\mu_{RR}$, $\mu'_{RR}=-\mu_{LL}$, and $I\left(\Pi\right)=I\left(\Pi'\right)$. Thus
 \begin{align*}
 V\left(\Pi'; {\bf{a}}, -K\right)
& =\pi'_{RR}\left[v\left({\bf{a}}, -K\right)+\mu'_{RR}\right]   \\
\tag{$\because$ symmetry} &=\pi_{LL}\left[-v\left({\bf{a}}, K\right)-\mu_{LL}\right] \\
 &=V\left(\Pi; {\bf{a}}, K\right) \\
\tag{$\because$ $K$'s participation constraint is binding} &=\lambda I\left(\Pi\right)\\ 
\tag{$\because$ $-K$'s participation constraint is binding} &=V\left(\Pi; {\bf{a}}, -K\right), 
 \end{align*}
 and $V\left(\Pi'; {\bf{a}}, K\right)=V\left(\Pi; {\bf{a}}, K\right)$ can be shown analogously. Compared to $\Pi$ and $\Pi'$, the signal structure that randomizes between them with equal probability generates the same consumption utility to the representative voter, and yet incurs a strictly lower attention cost because mutual information is strictly convex in the signal structure (see, e.g., Theorem 2.7.4. of \citealp{infotheory}). This means that $\Pi$ isn't optimal to the representative voter, a contradiction. 

\bigskip

$\bullet$ $\mathcal{Z}=\left\{LL, LR, RR\right\}$.  In this case, both types of extreme voters must have binding participation constraints, and they must strictly obey the voting recommendations prescribed to them, i.e., $v(\mathbf{a}, K)+\mu_{RR}<0$, $v(\mathbf{a}, -K)+\mu_{LL}>0$, and $v(\mathbf{a}, -K)+\mu_{LR}<0<v(\mathbf{a}, K)+\mu_{LR}$. The first result implies that $v(\mathbf{a}, k)+\mu_{LL}<0<v(\mathbf{a}, k)+\mu_{RR}>0$ $\forall k$. If, in addition, the signal is symmetric, i.e., $\Pi\left(LL \mid \omega\right)=\Pi \left(RR\mid -\omega\right)$, $\Pi\left(LR\mid \omega\right)=\Pi\left(LR \mid -\omega\right)$, and $\Pi\left(RR \mid \omega\right)=\Pi \left(LL\mid -\omega\right)$ almost surely, then $\mu_{LR}=0$ and $|\mu_{LL}|=\mu_{RR}$ as desired. 

To establish symmetry, consider the signal structure  $\Pi': \Omega \rightarrow \Delta\left(\mathcal{Z}\right)$ whereby $\Pi'\left(LL \mid \omega\right)=\Pi\left(RR\mid -\omega\right)$, $\Pi'\left(LR \mid \omega\right)=\Pi\left(LR \mid -\omega\right)$, and $\Pi'\left(RR \mid \omega\right)=\Pi\left(LL \mid -\omega\right)$. By construction, the following must hold: $\pi'_{LL}=\pi_{RR}$, $\pi'_{LR}=\pi_{LR}$, $\pi'_{RR}=\pi_{LL}$, $\mu'_{LL}=-\mu_{RR}$, $\mu'_{LR}=-\mu_{LR}$, and $\mu'_{RR}=-\mu_{LL}$, which, together with strict obedience, implies that 
 \begin{align*}
v\left({\bf{a}}, -K\right)+\mu'_{LL}& =-v\left({\bf{a}}, K\right)-\mu_{RR}<0,\\
v\left({\bf{a}}, -K\right)+\mu'_{LR}& =-v\left({\bf{a}}, K\right)-\mu_{LR} < 0,\\
v\left({\bf{a}}, -K\right)+\mu'_{RR}& =-v\left({\bf{a}}, K\right)-\mu_{LL} > 0,\\
v\left({\bf{a}}, K\right)+\mu'_{LL}& =-v\left({\bf{a}}, -K\right)-\mu_{RR}<0,\\
v\left({\bf{a}}, K\right)+\mu'_{LR}& =-v\left({\bf{a}}, -K\right)-\mu_{LR} > 0,\\
\text{ and } v\left({\bf{a}}, K\right)+\mu'_{RR}& =-v\left({\bf{a}}, -K\right)-\mu_{LL} > 0. 
\end{align*}
As a result, \begin{align*}
V\left(\Pi'; {\bf{a}}, -K\right)
&=\pi'_{RR}\left[v\left({\bf{a}}, -K\right)+\mu'_{RR}\right]\\
&=\pi_{LL}\left[-v\left({\bf{a}}, K\right)-\mu_{LL}\right] 
=V\left(\Pi; {\bf{a}}, K\right) 
=\lambda I\left(\Pi\right)
=V\left(\Pi; {\bf{a}}, -K\right), 
\end{align*}
and $V\left(\Pi'; {\bf{a}}, K\right)=V\left(\Pi; {\bf{a}}, K\right)$ can be shown analogously. The remainder of the proof is the exact same as that of the previous case. \qed

\begin{spacing}{1}

 \end{spacing}
 
\end{document}